\providecommand{\email}[1]{\href{mailto:#1}{\nolinkurl{#1}\xspace}}
\newcolumntype{C}{>{\centering\arraybackslash}p{0.55cm}}
\newcolumntype{D}{>{\centering\arraybackslash}p{0.6cm}}
\tikzset{>={Latex[width=1.5mm,length=1.5mm]}}
\tikzstyle{vertex}=[circle, draw, inner sep=1pt, minimum size=18pt]
\def\final{1}  
\def\iflong{\iffalse}
\newcommand{\anote}[1]{{\color{blue}[{\small Alex: \bf #1}]\marginpar{\color{red}*}}}
\newcommand{\ynote}[1]{{\color{red}[{\small Young-San: \bf #1}]\marginpar{\color{red}*}}}
\newcommand{\todo}[1]{{\color{red}[{ TODO: \bf #1}]\marginpar{\color{red}*}}}
\newcommand{\anote}[1]{}
\newcommand{\ynote}[1]{}
\newcommand{\todo}[1]{}
\def\R{\mathbb{R}}
\newcommand{\cP}{\mathcal{P}}
\newcommand{\cT}{\mathcal{T}}
\newcommand{\cH}{\mathcal{H}}
\def\ep{\varepsilon}
\def\colorforcode{magenta}
\newenvironment{claimproof}{{\emph{Proof of claim:}}}{\hfill$\square$}
\newtheorem{theorem}{Theorem}[section]
\newtheorem{lemma}[theorem]{Lemma}
\newtheorem{claim}[theorem]{Claim}
\newtheorem{observation}{Observation}
\theoremstyle{definition}
\newtheorem{definition}[theorem]{Definition}
\newcommand*{\myrulefill}[3][]{%
  \makebox[#2]{#1%
    \leaders\hrule height \dimexpr.5ex+.2pt\relax depth \dimexpr -.5ex+.2pt\relax \hfill
    \enskip{#3}\enskip
    \leaders\hrule height \dimexpr.5ex+.2pt\relax depth \dimexpr -.5ex+.2pt\relax \hfill\kern0pt}
}
\begin{document}

\title{Improved and Parameterized Algorithms for Online Multi-level Aggregation: A Memory-based Approach}
\author{Alexander Turoczy\thanks{University of Oxford. 
E-mail: \email{alexander.turoczy@cs.ox.ac.uk}.} \and Young-San Lin\thanks{Melbourne Business School. 
 E-mail: \email{y.lin@mbs.edu}.}}
\date{\today}

\maketitle

\begin{abstract}
    We study the online multi-level aggregation problem with deadlines (MLAP-D) introduced by Bienkowski, B\"{o}hm, Byrka, Chrobak, D\"{u}rr, Folwarczn\'{y},  Je\.{z}, Sgall, Thang, and Vesel\'{y} (ESA 2016, OR 2020). In this problem, requests arrive over time at the vertices of a given vertex-weighted tree, and each request has a deadline that it must be served by. The cost of serving a request equals the cost of a path from the root to the vertex where the request resides. Instead of serving each request individually, requests can be \emph{aggregated} and served by transmitting a subtree from the root that spans the vertices on which the requests reside, to potentially be more cost-effective. The aggregated cost is the weight of the transmission subtree. The goal of MLAP-D is to find an aggregation solution that minimizes the total cost while serving all requests. MLAP-D generalizes some well-studied problems including the TCP acknowledgment problem and the joint replenishment problem, and arises in natural scenarios such as multi-casting, sensor networks, and supply chain management.

    We present improved and parameterized algorithms for MLAP-D. Our result is twofold. First, we present an $e(D+1)$-competitive algorithm where $D$ is the depth of the tree.
    Second, we present an $e(4H+2)$-competitive algorithm where $H$ is the \emph{caterpillar dimension} of the tree. Here, $H \le D$ and $H \le \log_2 |V|$ where $|V|$ is the number of vertices in the given tree. The caterpillar dimension remains constant for rich but simple classes of trees, such as line graphs ($H=1$), caterpillar graphs ($H=2$), and lobster graphs ($H=3$). To the best of our knowledge, this is the first online algorithm parameterized on a measure better than depth. The state-of-the-art online algorithms are $6(D+1)$-competitive by Buchbinder, Feldman, Naor, and Talmon (SODA 2017) and $O(\log |V|)$-competitive by Azar and Touitou (FOCS 2020). Our framework outperforms the state-of-the-art ratios when $H = o(\min\{D,\log_2 |V|\})$.
    Our \emph{memory-based} algorithms extend transmission subtrees with a cost comparable to transmission subtrees used to serve previous requests. Our simple framework directly applies to trees with \emph{any} structure and differs from the previous frameworks that reduce the problem to trees with specific structures.
\end{abstract}

\thispagestyle{empty}
\newpage

\setcounter{page}{1}

\section{Introduction}

Aggregation optimization problems have a wide range of applications including supply chain management \cite{arkin1989computational,crowston1973dynamic,kimms2012multi,wagner1958dynamic}, communication networks \cite{dooly1998tcp,karlin2003dynamic}, and multicasting \cite{badrinath2000gathercast,bortnikov2001schemes}. The common scenario of these applications is that individually resolving each request can be costly over time. Requests can be aggregated and served as batches under some aggregation constraints to lower service costs.

This work investigates the \emph{online multi-level aggregation problem with deadlines} (MLAP-D) introduced in \cite{bienkowski2016online},\footnote{See also \cite{bienkowski2020online} for the journal version.} which captures two characteristics of aggregation optimization problems. First, the service network is hierarchical and has a rooted tree structure. Second, the requests arrive in an online and dynamic fashion, and the decision-making is irrevocable in serving the requests.

\subsection{Online Multi-level Aggregation with Deadlines}

In MLAP-D, we are given an undirected tree $\cT=(V,E)$ rooted at $r \in V$. Each vertex $v \in V$ is associated with a cost defined by $c: V \to \R_{\ge 0}$. The tree $\cT$ and the vertex costs are given offline, in advance. A \emph{request} $\rho$ is associated with a vertex $v_\rho \in V(\cT)$ and a time interval $[a_\rho, d_\rho]$. Here, $v_\rho$ is the location of the request, and $a_\rho, d_\rho \in \R_{\ge 0}$ are the arrival time and the deadline of the request $\rho$, respectively. The time starts at zero, and the continuous time model is used. At time $a_\rho$, request $\rho$ arrives and the information of $\rho$ ($v_\rho$ and $d_\rho$) is revealed. The requests appear in an online fashion, so at time $t$, any information about requests $\rho'$ with $a_{\rho'} > t$ is not revealed. 
The output of an algorithm for MLAP-D is a collection $T_1, ..., T_\ell$ of subtrees of $\cT$ rooted at $r$ at different times $t_1, t_2, ... t_\ell$ where $t_1 < t_2 < ... < t_\ell$. A request $\rho$ is served if during the time interval $[a_\rho, d_\rho]$, a rooted subtree $T_i$ contains $v_\rho$. It is possible that more than one request is served at the same vertex and the same time. Every request $\rho$ must be served before the deadline $d_\rho$. 
It is noteworthy that we need not serve requests one at a time, as requests with overlapping time intervals can be served simultaneously as a batch, which can be more cost-effective. Let $\mathcal{S} = \{(T_i, t_i)\}_{i=1}^\ell$ denote a solution for MLAP-D. The cost of $\mathcal{S}$ is the total cost of the subtrees at different times. More specifically, let
\[c(\mathcal{S}) = \sum_{i =1}^\ell c(T_i) \text{ be the cost of $\mathcal{S}$ where } c(T_i) = \sum_{v \in V(T_i)}{c(v)} \text{ is the cost of $T_i$.}\]

Without loss of generality we assume that the arrival times and deadlines of all requests are distinct, as any instance can be converted to an instance of this form via an infinitesimal perturbation.

We note following \cite{buchbinder2017depth} that the vertex-weighted formulation captures the edge-weighted problem as follows. First, we can assume that in the edge-weighted instance, $r$ only has one adjacent vertex. If $r$ has multiple adjacent vertices, we can separate the problem into multiple instances in which $r$ has one adjacent vertex. Combining the solution of these separated instances gives a solution of the original instance without any extra cost. Next, we can move the cost of any edge into the endpoint vertex further away from $r$ and finally remove $r$ and set the vertex adjacent to $r$ as the new root.

Given a problem instance $\mathcal{I}$, and a deterministic online algorithm ALG, we use $\textsc{alg}(\mathcal{I})$ to denote the cost of the solution to $\mathcal{I}$ outputted by ALG. Using the classic online algorithm framework, we compare this cost to that of the offline optimal solution OPT on this instance, which we denote $\textsc{opt}(\mathcal{I})$. An algorithm is $k$-competitive if $\textsc{alg}(\mathcal{I}) \leq k\cdot \textsc{opt}(\mathcal{I})$ for all problem instances $\mathcal{I}$. For a fixed problem instance, we will simply use $\textsc{alg}$ and $\textsc{opt}$ to denote the costs. 

MLAP-D was presented as a unification of several aggregation problems on trees \cite{bienkowski2016online}. We defer the discussions about these special case problems in Section \ref{sec:rw}. The first non-trivial online algorithm for MLAP-D was $D^2 2^D$-competitive \cite{bienkowski2016online}, which was later improved to the state-of-the-art $6(D+1)$-competitive algorithm \cite{buchbinder2017depth}. Here, $D$ is the depth of the tree. There is a paper claiming a $D$-competitive algorithm \cite{mcmahan2021dcompetitivealgorithmmultilevelaggregation} which we cannot confirm the correctness of.
When $D=2$, the competitive ratio is 2 \cite{bienkowski2015approximation,bienkowski2014better}, which remains the highest known lower bound on the competitive ratio of MLAP-D for trees of fixed depth. The problem is NP-hard \cite{arkin1989computational} and APX-hard \cite{bienkowski2015approximation,nonner2009approximating}. Closing the competitive ratio gap between $2$ and $O(D)$ has been an intriguing open problem.


\subsection{Our Contributions}

In this work, we present improved competitiveness for MLAP-D. Our result is twofold. First, we present an improved $O(D)$-competitive algorithm. Second, we present an $O(H)$-competitive algorithm where $H$ is the \emph{caterpillar dimension} of the tree. Our competitive algorithms are based on a generic \emph{memory-based} framework which judiciously constructs an online solution according to historical information.

\subsubsection{Our $O(D)$-competitive algorithm}

In Section \ref{sec:eD}, we show the following theorem by our memory-based framework.

\begin{restatable}{theorem}{thmeD} \label{thm:eD}
    There exists a $(1 + \frac{1}{D})^{D} (D+1) \le e (D+1)$-competitive algorithm for MLAP-D.
\end{restatable}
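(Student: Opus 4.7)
The plan is to develop a memory-based online algorithm that is triggered only when some request is about to expire, and then transmits a carefully augmented subtree. At the trigger time $t$, if request $\rho$ reaches its deadline at vertex $v_\rho$, ALG must include the path $P_{r,v_\rho}$ from the root to $v_\rho$ in its transmission, since this is the unavoidable skeleton for serving $\rho$. Around this skeleton, ALG augments with additional subtrees chosen from the history of prior transmissions: whenever an extension into a neighboring subtree has cost at most a $1/D$ fraction of the cost already committed on the current root-to-leaf branch, the extension is included. This dual purpose---augmenting aggressively enough to absorb many pending requests, yet remaining frugal enough that overhead compounds by only $(1 + 1/D)$ per level---is what drives the targeted ratio.

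The analysis splits the cost of ALG into two contributions. First, I would bound the total skeleton cost $\sum_i c(P_{r,v_{\rho_i}})$ by $(D+1) \cdot \opt$ using a level-by-level charging argument in the spirit of \cite{buchbinder2017depth}: for each trigger $\rho_i$, the skeleton $P_{r,v_{\rho_i}}$ touches one vertex at each of the $D+1$ depths, and at each depth an OPT-transmitted vertex serving $\rho_i$ can be charged at most once per level. Unlike the state-of-the-art analysis, the memory rule is what will let us avoid the additional constant-factor overhead of prior approaches.

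Next, I would bound the augmentation overhead inductively on depth. The invariant to maintain is that the transmission cost at depth $d$ is at most $(1 + 1/D)^d$ times the skeleton cost at depth $d$. The base case $d=0$ is trivial at the root, and the memory rule directly guarantees that each additional depth contributes at most a multiplicative factor of $(1 + 1/D)$. At depth $D$ this yields $c(T_t) \le (1 + 1/D)^D \cdot c(P_{r,v_\rho})$ for each trigger, and summing over triggers combined with the skeleton bound gives $\textsc{alg} \le (1 + 1/D)^D (D+1) \cdot \opt \le e(D+1) \cdot \opt$.

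The main obstacle is designing the memory rule precisely enough that both halves of the analysis are valid simultaneously. The augmentations must be generous enough that ALG does not repeatedly transmit the same subtree for closely-timed requests---otherwise the level-by-level charging to OPT would double-count---yet frugal enough that the per-level $(1 + 1/D)$ invariant survives. Formulating the correct notion of ``comparable memory'' to augment against, and proving that every trigger preserves both the charging scheme and the amortized overhead, is where I expect the technical heart of the argument to lie.
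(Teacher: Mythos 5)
Your proposal correctly guesses the shape of the final bound and correctly identifies the central difficulty (preventing double-charging against OPT via memory), but the quantitative design of the algorithm is backwards, and the decomposition of the two factors does not match anything that can be made to work. You propose augmenting the skeleton $P_{r,v_\rho}$ only with extensions costing at most a $1/D$ fraction of the committed branch cost, so that each transmission satisfies $c(T_t) \le (1+1/D)^D\, c(P_{r,v_\rho})$. In the paper's algorithm the aggregation goes the other way: each expansion vertex $v$ receives an investment budget of $\theta\, c(v)$ with $\theta = D$, so the augmentation is roughly $D$ \emph{times} the skeleton cost, not a $1/D$ fraction of it. This generosity is not optional --- it is precisely what guarantees that between two occasions on which the algorithm re-transmits an expensive vertex, OPT must itself have incurred a comparable cost; with only $1/D$-fraction augmentation, an adversary can force the algorithm to re-transmit a costly near-root vertex many times against a single OPT transmission, and your claimed bound $\sum_i c(P_{r,v_{\rho_i}}) \le (D+1)\cdot\textsc{opt}$ fails. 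Consequently the roles of the two factors are inverted relative to the paper: there, $(1+\theta) = D+1$ is the ratio of total investment cost to total expansion cost, and $(1+1/\theta)^D = (1+1/D)^D$ is the loss from a \emph{global, amortized} redistribution of costs from ``anticipated'' expansion vertices (whose timers have elapsed) down to ``unanticipated'' ones; it is only the unanticipated vertices that are charged, one-to-one, against OPT. No per-transmission invariant of the form $c(T_t) \le (1+1/D)^D\, c(P_{r,v_\rho})$ holds for the actual algorithm, nor is one needed.

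The part you defer to ``the technical heart of the argument'' --- formulating the memory rule so that the charging to OPT is injective --- is indeed the heart, and the proposal contains no mechanism for it. The paper's mechanism is the timer $next_v$ (set to the earliest deadline of a still-unserved request below $v$ after $v$ exhausts its budget) together with the partition of expansion vertices into anticipated ($next_v \le t_i$) and unanticipated ones; each unanticipated vertex is then charged either to OPT's service of the critical request or to OPT's service of a request that the vertex's investing ancestor was saving for, and uniqueness of the charges is proved by tracking how $next_v$ evolves between transmissions. Without this (or an equivalent) bookkeeping structure, both halves of your analysis remain unproved, so the proposal as written has a genuine gap rather than being an alternative correct route.
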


We emphasize that our algorithm can be directly applied to \emph{any} tree structure. This makes the analysis direct, and differs from some previous work that relies on reductions to trees with specific structures, such as $L$-decreasing trees \cite{bienkowski2021new, buchbinder2017depth}.

\subsubsection{$O(H)$-competitiveness}

To potentially improve the competitive ratio for MLAP-D, we consider a preferable measure over depth, namely, the \emph{caterpillar dimension} introduced in \cite{matouvsek1999embedding}. The notion of \emph{heavy path decomposition} was introduced in \cite{sleator1983data}.
A heavy path decomposition of $\cT$ is a vertex-disjoint partition $\cP$ such that (1) each element in $\cP$ is a set of vertices that contains exactly one leaf vertex and forms a path in $\cT$ and (2) any two vertices in an element in $\cP$ must have an ancestor-descendant relationship. 
Recall that $r$ is the root vertex of $\cT$.
Consider a leaf vertex $u$ and the $r$-$u$ path.
Let $d(\cP,u)$ denote the number of elements in $\cP$ intersecting the $r$-$u$ path.
The \emph{dimension} of the decomposition $\cP$ is $dim(\cP):=\max_{u \in L(\cT)}\{d(\cP,u)\}$ where $L(\cT)$ denotes the set of leaf vertices in $\cT$. The caterpillar dimension $H$ of $\cT$ is the minimum dimension among all heavy path decompositions. See Figure \ref{fig:hpd} for an example.

\begin{figure}[h]
    \centering
    \begin{tikzpicture}
        \node[vertex] (a) at (0,5) {$r$};
        \node[vertex] (1) at (-4,4) {$v_a$};
        \node[vertex] (2) at (0,4) {$v_b$};
        \node[vertex] (3) at (4,4) {$v_c$};
        \node[vertex] (11) at (-6,3) {$v_d$};
        \node[vertex] (12) at (-4,3) {$v_e$};
        \node[vertex] (13) at (-2,3) {$v_f$};
        \node[vertex] (21) at (0,3) {$v_g$};
        \node[vertex] (111) at (-6,2) {$v_h$};
        \node[vertex] (121) at (-4.5,2) {$v_i$};
        \node[vertex] (122) at (-3.5,2) {$v_j$};
        \path (a) edge (1)
        (a) edge [dashed] (2)
        (a) edge [dashed] (3)
        (1) edge [dashed] (11)
        (1) edge (12)
        (1) edge [dashed] (13)
        (2) edge (21)
        (11) edge (111)
        (12) edge (121)
        (12) edge [dashed] (122)
        ;
    \end{tikzpicture}
    \caption{$\cT$ consists of the solid and dashed edges. Components connected by the solid edges define the elements of $\cP$. The heavy path decomposition $\cP$ of $\cT$ has six elements: $\{v_d,v_h\}$, $\{r,v_a,v_e,v_i\}$, $\{v_j\}$, $\{v_f\}$, $\{v_b,v_g\}$, and $\{v_c\}$. When the leaf vertex $u=v_h, v_j, v_f, v_g,$ or $v_c$, the $r$-$u$ path intersects two elements in $\cP$ which maximizes $d(\cP,u)$, so the dimension of $\cP$ is two. A decomposition with a lower dimension does not exist, so the caterpillar dimension of $\cT$ is two.}
    \label{fig:hpd}
\end{figure}
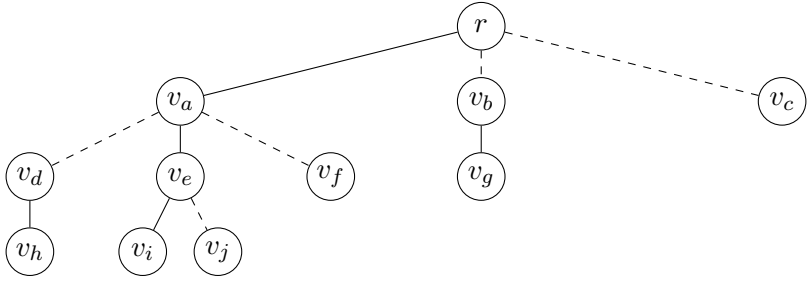

The caterpillar dimension remains constant for certain classes of trees, such as line graphs ($H=1$), caterpillar graphs ($H=2$), and lobster graphs ($H=3$). The caterpillar dimension is at most $D$ and $\log_2 |V|$, and can be computed in polynomial time, as well as the corresponding heavy path decomposition \cite{matouvsek1999embedding}.

In Section \ref{sec:4eH}, we show the following theorem using our memory-based framework.

\begin{restatable}{theorem}{thmeH} \label{thm:4eH}
    There exists a $(4H+2)(1+\frac{1}{2H+1})^{H+1}(1+\frac{1}{2H})^{H} \le e(4H+2)$-competitive algorithm for MLAP-D.
\end{restatable}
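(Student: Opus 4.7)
The plan is to generalize the memory-based framework from Theorem~\ref{thm:eD} by replacing depth with heavy-path depth. I would fix a minimum-dimension heavy path decomposition $\cP$ of $\cT$ with $\dim(\cP) = H$, and for each vertex $v$ define its level $h(v) \in \{1, \ldots, H\}$ to be the number of elements of $\cP$ that intersect the $r$-to-$v$ path. The key observation is that each element of $\cP$ is a non-branching path, so once an algorithmic transmission enters a heavy path $P$ at some transition vertex, it can be extended along $P$ at a cost equal only to the weight of the path segment added, with no branching overhead. This suggests decomposing any transmission's cost into two types of contributions: \emph{transition} contributions, corresponding to moving from one heavy path onto a child heavy path at a specific branching vertex, and \emph{within-path} contributions, corresponding to extending along the current heavy path toward a descendant.

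Next, I would define the algorithm analogously to the depth-$D$ case: when a request becomes urgent at $v_\rho$, construct an online transmission subtree from $r$ toward $v_\rho$, and at each step invoke a memory test against recent transmissions that covered the relevant region to decide whether to extend the subtree aggressively or only minimally. The two different types of extension call for two different memory thresholds—one governing transitions between heavy paths and one governing extensions within a heavy path. This is what produces the two separate factors $(1+\tfrac{1}{2H+1})^{H+1}$ and $(1+\tfrac{1}{2H})^{H}$ in the stated competitive ratio, and it is why the parameterization is in terms of $H$ rather than $D$: the algorithm only incurs the memory blow-up at transitions between heavy paths, not at every depth-one step.

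The analysis would then follow the template of Theorem~\ref{thm:eD}: charge each algorithmic transmission's cost against an OPT transmission within a recent window. Classifying transmissions by the highest heavy-path level at which they extend, and summing over $H+1$ within-path classes and $H$ transition classes (or similar), would yield the base factor of $4H+2 = 2(2H+1)$. The main obstacle I expect is setting up the amortization so that the two types of extensions are accounted for independently: each transition extension should contribute a geometric factor with ratio $1 + \tfrac{1}{2H+1}$ over $H+1$ possible transitions, while each within-path extension should contribute a geometric factor with ratio $1 + \tfrac{1}{2H}$ over $H$ heavy paths, and the bookkeeping needs to show that OPT can be charged the corresponding amount at each step without double-counting. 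Once the two coupled geometric-series arguments are correctly threaded through the charging argument, the stated bound $(4H+2)(1+\tfrac{1}{2H+1})^{H+1}(1+\tfrac{1}{2H})^{H}$ follows, and the estimate $\le e(4H+2)$ is immediate from $(1+1/n)^n \le e$ applied to each factor.
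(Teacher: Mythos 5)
Your proposal correctly identifies the high-level strategy the paper uses---parameterize by the heavy path decomposition $\cP$, give ``within-path'' and ``between-path'' extensions two different treatments governed by two parameters, and reuse the Theorem~\ref{thm:eD} template of charging against \textsc{opt} in a recent window---but as written it is a plan with several genuine gaps rather than a proof. First, the algorithmic mechanism is left unspecified: ``invoke a memory test against recent transmissions'' does not pin down what is remembered or how it triggers aggregation, whereas the entire analysis hinges on the concrete mechanism (in the paper: each expansion vertex gets an investment budget, with the deepest expansion vertex on its heavy path receiving the larger budget $\theta_1\cdot c(p(v))$ covering the whole path segment above it, plus the $next_v$ timers that classify expansion vertices as anticipated or unanticipated). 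Without that, the central claims---that anticipated costs can be redistributed onto unanticipated ones losing only the factor $(1+\frac{1}{\theta_1})^{H+1}(1+\frac{1}{\theta_2})^{H}$, and that unanticipated costs inject into \textsc{opt} without double-charging---cannot be verified. Second, your account of where $4H+2$ comes from is not right: it is not a count of ``$H+1$ within-path classes and $H$ transition classes,'' but rather $1+\theta_1+\theta_2$ (the ratio of total cost, expansion plus investment, to expansion cost), with $\theta_1=2H+1$ and $\theta_2=2H$ emerging only after jointly optimizing the full expression $(1+\theta_1+\theta_2)(1+\frac{1}{\theta_1})^{H+1}(1+\frac{1}{\theta_2})^{H}$.

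Third, the final estimate is \emph{not} ``immediate from $(1+1/n)^n\le e$ applied to each factor.'' That gives $(1+\frac{1}{2H+1})^{H+1}\le e^{(H+1)/(2H+1)}$ and $(1+\frac{1}{2H})^{H}\le e^{1/2}$, whose product is $e^{(4H+3)/(4H+2)}>e$. The paper needs a separate argument (checking $H=1$ directly, then showing the product is increasing in $H$ and tends to $e$, via a Maclaurin-series estimate on $\ln(1+1/H)-\frac{2}{2H+1}$) to conclude the bound by $e(4H+2)$. You would need to supply all three of these missing pieces before the proposal constitutes a proof.
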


To the best of our knowledge, this is the first online algorithm parameterized on a measure better than depth. Interestingly, our framework outperforms the state-of-the-art ratios of $O(D)$ \cite{buchbinder2007online} and $O(\log |V|)$ \cite{azar2020beyond} if $H = o(\min\{D,\log_2 |V|\})$. For infinite line graphs ($H=1$), the best competitive ratio is known to be exactly 4 \cite{bienkowski2016online}.

\subsection{High-level Technical Overview}

\subsubsection{The $e(D+1)$-competitive algorithm}

In both the offline and online settings, the fundamental question when designing an algorithm for MLAP-D is to decide how many extra requests to aggregate into the service subtree, beyond the bare minimum required for a feasible solution. The biggest challenge one faces when developing a competitive algorithm is finding a way to lower-bound the performance of the optimum solution for a given instance. Most online algorithms for MLAP-D \cite{bienkowski2021new, buchbinder2017depth} have been \emph{memoryless}, in the sense that the algorithm's decision making is only a function of the current unserved requests and locations, and the relative order of their deadlines. This discards information which can potentially be used to justify further aggregation of the service subtree at a given time $t_i$. Our main algorithmic contribution is to incorporate memory naturally to inform the algorithm's decision-making process. From a mathematical standpoint, memory is used in the amortized analysis to ensure that we can attribute the cost of our algorithm $\textsc{alg}$ to the cost of the optimal solution $\textsc{opt}$ without double-counting, which is crucial for establishing a competitive ratio.

Leaving out some technical details, we describe the motivations behind our algorithms as follows. When an active request $\rho$ reaches its deadline at time $t_i$, we are forced to transmit a service subtree which contains $v_{\rho}$, and so our algorithm begins by initializing the service subtree as the path $r$-$v_{\rho}$. Each $v$ in the $r$-$v_{\rho}$ path adds a cost of $c(v)$ to the algorithm's solution, and therefore we want to delay making another service which includes $v$ until we are sure that $\textsc{opt}$ itself must have outputted services with costs comparable to $c(v)$. Therefore, it is natural that each node receives a budget proportional to $c(v)$ which is allocated towards adding extra services to the service subtree, and we call the nodes that the budget is spent on \emph{investments}. The investment procedure is handled by a recursive function, where deeper nodes are the first to make their investments. The difficulty with this approach is that at different times, several different vertices may invest in the same node, which could lead to over-counting the costs associated with $\textsc{opt}$.

We use memory to tackle this issue. Before vertices on the path from $r$-$v_\rho$ start spending their investment budget, they may also add their \emph{previous} investments to the service subtree and recursively expand the transmission subtree, in a stage of the algorithm we call the \emph{expansion stage}. This prevents nodes from reinvesting onto previous investments. The expansion stage is handled by a separate recursive process which occurs before the investment stage. It is then for these \emph{expansion} vertices that we make investments. We decide whether to add those previous investments as follows. After a vertex has made its investment, we set a timer in memory to record when all of the investments we have made are anticipated to reach their deadline. If a vertex $v$ at time $t_i$ is an expansion vertex again after the timer has elapsed, then we can make inferences about comparable costs $\textsc{opt}$ must have experienced based on our investments. In this case, we call $v$ an \emph{anticipated} vertex at time $t_i$. Otherwise, we can infer that there must have been a request that had appeared deeper than $v$ and expired before our timer elapsed, which we can also associate with a cost by $\textsc{opt}$. We call such vertices \emph{unanticipated} at time $t_i$. 

We briefly provide a broad overview of the analysis. In Section \ref{sec:ubalg}, we express $\textsc{alg}$ in terms of only the unanticipated expansion nodes. We use the standard amortized analysis techniques, in particular, an accounting argument to redistribute the costs in a top-down fashion based on proof by induction. Then in Section \ref{subsec:charge}, we show that the unanticipated expansion nodes serve as a way to lower bound $\textsc{opt}$. This requires using a charging scheme that defines a one-to-one mapping between the cost accounted for the unanticipated expansion vertices and $\textsc{opt}$. As a consequence of these two results, we can bound the competitive ratio of the algorithm. When we first define the algorithm, we leave a parameter which controls the extent of aggregation unspecified. Selecting the optimal parameter results in an $e(D+1)$-competitive algorithm for MLAP-D.

\subsubsection{The $e(4H+2)$-competitive algorithm}

Our $e(4H+2)$-competitive algorithm utilizes the framework developed by the $e(D+1)$-competitive algorithm. The main difference is the use of the heavy path decomposition structure while assigning memory-based investment budgets to vertices. Instead of assigning the budget for every vertex $v$ to be proportional to its cost $c(v)$, we allow one vertex per path in $\mathcal P$ to have a potentially much larger budget.

Following the same strategy as before, to prove the competitiveness of our algorithm, we show that (1) the cost $\textsc{alg}$ is comparable with the cost of the unanticipated expansion vertices and (2) we can relate the costs of unanticipated expansion vertices to $\textsc{opt}$ directly via a charging scheme. The charging scheme is identical to our $O(D)$-competitive algorithm. The main change to the analysis is to judiciously modify the redistribution accounting argument for the cost based on a proof by induction to show (1). Ultimately, by selecting the optimal parameters for the two different investment budgets, we obtain an $e(4H+2)$-competitive algorithm for MLAP-D.

\subsection{Related Work} \label{sec:rw}

\subsubsection{Online multi-level aggregation with waiting cost}

The most related variant of MLAP-D is the setting that each request has a \emph{waiting cost} instead of a strict deadline. The objective is to minimize the sum of the total service and waiting costs. This problem is termed the online multi-level aggregation problem (MLAP). For general MLAP, an algorithm with competitive ratio $D^4 2^D$ was presented in \cite{bienkowski2016online} and later improved to $O(D^2)$ \cite{azar2019general}. When $D=1$, MLAP is also called the \emph{TCP acknowledgment problem} (TCP-AP). The optimal competitive ratio for TCP-AP is 2 for deterministic algorithms \cite{dooly1998tcp} and $e/(e-1) \approx 1.582$ for randomized algorithms \cite{buchbinder2007online,karlin2003dynamic}. Interestingly, TCP-AP is equivalent to the \emph{lot sizing problem} \cite{wagner1958dynamic}, which has been extensively studied in the operations research community. When $D=2$, MLAP is also called the \emph{joint replenishment problem} (JRP). The state-of-the-art competitive ratio for JRP is 3 \cite{buchbinder2013online} and the current best lower bound is 2.754 \cite{bienkowski2014better} which improves an earlier bound of 2.64 \cite{buchbinder2013online}. When $H=1$, i.e., the input tree is a line, the upper bound for the competitive ratio is 5 \cite{bienkowski2013online} which improves an earlier bound of 8 \cite{brito2012competitive}, and the lower bound is 4 \cite{bienkowski2016online} which improves an earlier bound of $(5+\sqrt{5})/2 \approx 3.618$ \cite{bienkowski2013online}. \cite{azar2020beyond} introduced a unifying framework for problems with delay and deadlines admitting a competitive ratio of $O(\log |V|)$ for MLAP. A slight variant of MLAP was presented in \cite{khanna2002control} where the competitive ratio is logarithmic in the cost of the aggregation tree. We refer the reader to \cite{bienkowski2021new} for a concise summary for MLAP and MLAP-D.

\subsubsection{Offline multi-level aggregation} In offline MLAP-D, the information of all requests, including the location, the arrival time, and the deadline (also the waiting cost for offline MLAP), is given in advance. The offline problem currently has significantly better approximation results than the online problem. For offline MLAP-D, the current best approximation ratio is 2 \cite{becchetti2009latency,bienkowski2016online}. For offline MLAP, the current best approximation ratio is $2+\ep$ \cite{pc2013} by adapting an algorithm in \cite{levi2006primal} for the
multi-stage assembly problem. For offline TCP-AP, there is an optimal dynamic programming algorithm \cite{aggarwal1993improved}. Offline JRP is NP-hard \cite{arkin1989computational} and APX-hard \cite{bienkowski2015approximation,nonner2009approximating} and the current best approximation ratio is 1.791 \cite{bienkowski2014better} improving the previous ratios of 1.8 \cite{levi2008constant} and 2 \cite{levi2006primal}.

\subsubsection{Approximation algorithms parameterized on caterpillar dimension}

Quite surprisingly, the only algorithm parameterized on the caterpillar dimension that we are aware of is the advice complexity of the online $k$-server problem on trees \cite{renault2015online}. More specifically, to obtain optimal solutions for online $k$-server on trees, it is sufficient to use $2 + 2 \lceil \log_2 (H+1) \rceil$ bits of advice per request. We believe that the caterpillar dimension is a natural measure to evaluate the performance of approximation and online algorithms for challenging problems on trees, especially for problems like MLAP-D where the state-of-the-art competitive ratio is linear in the tree depth.

\section{The $e(D+1)$-competitive Algorithm} \label{sec:eD}

In this section, we prove Theorem \ref{thm:eD}.

We describe the components of Algorithm \ref{alg:eDalg}. Once a request $\rho$ reaches its deadline, the procedure $\textsc{OnDeadline}$ is called in order to construct a service subtree to serve $\rho$. During this procedure, vertices are added to a service subtree $T$ in two separate stages, which we call the \emph{expansion stage} and the \emph{investment stage}. Vertices added during the procedure \textsc{ExpansionStage} are added to the set $V^E$ and those added during \textsc{InvestmentStage} are added to $V^I$, and the final transmission tree is taken as the subtree induced on the disjoint union $V^E \cup V^I$. For each node $v$, the algorithm maintains three pieces of information which are used in both stages: $\ell_v \geq 0$, $next_v \geq 0$, and $I_v \subseteq V(\mathcal{T}_v) \setminus \{v\}$ where $\cT_v$ is the subtree of $\cT$ rooted at $v$ which includes all the vertices and edges under $v$. These variables are initialized in the \textsc{Initialization} procedure at the beginning of the problem instance as $\ell_v = c(v)$, $next_v = \infty$ and $I_v = \varnothing$. We later describe how this stored information is used.

\begin{algorithm}[h!]
\caption{$e(D+1)$-competitive algorithm for MLAP-D}\label{alg:eDalg}
\begin{algorithmic}[1]

\Procedure{Initialization}{} \label{proc:init}
\For{$v \in V(\mathcal T)$}
    \State $\ell_v \gets c(v)$
    \State $next_v \gets \infty$
    \State $I_v \gets \varnothing$ \Comment{Stores set of nodes which $v$ invested in most recently}
\EndFor
\State
\EndProcedure
\Procedure{OnDeadline}{} \Comment{Unserved request $\rho$ reaches deadline at time $t = d_{\rho}$}
    \State Initialize $V^E$ as the set of vertices on path from $r$ to $v_\rho$ \Comment{$V^E$ is a global variable}
    \State $V^E \gets $\textsc{ExpansionStage}($V^E$, $r$)
    \State Initialize $V^I \gets \varnothing$ \Comment{$V^I$ is a global variable}
    \State $V^I \gets $\textsc{InvestmentStage}($V^I$, $V^E$, $r$)
    \State \Return $T \gets \mathcal{T}[V^E \cup V^I]$ \Comment{Transmit $T$}
\EndProcedure
\State
\Procedure{ExpansionStage}{$V^E$, $v$} \
    \If{$t \geq next_v$} \Comment{$t$ denotes current time}
    \For{$v' \in I_v$}
        \State Add all vertices on the path from $v$ to $v'$ to $V^E$
    \EndFor
    \EndIf
    
    \For{child nodes $v'$ of $v$ in $V^E$}
        \State $V^E \gets $\textsc{ExpansionStage}($V^E$, $v'$)
    \EndFor
    \State \Return $V^E$
\EndProcedure
\State
\Procedure{InvestmentStage}{$V^I$, $V^E$, $v$} \

     \For{child nodes $v'$ of $v$ in $V^E$} \Comment{Invest for deeper nodes first}
            \State $V^I \gets $\textsc{InvestmentStage}($V^I$, $V^E$, $v'$)
    \EndFor

    \State $I_v \gets \varnothing$   \Comment{Begin investing into new set of nodes}
        
    
    \State $b_v \gets \theta \cdot c(v)$ \Comment{$\theta > 0$ is fixed parameter, for best competitive ratio set $\theta = D$}

    \While{$b_v > 0$ and there are pending requests on vertices in $V(\cT_v) \setminus (V^E \cup V^I)$} \label{line:investmentwhileloop}
        \State Let $\rho'$ be the pending request in $V(\cT_v) \setminus (V^E \cup V^I)$ with the earliest deadline
        \State Let $v'$ be first vertex on path from $v$ to $v_{\rho'}$ not in $V^E \cup V^I$ \Comment{$v' \in V(\cT_v)$} \label{line:ancestorsfirst}
        \State $V^I \gets \textsc{Invest}(v, v', V^I)$
    \EndWhile

    \State $next_v \gets \min \big(\{d_{\rho} : v_\rho \in V(\cT_v) \setminus (V^E \cup V^I) \} \cup  \{\infty \}\big)$ \label{alg:setnextinf}
    \State \Return $V^I$
\EndProcedure
\State
\Procedure{Invest}{$v$, $v'$, $V^I$} \Comment{Invest in $v'$ using $b_v$ and update $V^I$ if $v'$ is served}
\State $\Delta b_v \gets \min\{b_v, \ell_{v'} \}$ \Comment{Put as much of budget as possible towards buying $v'$}
\State $b_v \gets b_v - \Delta b_v$
\State $\ell_{v'} \gets \ell_{v'} - \Delta b_v$
\State $I_v \gets I_v \cup \{v'\}$ \Comment{Record that $v$ has invested in $v'$}
\If{$\ell_{v'} = 0$} \Comment{If $v'$ is served}
\State $V^I \gets V^I \cup \{ v'\}$
\State $\ell_{v'} \gets c(v')$
\EndIf
\State \Return $V^I$
\EndProcedure

\end{algorithmic}
\end{algorithm}

We first describe the expansion stage. When an unserved request $\rho$ on $v_{\rho}$ reaches its deadline at time $t$, we are forced to output a service subtree containing $v_{\rho}$. In this case, we say that the request $\rho$ is \emph{critical}. We start by adding the vertices on the path from $r$ to $v_{\rho}$ to $V^E$. From here, the algorithm recursively expands the transmission subtree by adding the vertices which are in $I_v$ to $V^E$ if $next_v \leq t$, starting with $v=r$ and then repeating this on its children in $V^E$. Observe for the first service occurring at $t_1$ that none of the sets $I_v$ are added to $V^E$, as $next_v = \infty$ initially.

We now describe the investment stage. Figuratively speaking, each vertex $v'$ has a price $c(v')$ which the algorithm must invest into $v'$ before $v'$ can be added to $V^I$. We keep track of the remaining cost for a vertex $v'$ using the variable $\ell_{v'}$; $\ell_{v'}$ starts at $c(v')$, is decreased with every investment, and is then reset to $c(v')$ after $v'$ has been added to $V^I$. For each node $v \in V^E$ (starting with the deepest nodes and working upwards), we allocate a budget proportional to $c(v)$ for the algorithm to put towards investing into vertices, and once $\ell_{v'}=0$, we add $v'$ to $V^I$. The bookkeeping for the spending of budget is done within the \textsc{Invest} procedure. The allocated budget for $v \in V^E$ is $\theta \cdot c(v)$, where for our competitive analysis, we find that the best competitive guarantee is obtained when $\theta = D$. An important property of the algorithm is that the budget of $v \in V^E$ is only spent on vertices deeper than $v$, and it invests towards the requests in non-decreasing order of deadline. The algorithm stores the set of vertices which $v \in V^E$ invested in into the variable $I_v$. Once a given node $v \in V^E$ has spent its budget, it sets $next_v$ as the earliest deadline among the remaining unserved requests beneath $v$, or to $\infty$ if there are no such active requests. Therefore, if the next time that $v$ is added during the expansion stage occurs at time $t'$, then if $t' \geq next_v$, the algorithm will also add $I_v$ during the expansion stage. 

To illustrate the execution of Algorithm \ref{alg:eDalg}, we provide an example in Appendix \ref{sec:app-ex}.

The following terms and notations will be used throughout the analysis.

\begin{definition}[Expansion vertices, Investment vertices]

Consider the transmission subtree $T_i = \mathcal{T}[V^E \cup V^I]$ transmitted by Algorithm \ref{alg:eDalg} at time $t_i$. A vertex $v$ is an \emph{expansion vertex at time} $t_i$ if $v \in V^E$, and an \emph{investment vertex at time} $t_i$ if $v \in V^I$. The set of expansion vertices at time $t_i$ will be denoted $V^E_i$, and the set of investment vertices at time $t_i$ will be denoted $V^I_i$.
    
\end{definition}

As the variables $I_v$ and $next_v$ change value within a single timestep, we make the following definition to refer to the values at specific times unambiguously.

\begin{definition}[$I_{v, i}, I_{v, i}', next_{v, i}, next_{v, i}'$] \label{def:next}
    
    If Algorithm \ref{alg:eDalg} transmits a service subtree at time $t_i$, we define $I_{v, i}$ (respectively, $I_{v, i}'$) to be the value of $I_v$ before (respectively, after) the function call \textsc{OnDeadline} at $t_i$. $next_{v, i}$ and $next_{v, i}'$ are defined analogously.
\end{definition}


Each set of expansion vertices $V^E_i$ is partitioned into two disjoint sets.

\begin{definition}[Anticipated vertices, unanticipated vertices]

We call a vertex $v \in V^E_i$ \emph{anticipated at time} $t_i$ if $next_{v, i} \leq t_i$. We call a vertex $v \in V^E_i$ \emph{unanticipated at time} $t_i$ if it is an expansion vertex at time $t_i$ and $next_{v, i} > t_i$. We denote the set of vertices anticipated at time $t_i$ as $A_i$, and those which are unanticipated as $\bar{A}_i$, so that $V_i^E = A_i \cup \bar{A}_i$ is a disjoint union.
\end{definition}

Our analysis will be split into two parts. First, we express $\textsc{alg}$ solely in terms of the nodes that are unanticipated. The following lemma is proved in Section \ref{sec:ubalg}.

\begin{restatable}{lemma}{lemam} \label{lem:amortized}
    \[\textsc{alg} \leq \left(1+\frac{1}{\theta}\right)^{D} (1+\theta)\sum_{i=1}^{\ell} c(\bar{A}_i).\]
\end{restatable}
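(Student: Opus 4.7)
My plan is to prove the bound in two steps. First, I will argue that $\textsc{alg} \leq (1+\theta) \sum_{i=1}^\ell c(V^E_i)$. Since $V^E_i$ and $V^I_i$ are disjoint (the $\textsc{Invest}$ procedure only ever chooses $v'$ not already in $V^E \cup V^I$), we have $\sum_i c(T_i) = \sum_i c(V^E_i) + \sum_i c(V^I_i)$, so it suffices to bound $\sum_i c(V^I_i) \leq \theta \sum_i c(V^E_i)$. This follows because each $v' \in V^I_i$ requires exactly $c(v')$ units of investment budget to bring $\ell_{v'}$ from $c(v')$ down to $0$ since its previous reset, and the total investment budget ever allocated is $\theta \sum_i c(V^E_i)$. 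Second, I will show
\[ \sum_{i=1}^\ell c(V^E_i) \leq \left(1 + \tfrac{1}{\theta}\right)^{D} \sum_{i=1}^\ell c(\bar{A}_i), \]
which combined with the first bound yields the lemma.

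To prove the second bound I will use induction over subtrees. Letting $D_v$ denote the depth of $\cT_v$, I will show by induction on $D_v$ the stronger claim that for every $v \in V(\cT)$,
\[ \sum_{i=1}^\ell c(V^E_i \cap \cT_v) \leq \left(1+\tfrac{1}{\theta}\right)^{D_v} \sum_{i=1}^\ell c(\bar{A}_i \cap \cT_v), \]
and then instantiate at $v = r$. The base case ($v$ a leaf) is immediate: whenever $v \in V^E_i$, the set $V(\cT_v) \setminus (V^E \cup V^I)$ is empty at the end of the investment stage, so $next_v$ is set to $\infty$ and $v$ must be unanticipated at $t_i$.

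The crux of the inductive step is the following key claim: if $v \in A_i$, then $c(v) \leq \tfrac{1}{\theta} \cdot c(V^E_i \cap \cT_v \setminus \{v\})$. To prove it, $v \in A_i$ forces $next_{v,i} \leq t_i < \infty$, so there must be a most recent prior time $t_j$ at which $v$ was an expansion vertex (since $next_v$ is initialized to $\infty$ and only updated when $v \in V^E$). Between $t_j$ and $t_i$ both $I_v$ and $next_v$ are untouched, so $I_{v,i} = I_{v,j}'$ and $next_{v,j}' = next_{v,i} < \infty$. The finiteness of $next_{v,j}'$ means the while loop in \textsc{InvestmentStage} at $t_j$ terminated with $b_v = 0$ (otherwise the $\min$ defining $next_v$ would have been over $\{\infty\}$ alone and set $next_v = \infty$). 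Thus $v$ spent its entire budget $\theta c(v)$ on investments into $I_{v,j}'$, and since each $v' \in I_{v,j}'$ receives at most $c(v')$ in this stage, $c(I_{v,j}') \geq \theta c(v)$. Finally, $v \in A_i$ causes \textsc{ExpansionStage} at $t_i$ to add the paths from $v$ to every $v' \in I_{v,i}$ into $V^E_i$, so $I_{v,i} \subseteq V^E_i \cap \cT_v \setminus \{v\}$, completing the claim.

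With the key claim, the induction proceeds by splitting $\sum_i c(V^E_i \cap \cT_v) = \sum_i c(V^E_i \cap \{v\}) + \sum_{u} \sum_i c(V^E_i \cap \cT_u)$, where $u$ ranges over the children of $v$. The first summand is further split according to whether $v \in A_i$ or $v \in \bar{A}_i$: unanticipated instances contribute $\sum_i c(\bar{A}_i \cap \{v\})$ directly, while the key claim bounds the anticipated instances' contribution by $\tfrac{1}{\theta}\sum_{u} \sum_i c(V^E_i \cap \cT_u)$. Applying the inductive hypothesis to each child subtree (whose depth is at most $D_v - 1$) produces the extra $(1+\tfrac{1}{\theta})$ factor at level $v$ and closes the induction. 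I expect the main obstacle to be establishing the key claim rigorously, specifically the invariants that $I_v$ and $next_v$ are unchanged between successive expansion events at $v$, and that budget exhaustion at $t_j$ forces $c(I_{v,j}') \geq \theta c(v)$.
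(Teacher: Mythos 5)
Your proposal is correct and follows essentially the same route as the paper: the same first reduction $\textsc{alg} \le (1+\theta)\sum_i c(V_i^E)$, and the same key fact that an anticipated vertex $v$ must have exhausted its budget at its most recent prior expansion time, so $c(I_{v,i}) \ge \theta\, c(v)$ with $I_{v,i} \subseteq V_i^E \cap \cT_v \setminus \{v\}$ (the paper's Lemmas \ref{lemma:expectedafterexpansion} and \ref{lem:genericusedbudgets}). The only difference is bookkeeping: the paper runs a top-down coin-redistribution within each fixed $t_i$ with an invariant indexed by depth from the root, whereas you run a bottom-up induction on subtree height aggregated over all $i$; both produce the same $(1+1/\theta)^{D}$ factor.
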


In Section \ref{subsec:charge}, we use a charging argument to show that the costs of unanticipated expansion vertices across time can be charged directly to OPT without double-charging. We prove the following.
\begin{lemma} \label{lem:optlowerbound}
    \[\sum_{i=1}^{\ell}  c(\bar{A}_i) \leq \textsc{opt}.\]
\end{lemma}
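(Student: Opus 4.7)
The plan is to build, for each pair $(v,i)$ with $v\in \bar{A}_i$, a distinct OPT transmission $T^*_{k(v,i)}$ that contains $v$, from which summing over all pairs gives
\[
\sum_{i=1}^{\ell} c(\bar{A}_i) \;=\; \sum_{v\in V} c(v)\,|\{i:v\in\bar{A}_i\}| \;\le\; \sum_{v\in V} c(v)\,|\{k:v\in V(T^*_k)\}| \;=\; \sum_k c(T^*_k) \;=\; \textsc{opt}.
\]
Fix a vertex $v$ and list the times $t_{i_1}<t_{i_2}<\cdots$ at which $v$ is unanticipated. Define $s_j$ to be the most recent time strictly before $t_{i_j}$ at which $v$ belonged to some $V^E$, with $s_j=0$ if no such time exists. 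Because every unanticipated vertex lies in $V^E$, one has $s_{j+1}\ge t_{i_j}$, so the intervals $(s_j,t_{i_j}]$ are pairwise disjoint.

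The key ingredient is a \emph{witness} request $\rho_j$ for each pair $(v,t_{i_j})$ satisfying $v_{\rho_j}\in V(\cT_v)$, $d_{\rho_j}\le t_{i_j}$, and $a_{\rho_j}>s_j$. Given such $\rho_j$, OPT must serve $\rho_j$ at some time $t^*_{k_j}\in[a_{\rho_j},d_{\rho_j}]\subseteq(s_j,t_{i_j}]$ via a tree $T^*_{k_j}$ that contains $v_{\rho_j}$ and therefore $v$ (which lies on the $r$-to-$v_{\rho_j}$ path); disjointness of the intervals forces the $T^*_{k_j}$ to be distinct across $j$, so $(v,t_{i_j})\mapsto T^*_{k_j}$ is the desired injection. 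To produce the witness I would split by how $v$ entered $V^E_{i_j}$. If $v$ is on the critical path, take $\rho_j$ to be the critical request $\rho^*$: then $v_{\rho^*}\in V(\cT_v)$ and $d_{\rho^*}=t_{i_j}$, and if $a_{\rho^*}\le s_j$ held, $\rho^*$ would be pending and unserved at the end of $s_j$'s processing of $v$, feeding into line \ref{alg:setnextinf} to give $next_{v,s_j}'\le t_{i_j}$, contradicting $next_{v,i_j}>t_{i_j}$. Otherwise $v$ was pulled into $V^E_{i_j}$ by \textsc{ExpansionStage} through some ancestor $u$ with $next_{u,i_j}\le t_{i_j}$ and some $v'\in I_{u,i_j}$, where $v'$ was placed in $I_u$ at $u$'s latest processing $s_u$ while \textsc{Invest} stepped along the path from $u$ to some $v_{\rho_k}$. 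Since $v$ lies on the $u$-to-$v'$ portion of the $u$-to-$v_{\rho_k}$ path, $v_{\rho_k}\in V(\cT_v)$, and the earliest-deadline-first order of \textsc{InvestmentStage} yields $d_{\rho_k}\le next_{u,i_j}\le t_{i_j}$.

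The main obstacle is verifying $a_{\rho_k}>s_j$ in the expansion case. Two structural facts should make this go through. First, since $V^E$ is always a rooted subtree of $\cT$, the ancestor $u$ belongs to $V^E$ whenever $v$ does, so $s_u\ge s_j$. Second, $\rho_k$ is unserved throughout $[a_{\rho_k},s_u)$: either $\rho_k$ is the partial request on which $u$'s budget ran out at $s_u$ and so remains pending, or $u$'s investment at $s_u$ is the first time $\rho_k$ ever gets served. If one had $a_{\rho_k}\le s_j$, then $\rho_k$ would be pending and unserved with $v_{\rho_k}\in V(\cT_v)\setminus(V^E_{s_j}\cup V^I_{s_j})$ at the moment line \ref{alg:setnextinf} updates $next_v$ during $s_j$'s processing, forcing $next_{v,s_j}'\le d_{\rho_k}\le t_{i_j}$ and contradicting unanticipation. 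The only remaining subtlety is to confirm that the deeper investments executed earlier in the bottom-up recursion at $s_j$ cannot already have inserted $v_{\rho_k}$ into $V^E_{s_j}\cup V^I_{s_j}$, which would invalidate the argument; this is guaranteed by the hypothesis that $\rho_k$ remains unserved until $s_u>s_j$.
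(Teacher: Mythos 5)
Your proposal is correct and takes essentially the same approach as the paper: both assign to each unanticipated $v$ at $t_i$ a witness request located in $\cT_v$ with deadline at most $t_i$ (the critical request, or the request that drove an anticipated ancestor's most recent investment along a path through $v$), and both rule out double-charging by showing that the witness arrived after $v$'s previous expansion time, since otherwise line \ref{alg:setnextinf} would have forced $next_v \leq t_i$, contradicting unanticipation. The differences are organizational — you frame uniqueness via disjoint intervals $(s_j, t_{i_j}]$ instead of the paper's pairwise-coincidence contradiction in Lemma \ref{lem:uniquecharges}, and you work directly with the path from $u$ to $v'$ rather than first establishing Lemma \ref{lem:unexpectedCharacterization} — and the one corner you gloss over (when $s_u = s_j$, so $\rho_k$ could be served at time $s_j$ itself) still goes through because $u$'s investment stage runs after $v$'s in the bottom-up recursion at $s_j$ and found $v_{\rho_k}$ still uncovered, so $v_{\rho_k}$ was not in $V^E \cup V^I$ at the earlier moment when $next_v$ was set.
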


Combining Lemmas \ref{lem:amortized} and \ref{lem:optlowerbound} implies
\[\textsc{alg} \leq \left(1+\frac{1}{\theta}\right)^{D} (1+\theta)\textsc{opt}.\]
The best choice is $\theta = D$ which implies Theorem \ref{thm:eD}. The proof is provided in Appendix \ref{app:pf-eD}.

Before proceeding with the proofs for Lemmas \ref{lem:amortized} and \ref{lem:optlowerbound}, we highlight some important properties of Algorithm \ref{alg:eDalg}, which we will refer to throughout the analysis. We give brief justifications of each after stating them.

\begin{observation} \label{obs:expectedsubtree}
    For all $t_i$, the subgraph of $\mathcal T$ induced by $V^E_i$ is a subtree rooted at $r$.
\end{observation}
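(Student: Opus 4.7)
The plan is to prove the observation by exhibiting the following stronger invariant maintained throughout the execution of \textsc{OnDeadline} at time $t_i$: the induced subgraph $\mathcal T[V^E]$ is a subtree of $\mathcal T$ rooted at $r$. Since $V^E_i$ is the value of $V^E$ at the end of the procedure, the observation follows. The base case is immediate: after line~9 of \textsc{OnDeadline}, $V^E$ consists exactly of the vertices on the $r$-$v_\rho$ path, which form a connected subtree containing $r$.

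The inductive step is to show that each call to \textsc{ExpansionStage}$(V^E,v)$ preserves the invariant, provided it is entered with $v \in V^E$ and $V^E$ already satisfying the invariant. The top-level call is made with $v=r \in V^E$, and every recursive call is explicitly made on a child of $v$ that lies in the current $V^E$, so the precondition $v \in V^E$ propagates. Inside the body, the only way new vertices are added to $V^E$ is via the line ``add all vertices on the path from $v$ to $v'$ to $V^E$'' for $v' \in I_v$; since $v$ is already in $V^E$, it suffices to show that every such path lies entirely inside $V(\mathcal T_v)$, so that the newly inserted vertices form a connected chain attached to $V^E$ at the vertex $v$.

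The key auxiliary claim needed is therefore $I_v \subseteq V(\mathcal T_v) \setminus \{v\}$ at every point during the algorithm's execution. I would verify this by tracing the two places where $I_v$ is written: it is initialized to $\varnothing$ in \textsc{Initialization}, reset to $\varnothing$ at the start of \textsc{InvestmentStage} for $v$, and then updated via $I_v \gets I_v \cup \{v'\}$ in \textsc{Invest}, which is only called from \textsc{InvestmentStage}$(\cdot,\cdot,v)$ with $v'$ chosen at line~\ref{line:ancestorsfirst} as a vertex of $V(\mathcal T_v) \setminus \{v\}$ (as the annotated comment confirms). Hence whenever the expansion clause uses $I_v$, all elements are strict descendants of $v$, so the path from $v$ to any such $v'$ is contained in $\mathcal T_v$, and adding it to a subtree containing $v$ yields a subtree.

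I do not anticipate any serious obstacle. The only subtlety worth flagging is that the loop ``for child nodes $v'$ of $v$ in $V^E$'' is interpreted after the $I_v$-expansion step has already added the relevant descendants of $v$ to $V^E$; this is exactly what makes the recursion reach the newly attached vertices, but it does not affect the invariant because the recursion is always entered with the invariant already re-established. A short induction on the recursion tree of \textsc{ExpansionStage} thus completes the argument.
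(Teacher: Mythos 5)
Your proposal is correct and follows essentially the same route as the paper's brief justification: the paper likewise observes that $V^E$ starts as the $r$-$v_\rho$ path and that every subsequent addition inserts an entire $v$-$v'$ path with $v \in V^E$ already, preserving connectivity. Your version merely spells out the induction on the recursion tree and invokes $I_v \subseteq V(\mathcal{T}_v)\setminus\{v\}$ (the paper's Observation~\ref{obs:deeperinvestments}) explicitly, which is a fine elaboration but not a different argument.
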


Observation \ref{obs:expectedsubtree} follows because $r$ is added to $V^E$ during \textsc{OnDeadline}, and whenever we add $v$ to $V^E$, we add the entire $u$-$v$ path where $u \in V^E$ already, which maintains connectivity of $V^E$.

\begin{observation} \label{obs:deeperinvestments}
    For all $t_i$ and $v \in V(\mathcal{T})$, we have $I_{v, i} \subseteq V(\mathcal{T}_v)\setminus\{ v\}$.
\end{observation}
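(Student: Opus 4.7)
The plan is to verify the claim by inspecting the few lines of Algorithm~\ref{alg:eDalg} that can ever modify $I_v$. Initially $I_v = \varnothing$ by the \textsc{Initialization} procedure, which trivially satisfies the stated inclusion, so it suffices to show that the invariant $I_v \subseteq V(\mathcal{T}_v)\setminus\{v\}$ is preserved across each execution. Within an invocation of \textsc{InvestmentStage}, $I_v$ is first reset to $\varnothing$ on the line \texttt{$I_v \gets \varnothing$} and is subsequently augmented only via the calls to \textsc{Invest}$(v, v', V^I)$, each of which adds the single vertex $v'$ chosen on line~\ref{line:ancestorsfirst}. Therefore I only need to argue that every such $v'$ belongs to $V(\mathcal{T}_v)\setminus\{v\}$.

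For the containment $v' \in V(\mathcal{T}_v)$, I would appeal directly to how $v'$ is constructed: it is the first vertex on the path from $v$ to $v_{\rho'}$ not in $V^E \cup V^I$, where $\rho'$ is a pending request satisfying $v_{\rho'} \in V(\mathcal{T}_v)\setminus(V^E\cup V^I)$. Because $v_{\rho'}$ lies in the subtree rooted at $v$, the entire $v$-$v_{\rho'}$ path lies in $V(\mathcal{T}_v)$, hence so does $v'$. To exclude $v' = v$, I would observe that \textsc{InvestmentStage} is only ever invoked with first argument $v \in V^E$: the top-level call in \textsc{OnDeadline} uses $v = r$, and $r$ is placed in $V^E$ as part of the initialization of $V^E$ with the $r$-$v_\rho$ path; subsequent recursive calls in \textsc{InvestmentStage} descend only to children already in $V^E$. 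Thus, whenever \textsc{InvestmentStage} runs its while loop on $v$, we have $v \in V^E \cup V^I$, so the first vertex on the $v$-$v_{\rho'}$ path that is \emph{outside} $V^E \cup V^I$ must be strictly below $v$.

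Combining these two points, every element ever inserted into $I_v$ lies in $V(\mathcal{T}_v)\setminus\{v\}$, so the invariant is preserved, completing the proof. I do not anticipate any real technical obstacle here; the only care needed is in tracing the handful of lines that mutate $I_v$ and in noting the two mild structural facts used above, namely that \textsc{InvestmentStage} is always called on expansion vertices and that the pending requests considered inside its while loop are, by design, located in $V(\mathcal{T}_v)$.
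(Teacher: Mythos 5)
Your proposal is correct and follows essentially the same route as the paper's own justification: both argue that $I_v$ starts empty, is only modified during the investment stage for $v$, and that each inserted vertex $v'$ lies in $\mathcal{T}_v$ by the choice of $\rho'$ and cannot equal $v$ since $v$ is already in $V^E$. Your version merely spells out these two structural facts in slightly more detail than the paper does.
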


Observation \ref{obs:deeperinvestments} holds at $t_1$ because in \textsc{Initialization}, $I_v$ is set to $\varnothing$. The variable $I_v$ only changes at times $t_i$ when $v \in V^E_i$. Then, during an investment stage of $\textsc{RecursiveStep}(v)$ at $t_i$, we only invest in vertices in $\mathcal{T}_v$. Additionally, $v$ could not invest into itself as $v \in V^E_i$ already, and hence $I_v \subseteq V(\mathcal{T}_v)\setminus\{ v\}$ at all moments of the algorithm.

\begin{observation} \label{obs:ancestorsincluded}
    For all $t_i$, if $v \in V^E_i$ and $v' \in I_{v, i}'$, then all ancestors of $v'$ are in $T_i$.
\end{observation}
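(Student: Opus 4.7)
The plan is to unpack the definition of $I_{v,i}'$ by tracing exactly when and why $v'$ is placed into $I_v$ during the call to \textsc{OnDeadline} at time $t_i$. Since $I_v$ is reset to $\varnothing$ at the top of each invocation of \textsc{InvestmentStage} on $v$, the membership $v' \in I_{v,i}'$ can only occur if $v \in V^E_i$ and the algorithm actually invokes \textsc{Invest}$(v, v', V^I)$ inside the while-loop at Line \ref{line:investmentwhileloop}. I would take this single call as the witness and argue about the state of $V^E$ and $V^I$ at that instant.

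At the moment \textsc{Invest}$(v, v', V^I)$ is called, Line \ref{line:ancestorsfirst} guarantees that $v'$ is the first vertex on the tree path from $v$ to $v_{\rho'}$ that is not yet in $V^E \cup V^I$. Hence every vertex strictly between $v$ and $v'$ on that path, together with $v$ itself, belongs to $V^E \cup V^I$ at that instant. I would then observe that once the expansion stage of the current timestep finishes, $V^E$ is frozen at $V^E_i$ for the remainder of the timestep, and $V^I$ only grows monotonically during the investment stage until it reaches $V^I_i$. Therefore, any vertex lying in $V^E \cup V^I$ at the moment of the investment call automatically lies in $V^E_i \cup V^I_i = V(T_i)$. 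This handles every ancestor of $v'$ that is a descendant of $v$.

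For the remaining ancestors of $v'$, namely the strict ancestors of $v$, I would appeal to Observation \ref{obs:expectedsubtree}: since $v \in V^E_i$ and the subgraph induced on $V^E_i$ is a subtree rooted at $r$, the entire $r$-$v$ path lies in $V^E_i \subseteq V(T_i)$. Combining the two cases shows that every ancestor of $v'$ belongs to $V(T_i)$, which is exactly the statement of the observation.

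I do not anticipate a real obstacle; the proof is a short bookkeeping argument layered on top of invariants already established earlier in the section. The only subtlety worth spelling out carefully is the temporal argument that lifts membership in $V^E \cup V^I$ at the instant of the investment call to membership in $V(T_i)$ at the end of the timestep, which in turn rests on two facts: $V^E$ is finalized before any investment begins, and $V^I$ is monotonically non-decreasing throughout the investment stage.
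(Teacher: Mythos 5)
Your proposal is correct and follows essentially the same route as the paper's own justification: both hinge on line \ref{line:ancestorsfirst} guaranteeing that every vertex preceding $v'$ on the $v$-$v'$ path already lies in $V^E \cup V^I$ at the moment \textsc{Invest}$(v,v',V^I)$ is called. Your version is in fact slightly more careful than the paper's one-line argument, since you explicitly cover the strict ancestors of $v$ via Observation \ref{obs:expectedsubtree} and spell out the monotonicity of $V^I$ needed to pass from the instant of the call to the final tree $T_i$.
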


Observation \ref{obs:ancestorsincluded} is justified as follows. Consider $v \in V^E_i$ and $v' \in I_{v, i}'$. Then $\textsc{Invest}(v,v', V^I)$ must have been called for $b_v>0$ at time $t_i$. Due to line \ref{line:ancestorsfirst}, $v'$ was the first vertex on $v$-$v'$ path not in $V^E \cup V^I$, meaning that all ancestors of $v'$ must be in $T_i$ already.

\begin{observation} \label{obs:increasingnextupwards}
    For all $t_i$, if $u, v \in V_i^E$ and $u$ is an ancestor of $v$ with $I'_{u, i} \cap \mathcal{T}_v \neq \varnothing$, then $next_{u, i}' \geq next_{v, i}'$.
\end{observation}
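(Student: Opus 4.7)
The plan is to extract a single witness request $\rho''$ that simultaneously (i) triggered $u$'s investment in some vertex $v'' \in I'_{u,i} \cap V(\cT_v)$ and (ii) remains a candidate in the minimum computed for $next_{v,i}'$ at the end of $v$'s investment stage. First, I would nail down the geometry. Because $v \in V^E_i$ and the \textsc{Invest} call at $u$ targets the first vertex on the $u$-to-$v_{\rho''}$ path lying outside $V^E \cup V^I$ (line~\ref{line:ancestorsfirst}), $v''$ cannot equal $v$, so $v''$ is a strict descendant of $v$. Since $v''$ also lies on the $u$-to-$v_{\rho''}$ path, $v_{\rho''}$ is $v''$ or a descendant of it, and in either case $v_{\rho''} \in V(\cT_v)$.

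Next I would show $d_{\rho''} \leq next_{u,i}'$. If $next_{u,i}' = \infty$ the inequality is trivial, so let $\rho^*$ be a pending request with $v_{\rho^*} \in V(\cT_u) \setminus (V^E \cup V^I)$ at the end of $u$'s investment stage and $d_{\rho^*} = next_{u,i}'$. The key structural fact is that $V^E \cup V^I$ is monotone nondecreasing over the course of a single timestep, so $v_{\rho^*} \notin V^E \cup V^I$ throughout $u$'s earlier investment work as well; combined with $\rho^*$ being pending throughout, $\rho^*$ was a legitimate candidate inside $u$'s while loop at the moment $u$ selected $\rho''$. Since the loop picks the earliest-deadline candidate, $d_{\rho''} \leq d_{\rho^*}$.

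Finally I would tie $d_{\rho''}$ back to $next_{v,i}'$. Since \textsc{InvestmentStage} recurses into children before touching the current node, $v$'s investment stage completes before $u$'s begins, so $V^E \cup V^I$ at the end of $v$'s stage is a subset of its value at the moment $u$ selects $\rho''$; hence $v_{\rho''} \notin V^E \cup V^I$ at the end of $v$'s stage either. Together with $v_{\rho''} \in V(\cT_v)$ and $\rho''$ being pending at $t_i$, this makes $\rho''$ a candidate in the minimum at line~\ref{alg:setnextinf} for $v$, yielding $next_{v,i}' \leq d_{\rho''} \leq d_{\rho^*} = next_{u,i}'$. The main obstacle is the bookkeeping of which snapshot of $V^E \cup V^I$ is in force at each step; once the monotonicity of $V^E \cup V^I$ within a timestep is used explicitly, the same witness $\rho''$ handles both halves of the inequality.
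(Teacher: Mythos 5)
Your proof is correct and rests on the same ingredients as the paper's: the recursion order (descendants' investment stages finish before the ancestor's), the monotone growth of $V^E \cup V^I$ within a timestep, and the earliest-deadline-first selection in the while loop. The only cosmetic difference is that you sandwich both quantities around a single witness request $\rho''$ (the one that triggered $u$'s investment into $\mathcal{T}_v$), whereas the paper tracks the request achieving $next_{v,i}'$ forward through $u$'s stage; this also cleanly subsumes the $next_{v,i}'=\infty$ case that the paper treats separately.
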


Observation \ref{obs:increasingnextupwards} is justified as follows.  Assume for now that $next_{v, i}' \neq \infty$, so $next_{v, i}' = d_\rho$ where $\rho$ has the earliest deadline among unserved requests in $\mathcal{T}_v \setminus (V^E\cup V^I)$ after the investment stage at $v$ has completed at $t_i$. Observe that the investment stage for $v$ occurs before the investment stage for $u$, as $u$ is an ancestor of $v$. Therefore, when the investment stage for $u$ begins, the earliest deadline among unserved request in $\mathcal{T}_v \setminus (V^E\cup V^I)$ has a deadline at earliest $d_\rho$. Because the investment stage invests towards requests in non-decreasing order of deadline, it follows that if $I'_{u, i} \cap \mathcal{T}_v \neq \varnothing$, then at some point the request in $\mathcal{T}_u \setminus (V^E\cup V^I)$ with earliest deadline was located at a vertex in $\mathcal{T}_v$. But this implies that at the conclusion of the investment stage for $u$ that all requests in $\mathcal{T}_u \setminus (V^E\cup V^I)$ have deadline at earliest $d_{\rho}$. Therefore, $next_u$ would be set no earlier than $d_{\rho}$, and therefore $next_{u, i}' \geq d_{\rho} = next_{v, i}'$. For the other case, if $next_{v, i}' = \infty$, then after the investment stage for $v$ there were no requests in $\mathcal{T}_v \setminus (V^E\cup V^I)$, and so $I'_{u, i} \cap \mathcal{T}_v \neq \varnothing$ is impossible.

\subsection{Upper-bounding ALG} \label{sec:ubalg}

In this section, we prove Lemma \ref{lem:amortized}.


Observe that an investment vertex $v'$ at time $t_i$ is only added to $V_i^I$ if a total of $c(v')$ is invested by expansion vertices to reduce the variable $\ell_{v'}$ to zero. As each expansion vertex $v$ can invest no more than $\theta \cdot c(v)$, we obtain
\[
    \sum_{i=1}^{\ell} c(V_i^I) \leq \sum_{i=1}^{\ell} \theta \cdot c(V^E_i).
\]
This yields
\begin{equation}
    \textsc{alg} = \sum_{i=1}^{\ell} c(T_i) 
   = \sum_{i=1}^{\ell} [c(V_i^E) + c(V_i^I)]
    \leq \sum_{i=1}^{\ell} [c(V_i^E) + \theta \cdot c(V_i^E)] 
    = (1+\theta)\sum_{i=1}^{\ell}c(V_i^E).\label{eq:alltoexpansion}
\end{equation}

Now that we have expressed the cost solely in terms of the expansion vertices, we wish to express the cost solely in unanticipated expansion vertices, as these costs can then be directly associated with \textsc{opt} in Section \ref{subsec:charge}. To do so, we must move the costs of anticipated nodes to those which were unanticipated. We first show the following two lemmas. 


\begin{restatable}{lemma}{lemeae} \label{lemma:expectedafterexpansion}
Let $v \in A_i$ be an anticipated vertex at time $t_i$, then $v$ must have been an expansion vertex at a time $t_j < t_i$, in other words $v \in V^E_j$ for some $j < i$.
\end{restatable}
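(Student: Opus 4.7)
\bigskip

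\noindent\textbf{Proof plan for Lemma \ref{lemma:expectedafterexpansion}.} The plan is to trace the lifecycle of the variable $next_v$ and observe that the only way its value can become finite is through an earlier appearance of $v$ as an expansion vertex. Since $v \in A_i$ means $next_{v,i} \le t_i < \infty$, this will immediately force the existence of the claimed earlier time $t_j$.

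First I would recall that \textsc{Initialization} sets $next_v \gets \infty$ for every $v \in V(\mathcal{T})$ at the start of the instance, so $next_v = \infty$ until some procedure modifies it. Next, I would inspect Algorithm \ref{alg:eDalg} and observe that the only line that assigns a new value to $next_v$ is line \ref{alg:setnextinf}, which lies inside \textsc{InvestmentStage}. Thus any finite value of $next_v$ must have been produced by a call to \textsc{InvestmentStage}($V^I$, $V^E$, $v$) at some earlier time.

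Then I would argue that such a call can only happen when $v$ is an expansion vertex at that time. \textsc{InvestmentStage} is first invoked at each deadline time on the root $r$ via \textsc{OnDeadline}, and subsequent recursive calls at line for the child loop only descend into child nodes $v'$ of $v$ that lie in $V^E$. A straightforward induction on depth, using Observation \ref{obs:expectedsubtree} to guarantee that $V^E$ is a rooted subtree, shows that \textsc{InvestmentStage}($V^I$, $V^E$, $v$) is executed at time $t_j$ if and only if $v \in V_j^E$.

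Combining these observations gives the lemma. Since $v \in A_i$ forces $next_{v,i} \le t_i$, in particular $next_{v,i} \neq \infty$, so the value was changed at some time $t_j < t_i$ by the unique assignment in \textsc{InvestmentStage} run on $v$, and by the previous paragraph this requires $v \in V_j^E$. There is no real obstacle here; the only subtlety worth spelling out is distinguishing the pre-call value $next_{v,i}$ (used in the definition of $A_i$) from the post-call value $next_{v,i}'$, so that the modification establishing finiteness must come strictly before $t_i$ rather than during the current deadline event.
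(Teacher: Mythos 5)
Your proposal is correct and follows essentially the same argument as the paper's proof: since $next_v$ is initialized to $\infty$ and is only reassigned inside \textsc{InvestmentStage} at times when $v$ is an expansion vertex, the finiteness of $next_{v,i}$ forces an earlier time $t_j$ with $v \in V^E_j$. Your additional care in distinguishing $next_{v,i}$ from $next_{v,i}'$ and in verifying that \textsc{InvestmentStage} is invoked on $v$ only when $v \in V^E$ is a slightly more explicit rendering of the same reasoning.
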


\begin{proof} 
    Suppose $v \in A_i$. By definition this means that $next_{v, i} \leq t_i$, and hence clearly $next_{v, i} \neq \infty$. As the $next_v$ variable is originally assigned to $\infty$ in the \textsc{Initialization} of the algorithm, it follows that $next_v$ must have had its value reassigned prior to $t_i$.  As the algorithm's variable $next_v$ is only reassigned at times for which $v$ is an expansion vertex, we therefore know that $v$ must have been an expansion vertex at some time before $t_i$. 
\end{proof}

\begin{restatable}{lemma}{lemubgeneric}  \label{lem:genericusedbudgets}
    Let $v \in A_i$ be an anticipated node at time $t_i$, and let $t_j$ be the latest time before $t_i$ that $v$ was an expansion vertex (see Lemma \ref{lemma:expectedafterexpansion}). Let $\theta c(v)$ be the budget assigned to $v$ in the investment stage for $v$. Then $c(I_{v, i}) \geq \theta c(v)$.
\end{restatable}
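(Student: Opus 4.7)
The plan is to argue that the budget $\theta c(v)$ assigned to $v$ during its most recent investment stage (at time $t_j$) must have been completely exhausted, and that the total amount disbursed by $v$ is naturally upper bounded by $c(I_{v,j}')$. Combined with the observation that $I_v$ is not modified between $t_j$ and $t_i$, this yields $c(I_{v,i}) \geq \theta c(v)$.

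First I would track how $I_v$ and $next_v$ evolve over time. Both are only modified inside the investment stage for $v$, i.e., at times when $v \in V^E$. Since by hypothesis $t_j$ is the latest such time strictly before $t_i$, it follows that $I_{v,i} = I_{v,j}'$ and $next_{v,i} = next_{v,j}'$. The next step is to argue that the while loop at line \ref{line:investmentwhileloop} terminated at $t_j$ because $b_v$ reached $0$, rather than because no eligible pending requests remained. Suppose for contradiction that the loop exited due to the latter condition. Then at line \ref{alg:setnextinf}, the set $\{d_\rho : v_\rho \in V(\cT_v) \setminus (V^E \cup V^I)\}$ is empty, so $next_{v,j}' = \infty$. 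Combined with $next_{v,i} = next_{v,j}'$, this gives $next_{v,i} = \infty$, contradicting the assumption $v \in A_i$, which requires $next_{v,i} \leq t_i$. Hence $b_v = 0$ at the end of the loop, meaning that a total of exactly $\theta c(v)$ was spent through calls to \textsc{Invest} during $v$'s investment stage at $t_j$.

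It then remains to bound this total spending by $c(I_{v,j}')$. Each call $\textsc{Invest}(v, v', V^I)$ invests $\Delta b_v = \min\{b_v, \ell_{v'}\} \leq \ell_{v'} \leq c(v')$ into $v'$ (noting that $\ell_{v'}$ never exceeds $c(v')$, as it is initialized to $c(v')$ and only decreases or resets to $c(v')$) and adds $v'$ to $I_v$. A short inspection of the loop shows that no vertex receives investments from $v$ in two distinct iterations of the while loop: if an investment drives $\ell_{v'}$ to $0$, then $v'$ enters $V^I$, and by line \ref{line:ancestorsfirst} the ``first vertex on the path not in $V^E \cup V^I$'' chosen in any subsequent iteration is strictly deeper than $v'$; otherwise the investment leaves $b_v = 0$ and the loop exits. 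Summing the contributions over all invocations of \textsc{Invest} during the stage,
\[
\theta c(v) \;=\; \sum_{v' \in I_{v,j}'} \text{(amount invested into $v'$)} \;\leq\; \sum_{v' \in I_{v,j}'} c(v') \;=\; c(I_{v,j}') \;=\; c(I_{v,i}),
\]
as required. The only delicate point is the non-repetition of invested vertices, which is routine given the pseudocode; the conceptual content of the lemma lies in using the definition of \emph{anticipated} together with line \ref{alg:setnextinf} to rule out the case $next_{v,j}' = \infty$.
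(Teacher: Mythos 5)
Your proof is correct and follows essentially the same route as the paper's: use $next_{v,i} = next_{v,j}' \le t_i < \infty$ to rule out the loop exiting for lack of pending requests (hence the full budget $\theta c(v)$ is spent), then bound the spending by $c(I_{v,j}') = c(I_{v,i})$ via $\Delta b_v \le \ell_{v'} \le c(v')$. Your explicit verification that no vertex is invested into twice within a single stage is a point the paper's proof passes over more quickly, but it is the same argument.
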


\begin{proof}
     Suppose $v \in A_i$. According to Lemma \ref{lemma:expectedafterexpansion}, we know that $v$ must have been an expansion vertex at a time earlier than $t_i$. Let $t_j$ be the most recent time before $t_i$ for which $v \in V^E_j$. 
     
     As the variable $next_v$ is not changed between $t_j$ and $t_i$, we know that $next_{v, j}' = next_{v, i}$, and hence $next_{v, j}' \neq \infty$ also. Therefore, we know that after leaving the while loop on line \ref{line:investmentwhileloop} within the $\textsc{InvestmentStage}(V^I, V^E, v)$ call at time $t_j$, it must be the case that there were still pending requests on vertices in $\mathcal{T}_v - (V^E \cup V^I)$, or else $next_{v, j}'$ would have been set to $\infty$ in line \ref{alg:setnextinf}. Thus, it must instead be the case that the condition $b_v > 0$ of the while loop failed, and so the entirety of the budget assigned to $v$ at time $t_j$ was spent. 
     Observe that each call of $\textsc{Invest}(v, v', V^I)$ will decrease the budget $b_v$ by $\Delta b_v$, but will add $v'$ to $I_{v}$, where $c(v') \geq \ell_{v'} \geq \Delta b_v$. Thus the amount by which the budget decreases is always at most the amount by which $c(I_v)$ increases. As during the iteration at time $t_j$, $b_v$ must have decreased by at least $\theta c(v)$, it follows that $c(I_v)$ must have increased by at least $\theta c(v)$ during the investment stage. As $I_v$ is set to $\varnothing$ at the beginning of the investment stage, we know $c(I_v) = 0$ initially, and thus if $c(I_v)$ increases by at least $\theta c(v)$ during the investment phase, we know $c(I_{v, j}') \geq \theta c(v)$. Observing that $c(I_{v, j}') = c(I_{v, i})$ finishes the proof.
     \end{proof}


We can now express the cost of all expansion nodes only in terms of the unanticipated nodes.
\begin{lemma} \label{lem:coretounexpected}

    For any $t_i$, we have

    \begin{equation*}
    c(V_i^E) = \sum_{v \in V^E_i} c(v) \leq \left(1+\frac{1}{\theta}\right)^{D} \sum_{\substack{v \in \bar{A}_i}}  c(v) = \left(1+\frac{1}{\theta}\right)^{D}c(\bar{A}_i)
\end{equation*}

\end{lemma}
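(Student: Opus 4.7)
The plan is to induct on depth from the deepest vertices upward. Define $V_i^E(d) = \{v \in V_i^E : \mathrm{depth}(v) = d\}$, $V_i^E(\geq d) = \bigcup_{d' \geq d} V_i^E(d')$, and likewise $A_i(d)$ and $\bar A_i(d)$. I will prove the strengthened inequality
\[
c(V_i^E(\geq d)) \;\leq\; \sum_{d' \geq d} \left(1 + \tfrac{1}{\theta}\right)^{d'-d} c(\bar A_i(d'))
\]
for every $d \in \{0,\ldots,D\}$; setting $d = 0$ and bounding each exponent by $D$ then yields Lemma \ref{lem:coretounexpected}.

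The base case $d = D$ is immediate: every vertex in $V_i^E(D)$ is a leaf, so $V(\cT_v) \setminus (V^E \cup V^I) = \varnothing$ once $v$ is processed in its investment stage, forcing line \ref{alg:setnextinf} to set $next_v = \infty$. Thus no leaf can ever satisfy $next_{v,i} \leq t_i$, so $A_i(D) = \varnothing$ and $c(V_i^E(D)) = c(\bar A_i(D))$, matching the right-hand side.

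For the inductive step, partition $V_i^E(\geq d) = \bar A_i(d) \cup A_i(d) \cup V_i^E(\geq d+1)$ and focus on bounding $c(A_i(d))$. For each $v \in A_i(d)$, Lemma \ref{lem:genericusedbudgets} gives $c(v) \leq \tfrac{1}{\theta} c(I_{v,i})$. Since $next_{v,i} \leq t_i$, the call \textsc{ExpansionStage}$(V^E, v)$ executes its inner loop and adds every vertex of $I_{v,i}$ (and indeed the entire $v$-to-$v'$ path for each $v' \in I_{v,i}$) into $V^E$, so $I_{v,i} \subseteq V_i^E$; combined with Observation \ref{obs:deeperinvestments} placing $I_{v,i} \subseteq V(\cT_v) \setminus \{v\}$, we obtain $I_{v,i} \subseteq V_i^E(\geq d+1)$. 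Furthermore, for distinct $v, v' \in A_i(d)$ the subtrees $\cT_v$ and $\cT_{v'}$ are vertex-disjoint, so $\{I_{v,i}\}_{v \in A_i(d)}$ is pairwise disjoint inside $V_i^E(\geq d+1)$. Summing over $v \in A_i(d)$ yields $c(A_i(d)) \leq \tfrac{1}{\theta} c(V_i^E(\geq d+1))$; substituting into the partition identity and applying the inductive hypothesis to $c(V_i^E(\geq d+1))$ closes the induction after a routine algebraic simplification.

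The main obstacle is verifying the two containments above: that $I_{v,i} \subseteq V_i^E$ whenever $v \in A_i(d)$ (from the anticipated-vertex expansion rule triggered by $next_{v,i} \leq t_i$) and that the family $\{I_{v,i}\}_{v \in A_i(d)}$ is pairwise disjoint within $V_i^E(\geq d+1)$ (from Observation \ref{obs:deeperinvestments} together with disjointness of same-depth subtrees). These are precisely what permit the telescoping of anticipated-vertex cost into strictly-deeper expansion-vertex cost; once they are confirmed, the remaining algebra is mechanical.
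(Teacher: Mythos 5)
Your proof is correct, and it reaches the bound by a genuinely different organization of the same charging idea. The paper uses a top-down accounting argument: it places a coin of value $c(v)$ on every expansion vertex, processes vertices from the root downward, and when an anticipated vertex $v$ carrying cost $C\cdot c(v)$ is processed it pushes an additional $\frac{C}{\theta}c(v')$ onto each $v'\in I_{v,i}$; the invariant (Claim \ref{claim:inductiveredistribution}) is that a vertex at depth $d$ that is ready to redistribute carries at most a $(1+1/\theta)^d$ multiple of its own cost. You instead run a bottom-up induction on depth levels with the strengthened aggregate inequality $c(V_i^E(\geq d))\leq \sum_{d'\geq d}(1+1/\theta)^{d'-d}c(\bar A_i(d'))$, and the single inequality you need per level, $c(A_i(d))\leq \frac{1}{\theta}\,c(V_i^E(\geq d+1))$, does follow from Lemma \ref{lem:genericusedbudgets} together with the two facts you flag: anticipated vertices have all of $I_{v,i}$ forced into $V^E_i$ by the expansion stage (every vertex of $V^E_i$ is visited by \textsc{ExpansionStage} because $V^E_i$ is a subtree and new additions are always descendants of the vertex being processed), $I_{v,i}\subseteq V(\cT_v)\setminus\{v\}$ by Observation \ref{obs:deeperinvestments}, and subtrees rooted at distinct same-depth vertices are disjoint. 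Your depth-level aggregation avoids the per-vertex ``ready to redistribute'' bookkeeping and is arguably cleaner for this lemma; the paper's per-vertex formulation is the one it later adapts to the heavy-path setting of Lemma \ref{lem:coretounexpected-h}, where the aggregation must be done path-by-path in $\mathcal{H}$ rather than level-by-level in $\cT$. Your base-case observation that depth-$D$ vertices are leaves and hence never anticipated is correct but inessential: you could start the induction at $d=D+1$ with the empty set.
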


\begin{proof}
    We proceed using the accounting method of amortized analysis. For a fixed $t_i$, consider placing coins of value $c(v)$ on each anticipated vertex $v \in V_i^E$. Thus the total cost initially on the structure $V^E_i$ equals the left-hand side of the inequality stated by the Lemma. We will redistribute the costs which are on anticipated nodes, to obtain a rearrangement of costs which is upper-bounded by the larger quantity of the lemma.
    
    We start with the root $r$. We \emph{process} $r$ as follows. If $r$ is an unanticipated node of $T_i$, we do not move the cost from $c(r)$ --- we will only move the costs off of anticipated nodes. Otherwise if $r$ is an anticipated node, then by Lemma \ref{lem:genericusedbudgets},  $c(r) \leq \frac{1}{\theta} c(I_{r, i}) = \frac{1}{\theta} \sum_{v' \in I_{r, i}} c(v')$. If $r$ is anticipated, then we know that all the nodes in $I_{r, i}$ were added to $V^E_i$, and so we can reallocate this cost of $c(r)$ from $r$ as an additional cost of at most $\frac{c(v')}{\theta}$ to each $v' \in I_{r,  i}$. 

    We descend the vertices in $V^E_i$ and repeat this process: we call a expansion node $v$ ``ready to redistribute" if all of its ancestors have been processed. To process an unanticipated $v \in \bar{A}_i$, we do nothing and the cost remains on $v$. For any anticipated $v \in A_i$  currently allocated a cost of $C \cdot c(v)$, we process $v$ by reallocating this as an additional cost of $\frac{C}{\theta} \cdot c(v')$ to each node $v'$ in $I_{v, i}$. Recall by Observation \ref{obs:deeperinvestments} that $v$ will only invest into vertices in $\mathcal{T}_v$, and so these costs are strictly pushed to deeper vertices.

    \begin{claim} \label{claim:inductiveredistribution}
        When a vertex $v \in V^E_i$ at depth $d$ is ready to redistribute but has not yet been processed, the allocated costs on vertices $v' \in V^E_i \cap \cT_v$ is at most $ (1+1/\theta)^dc(v')$.
    \end{claim}

    \begin{claimproof}
     We proceed by induction on $d$. The base case of $d=0$ is trivial, as this corresponds to the fact that all vertices in $v \in V^E_i$ have an allocated cost of $c(v)$ before the root $r$ is processed.

    Assume inductively that the claim has held true up to a depth $d$, and a vertex $v$ at depth $d+1$ is ready to redistribute. This means that when the parent of $v$, call it $u$, was ready to redistribute, all nodes $w \in V(\mathcal{T}_u)$ had allocated at most $(1+1/\theta)^d c(w)$. If $u$ is unanticipated, then the cost on $u$ is not moved, so none of the vertices in $\mathcal{T}_u$ receive any additional costs. This would imply that the costs on vertices $w$ in $\mathcal{T}_v$ would still be at most $(1+1/\theta)^d c(w) \leq (1+1/\theta)^{d+1} c(w)$, and so the claim will continue to hold after $u$ is processed. Otherwise, consider what would occur when $u$ was processed if $u$ was anticipated. An additional cost of at most $\frac{1}{\theta}(1+1/\theta)^d c(w)$ would be added to all vertices in $w \in \mathcal{T}_u$ (recall that the cost $C \cdot c(u)$ of $u$ is reallocated as an additional cost of $\frac{C}{\theta} \cdot c(w)$ to each node $w \in I_{u, i}$). Therefore immediately following the processing of $u$, each $w \in V(\mathcal{T}_u)$ would have had a cost of at most $(1+1/\theta)^d c(w) + \frac{1}{\theta}(1+1/\theta)^d c(w) = (1 + 1/\theta)^{d+1} c(w)$. At this point, $v$ would be ready to redistribute, and as $\cT_v$ is a subtree of $\cT_u$, the inductive step is finished and we have proven Claim \ref{claim:inductiveredistribution}.
    \end{claimproof}

    This redistribution process must terminate eventually, as the tree $\mathcal{T}$ has a bounded depth of $D$. Observe that this redistribution cost will leave anticipated vertices of $V^E_i$ with no costs, while an unanticipated vertex $v$ in $V^E_i$ at depth $d$ would be left with a cost of $(1+1/\theta)^d c(v) \leq (1+1/\theta)^{D}c(v)$. As we have at no point decreased the total cost on the structure $V^E_i$, the inequality stated by the lemma follows.
    \end{proof}

Combining inequality \eqref{eq:alltoexpansion} and Lemma \ref{lem:coretounexpected} implies Lemma \ref{lem:amortized}.

\subsection{Charging Scheme against OPT} \label{subsec:charge}

In this section, we prove Lemma \ref{lem:optlowerbound}.

We accomplish this using a charging argument to map the costs of unanticipated expansion vertices at times $t_i$ to OPT. However, before we can do so, we need to further characterize the unanticipated expansion nodes, which we accomplish in the following lemma.

\begin{lemma} \label{lem:unexpectedCharacterization}
    Let $\rho_i^*$ denote the pending request which triggered \textsc{OnDeadline} at $t_i$. Then for any $v \in \bar{A}_i$, either $v$ is on the $r$-$v_{\rho_i^*}$ path, or $v \in I_{v^*, i}$ for some $v^* \in A_i$. 
\end{lemma}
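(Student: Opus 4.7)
The plan is to work directly from the operational definition of \textsc{ExpansionStage}. If $v$ lies on the initial $r$-$v_{\rho_i^*}$ path the lemma holds immediately, so assume otherwise. Then $v$ must have been added to $V^E_i$ during some recursive call of \textsc{ExpansionStage}, and by inspection of the pseudocode this only occurs inside the \textbf{if}-block, which is entered when $t_i\ge next_u$ for the currently-processed vertex $u$ (equivalently $u\in A_i$), at which point the entire $u$-to-$v'$ path is added to $V^E$ for each $v'\in I_{u,i}$. Hence there exist $u\in A_i$ and $v'\in I_{u,i}$ with $v$ lying on the $u$-$v'$ path in $\cT$. If $v=v'$, we set $v^*=u$ and are done.

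The substantive case is when $v$ is a proper intermediate vertex, strictly below $u$ and strictly above $v'$. Let $t_k<t_i$ be the most recent time $u$ was an expansion vertex, so that $I_{u,i}=I'_{u,k}\ni v'$, and Observation~\ref{obs:ancestorsincluded} forces $v\in V^E_k\cup V^I_k$. If $v\in V^E_k$, Observation~\ref{obs:increasingnextupwards} applied at $t_k$ with $u$ an ancestor of $v$ and $v'\in I'_{u,k}\cap\cT_v$ yields $next'_{v,k}\le next'_{u,k}=next_{u,i}\le t_i$; were $v$ not re-added to $V^E$ during $(t_k,t_i)$, this would imply $next_{v,i}\le t_i$, i.e.\ $v\in A_i$, contradicting $v\in\bar A_i$. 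So $v$ must be re-expanded at some $t_{k'}\in(t_k,t_i)$, and the argument is iterated with $t_{k'}$ in place of $t_k$, eventually either resolving as the on-initial-path or direct-endpoint scenarios at the later time, or reducing to the $V^I$ scenario below. In the $V^I$ scenario, $v$ was added to $V^I_k$ by some \textsc{Invest}$(w,v,\cdot)$ call at $t_k$, giving $v\in I'_{w,k}$ for an ancestor $w$ of $v$ in $V^E_k$; provided $w$ does not re-enter $V^E$ during $(t_k,t_i)$ this yields $v\in I_{w,i}$, and $w\in V^E_i$ because $V^E_i$ is a subtree of $\cT$ rooted at $r$ (Observation~\ref{obs:expectedsubtree}) containing $v$ and hence all its ancestors that appear in $V^E_i$. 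It remains to verify $w\in A_i$: we identify the pending request witnessing $next_{u,i}\le t_i$ (which exists because $u\in A_i$) and show that its location still lies in $\cT_w\setminus(V^E\cup V^I)$ at the end of $w$'s investment stage, which occurs before $u$'s in the bottom-up investment order, thereby forcing $next'_{w,k}\le t_i$; we then take $v^*=w$.

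The main obstacle is the intermediate sub-case, specifically verifying $w\in A_i$. This requires chasing a pending request through the bottom-up investment order at $t_k$: the request whose deadline equals $next'_{u,k}\le t_i$ must be shown to have its location in $\cT_w$ and to remain pending and outside $V^E\cup V^I$ at the earlier moment $w$'s investment stage concludes. Monotonic growth of $V^E\cup V^I$ during investment, together with $w$ being an ancestor of $v$ (hence of $v'$) and the fact that the witnessing request's location is not added between $w$'s and $u$'s completions, is what makes the bound transfer cleanly.
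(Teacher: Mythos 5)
Your overall strategy matches the paper's: classify how $v$ entered $V^E_i$, reduce to the case that $v$ is an interior vertex of a $u$-$u'$ path with $u \in A_i$ and $u' \in I_{u,i}$, roll back to the last time $t_k$ that $u$ was an expansion vertex, and use Observations \ref{obs:ancestorsincluded} and \ref{obs:increasingnextupwards}. But the execution has a genuine gap, and it stems from one missing structural fact: \emph{no descendant of $u$ can be an expansion vertex at any time strictly between $t_k$ and $t_i$}. Indeed, by Observation \ref{obs:expectedsubtree} the set of expansion vertices at any transmission time is a subtree rooted at $r$, so if a descendant $w$ of $u$ were an expansion vertex at some $t \in (t_k, t_i)$ then $u$ would be too, contradicting the maximality of $t_k$. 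With this in hand, your $V^E_k$ branch terminates immediately in a contradiction ($next_{v,i} = next'_{v,k} \le t_i$ forces $v \in A_i$), so the ``iteration with $t_{k'}$ in place of $t_k$'' never arises --- which is fortunate, because as written that iteration is not well-founded: $t_{k'}$ is a time at which $v$, not $u$, is an expansion vertex, so the inductive structure changes and no termination measure is given. The same fact discharges the caveat ``provided $w$ does not re-enter $V^E$ during $(t_k,t_i)$'' in your $V^I_k$ branch --- but only once you know $w$ is a descendant of $u$ (or $u$ itself), which you never establish. It does follow: before \textsc{Invest}$(u,u',V^I)$ can be called, every vertex on the $u$-$u'$ path, including $v$, must already be in $V^E \cup V^I$, so the investor $w$ of $v$ acted no later than during $u$'s own investment stage; since investment stages run bottom-up, $w$ is at least as deep as $u$ and hence lies on the $u$-$v$ path.

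The second gap is the step you yourself flag as the ``main obstacle,'' namely showing $w \in A_i$. As sketched it does not work: the pending request witnessing $next'_{u,k} \le t_i$ is the earliest-deadline request in $\cT_u \setminus (V^E \cup V^I)$ at the end of $u$'s investment stage, and there is no reason its location should lie in $\cT_w$. The clean route --- the one the paper takes --- is to apply Observation \ref{obs:increasingnextupwards} directly to the ancestor-descendant pair $(u, w)$ at time $t_k$: since $u' \in I'_{u,k}$ and $u'$ is a descendant of $v$, hence of $w$, we have $I'_{u,k} \cap \cT_w \neq \varnothing$, so $next'_{w,k} \le next'_{u,k} = next_{u,i} \le t_i$; combined with the no-re-expansion fact this gives $next_{w,i} \le t_i$, i.e.\ $w \in A_i$. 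This is exactly the paper's Claim \ref{claim:structural}, applied uniformly to every vertex of the $u$-$u'$ path lying in $V^E_k$, which handles both of your branches at once.
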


\begin{proof}
    Observe that there are only 2 cases in Algorithm \ref{alg:eDalg} for which a vertex $v$ can be added to $V^E_i$; (1) during \textsc{OnDeadline}, if $v$ is on the path $r$-$v_{\rho_i^*}$ and (2) during $\textsc{RecursiveStep}(u)$, if $v$ is on the path $u$-$u'$ where $u \in A_i$ and $u' \in I_{u, i}$. Case (1) clearly satisfies the lemma, so we will now assume the second case, that $v$ lies on the path $u$-$u'$ where $u \in A_i$ and $u' \in I_{u, i}$. As the lemma is obvious for $v=u'$, we continue under the assumption that $v \neq u'$.
\begin{center}
    \begin{tikzpicture}
    \node[vertex] (r) at (0,0) {$r$};
    \node[vertex] (u) at (3,0) {$u$};
    \node[vertex] (vs) at (6,0) {$v^*$};
    \node[vertex] (v) at (9,0) {$v$};
    \node[vertex] (up) at (12,0) {$u'$};
    \node (deep) at (13,0) {$...$};
    \path (r) edge (u)
    (u) edge (vs)
    (vs) edge (v)
    (v) edge (up)
    (u) edge [dashed,bend left=20] node [below] {$\exists u' \in I_{u,i}$} (up)
    (vs) edge [dashed,bend right=20] node [below] {$\Rightarrow \exists v \in I_{v^*,i}$} (v);
\end{tikzpicture}
\end{center}

    By Lemma \ref{lemma:expectedafterexpansion}, there must be a time $t_{i^*} < t_i$ such that $u \in V^E_{i^*}$, let $t_{i^*}$ be the most recent such time, and observe that $next_{u, i^*}' = next_{u, i}$ and $I_{u, i^*}' = I_{u,i}$. As $u' \in I_{u, i} = I'_{u, i^*}$, by Observation \ref{obs:ancestorsincluded} we know that $v \in T_{i^*}$. We now make an important claim.

    \begin{claim} \label{claim:structural}
        For any $w$ on the path $u$-$u'$ (where $u \in A_i$, $u \in V_{i^*}^E$, and $u' \in I_{u, i}=I_{u, i^*}'$, and $t_{i^*}$ defined as before) if $w \in V_{i^*}^E$ then $w \in A_i$, $next_{w, i^*}'=next_{w, i}$ and $I_{w, i^*}'=I_{w, i}$.
    \end{claim}

    \begin{claimproof}
    By Observation \ref{obs:increasingnextupwards}, we have that $next_{w, i^*}' \leq next_{u, i^*}'$, as $w$ is on the $u$-$u'$ path. We know that $w \in V_{i^*}^E$, and we also know that $w \in V_i^E$, because $w$ is on the $u$-$u'$ path where $u, u' \in V^E_i$, and $V^E_i$ forms a subtree due to Observation \ref{obs:expectedsubtree}. Further observe that it is impossible for $w$ to be anticipated at any time $t$ with $t_{i^*} < t < t_i$, because this would necessitate that $u$ would be anticipated at time $t$ by Observation \ref{obs:expectedsubtree}, which would contradict the fact that $t_{i^*}$ is the latest time before $t_i$ for which $u$ is anticipated. With this observation, it follows that $next_{w, i^*}'=next_{w, i}$. Putting this together, we have that $next_{w, i} = next_{w, i^*}' \leq next_{u, i^*}' = next_{u, i} \leq t_i$, where we used the fact that $u \in A_i$. Hence, $w \in A_i$ by definition, proving the claim.     \end{claimproof}

    Earlier we found that $v \in T_{i^*} = V_{i^*}^E \cup V_{i^*}^I$. Due to Claim \ref{claim:structural}, we now know that $v \not \in V_{i^*}^E$ as otherwise $v \in A_i$, which would contradict the fact that $v$ is an unanticipated vertex at time $t_i$. 
    
    Thus $v \in V_{i^*}^I$, meaning there must have been some $v^* \in V^E_{i^*}$ such that $\textsc{Invest}(v^*, v, V^I)$ with $b_{v^*}>0$ was called at time $t_{i^*}$ and resulted in the addition of $v$ into $V_{i^*}^I$, and so $v \in I_{v^*, i^*}'$. Observe that by the time $\textsc{Invest}(u,u', V^I)$ was called, we would have needed all vertices on the path $u$-$u'$ already present in $V^E \cup V^I$, and hence $v^*$ must have invested in $v$ before $\textsc{Invest}(u,u', V^I)$ was called. Thus, $v^*$ must be at least as deep as $u$. $v^*$ must also be an ancestor of $v$ in order for $v^*$ to have invested into $v$ by Observation \ref{obs:deeperinvestments}, and hence $v^*$ is on the path $u$-$v$, and hence also on the path $u$-$u'$. Therefore Claim \ref{claim:structural} applies, meaning $v^* \in A_i$. Finally, $v \in I_{v^*, i^*}' = I_{v^*, i}$, finishing the proof.
    \end{proof}

We are now ready to define the charging scheme.

\subsubsection{The Charging Scheme}
Let $v \in \bar{A}_i$ be an unanticipated expansion vertex at time $t_i$. Let $\rho_i^*$ be the request which triggered the creation of a service at time $t_i$, so $d_{\rho_i^*} = t_i$. Using Lemma \ref{lem:unexpectedCharacterization}, we know that $v$ is either on the $r$-$v_{\rho_i^*}$ path, or $v \in I_{v^*, i}$ for $v^* \in A_i$. We thus split the charging scheme into two cases: 

\begin{itemize} 
    \item \emph{Case 1: $v$ is on the path $r$-$v_{\rho_i^*}$}. Then, OPT must have served $\rho_i^*$ within the time $[a_{\rho_i^*}, d_{\rho_i^*}] = [a_{\rho_i^*}, t_i]$. OPT's service to  $\rho_i^*$ must have also included $v$, and so we charge the cost of $c(v)$ against any of these $c(v)$ costs OPT incurs for including $v$ in its service history within $[a_{\rho_i^*}, t_i]$.
    
    \item \emph{Case 2: $v$ is not on the path $r$-$v_{\rho_i}$, and $v \in I_{v^*, i}$ for some $v^* \in A_i$}. If there are multiple such $v^*$, we simply fix one arbitrarily for each unanticipated $v \in \bar{A}_i$. By Lemma \ref{lemma:expectedafterexpansion} we know that $v^*$ must have been an expansion vertex at a time before $t_{i}$, so suppose that this occurred most recently at $t_{i^*} < t_i$. Then we would have that $next_{v^*, i^*}' = next_{v^*, i}$. As $v^* \in A_i$, we have $next_{v^*, i} \leq t_i$. Now, as $v \in I_{v^*, i} = I'_{v^*, i^*}$ and ALG processes requests in $\mathcal{T}_{v^*}$ in non-decreasing deadline, it follows that while ALG was investing the budget on $v^*$ at time $t_{i^*}$, there must have been an unserved request $\rho$ on a vertex $v_{\rho}$ such that $v$ was on the path $v^*$-$v_\rho$. Because $next_{v^*, i^*}'$ is determined as the next pending task in $\mathcal{T}_u$ not yet in $V^E_i \cup V_i^I$, we know that $d_\rho \leq next_{v^*, i^*}'$. Linking this together, we find that $d_\rho \leq next_{v^*, i^*}' = next_{v^*, i} \leq t_i$.
    
    Thus OPT must have served $\rho$ within the time $[a_{\rho}, d_{\rho}]$, which is a subset of $[a_\rho, t_i]$ by the above inequality. Such a service must have included $v$, and thus we charge $c(v)$ to the associated cost of $c(v)$ incurred by OPT during the timeframe $[a_{\rho}, t_i]$.
\end{itemize}

In both Case 1 and Case 2, we find ourselves charging $c(v)$ to a portion of OPT's service history within $[a_\rho, t_i]$ which serves $\rho$ where $v_{\rho}$ is deeper than $v$. In order to unify these two cases, we call this particular $\rho$ the request \underline{\emph{associated}} with unanticipated node $v$ at time $t_i$. 

Using the above charging scheme, the cost of any unanticipated vertex at a time $t_i$ is mapped to a cost in $\textsc{opt}$. To establish Lemma \ref{lem:optlowerbound}, we are left with showing that this mapping is one-to-one, i.e., no one portion of $\textsc{opt}$ is charged to more than once.

\begin{lemma} \label{lem:uniquecharges}
    The charging scheme for ALG consists of unique charges against OPT.
\end{lemma}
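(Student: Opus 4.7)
The plan is to reduce uniqueness to a purely temporal statement about each fixed vertex, and then prove that statement using the $next_v$ bookkeeping of Algorithm \ref{alg:eDalg}. First, two charges $(v, t_i)$ and $(v', t_j)$ with $v \neq v'$ cannot collide, because each charge is against the $c(v)$-cost that OPT incurs for including the specific vertex $v$ in some service subtree, and inclusion costs for different vertices are disjoint within OPT's total cost. Hence the only way two charges could collide is if they are at the same vertex $v$ but at different times $t_i < t_j$, and they target the same OPT service. Since every OPT service happens at one specific time, it suffices to show that for any two such charges at $v$, the allowed intervals $[a_{\rho_i}, d_{\rho_i}]$ and $[a_{\rho_j}, d_{\rho_j}]$ are disjoint: then any OPT services in the two intervals necessarily differ, and picking them greedily from left to right yields a one-to-one assignment.

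The core temporal claim is $a_{\rho_j} > t_i$ whenever $v \in \bar{A}_i \cap \bar{A}_j$ with $t_i < t_j$; combined with $d_{\rho_i} \leq t_i$, this gives disjointness. I would let $t_m$ denote the latest time strictly before $t_j$ for which $v \in V^E_m$, and note that $t_m \geq t_i$ because $v \in V^E_i$. Because $v$ never lies in $V^E$ during $(t_m, t_j)$, the value of $next_v$ is unchanged on this interval, so $next_{v, j} = next_{v, m}'$, and $v \in \bar{A}_j$ forces $next_{v, m}' > t_j \geq d_{\rho_j}$. Recall that after $v$'s investment stage at $t_m$, the variable $next_{v, m}'$ equals the minimum deadline over unserved requests located in $V(\mathcal{T}_v) \setminus (V^E_m \cup V^I_m)$ (or $\infty$ when none exist). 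Consequently, if $\rho_j$ were pending at $t_m$ with $v_{\rho_j} \notin V^E_m \cup V^I_m$, we would obtain the contradiction $next_{v, m}' \leq d_{\rho_j} \leq t_j$.

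The main obstacle is ruling out the remaining possibility that $\rho_j$ has already arrived by $t_m$ (i.e., $a_{\rho_j} \leq t_m$) but is already disposed of by ALG, either before $t_m$ or at $t_m$ itself via $v_{\rho_j} \in V^E_m \cup V^I_m$. I would resolve this by case-splitting on which case of the charging scheme applies to $(v, t_j)$. In Case 1, $\rho_j$ is the critical request at $t_j$, so $\rho_j$ must still be unserved by ALG at $t_j$; hence ALG cannot have served $\rho_j$ at any earlier time $t_m < t_j$. In Case 2, $\rho_j$ is pending at the time $t_{j^*}$ at which the ancestor $v^*$ (for which $v \in I_{v^*, j}$) was last in $V^E$ before $t_j$, and I would establish the sub-claim $t_{j^*} > t_m$. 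Connectivity of $V^E$ (Observation \ref{obs:expectedsubtree}) forces $v^* \in V^E_m$ whenever $v \in V^E_m$, giving $t_{j^*} \geq t_m$; and equality is ruled out because line \ref{line:ancestorsfirst} prevents $v^*$ from ever investing into a vertex already in $V^E$ at the moment of investment, whereas the Case 2 hypothesis $v \in I_{v^*, j}$ certifies that $v^*$ did invest into $v$ at $t_{j^*}$. Once $t_{j^*} > t_m$ is secured, ALG serving $\rho_j$ at or before $t_m$ would again contradict $\rho_j$ being pending at $t_{j^*}$.

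Chaining these arguments in both cases yields $a_{\rho_j} > t_m \geq t_i$, so the intervals $[a_{\rho_l}, d_{\rho_l}]$ across all unanticipated occurrences of a fixed $v$ are pairwise disjoint. Since each OPT transmission occurs at a single time, disjoint intervals contain disjoint sets of OPT services that include $v$, and assigning one per charge gives the required one-to-one mapping that establishes Lemma \ref{lem:uniquecharges}.
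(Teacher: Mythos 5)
Your proposal is correct and takes essentially the same route as the paper's proof: both reduce to collisions at a single vertex $v$, show that the later associated request must arrive strictly after the earlier charge window closes, and obtain this by contradiction from the $next_v$ bookkeeping --- if that request had already arrived and were still unserved at the last time $v$ was an expansion vertex, then $next_v$ would have been set to at most $d_{\rho} \leq t_j$, contradicting $v \in \bar{A}_j$. Your case split on the two charge types and the sub-claim $t_{j^*} > t_m$ (via connectivity of $V^E$ and the fact that a vertex never invests into a current expansion vertex) correspond directly to the paper's Observation \ref{ob:nextvsd} and its handling of $t_{i^*}$ in Case 2.
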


\begin{proof}
    Seeking a contradiction, suppose instead that two distinct charges for a vertex $v \in \bar{A}_i$ and $v \in \bar{A}_j$ coincide, say at times $t_j$ and $t_i$ where $j < i$. We will do a case analysis based on whether the charge at $t_i$ is Case 1 or Case 2. Suppose that $\rho_j$ and $\rho_i$ are the requests \emph{associated} (as defined above) to $v$ at times $t_j$ and $t_i$ respectively. The charges to OPT's history must have taken place between times $[a_{\rho_j}, t_j]$ and $[a_{\rho_i}, t_i]$. Thus, they could only coincide if $a_{\rho_i} < t_j$ (utilizing our assumption of distinct arrivals and deadlines for distinct services). In words, this means $\rho_i$ was present when $v$ was an unanticipated expansion vertex at time $t_j$. This leads to the following observation used throughout the proof.

    \begin{observation} \label{ob:nextvsd}
        For any $t_k > t_j$, if $\rho_i$ was not served before $t_k$, we must have $next_{v, k} \leq d_{\rho_i}$.
        
    \end{observation}
    
    We split the remainder of the proof into the two cases depending on the type of charging for $v$ which occurs at time $t_i$. We will show that both lead to contradiction.
    
    \underline{\textbf{Case 1}}. In this scenario, $v$ is on the critical path due to the expiry of $\rho_i$, meaning $\rho_i$ was not served before time $t_i$, and so $k=i$ satisfies the conditions required for Observation \ref{ob:nextvsd}. Thus $next_{v, i} \leq d_{\rho_i}$. As $d_{\rho_i} = t_i$, this would imply $next_{v, i} \leq t_i$ which would contradict the fact that $v \in \bar{A}_j$.

    \underline{\textbf{Case 2}}. This means at time $t_i$, for some ancestor $v^* \in A_i$ of $v$, that $v \in I_{v^*, i}$. Recall by definition this means that the last time $v^*$ was an expansion node prior to $t_i$, at time $t_{i^*} < t_i$, $\rho_i$ must have been active where $v_{\rho_i} \in \mathcal{T}_{v}$. We know that $v^* \in V^E_i$, as $v \in V^E_i$ and $V^E_i$ is a subtree by Observation \ref{obs:expectedsubtree}. Therefore it must be true that $t_{i^*} \geq t_j$, as $t_{i^*}$ is defined as the latest time before $t_j$ that $v^*$ was an expansion vertex. More strictly, $t_{i^*} > t_j$ as it is impossible for $v^*$ to invest into $v$ at time $t_j$ because $v \in V^E_j$, i.e it is an expansion vertex at time $t_j$. Hence we know $t_j < t_{i^*} < t_i$.

    We know that $\rho_i$ must not have been served before time $t_{i^*}$, because it was this service which caused $v^*$ to invest into $v$ at time $t_{i^*}$, by definition of Case 2 charge. Hence by Observation \ref{ob:nextvsd}, we know that $next_{v, i^*} \leq d_{\rho'}$ as $t_{i^*} > t_j$. Observe further that $v$ could not have been an expansion vertex at time $t_{i^*}$, otherwise $v^*$ would not have invested into $v$ at $t_{i^*}$. Therefore the variable $next_v$ would not be affected at time $t_{i^*}$, meaning $next_{v, i^*} = next_{v, i^*}'$ and thus $next_{v, i^*} \leq d_{\rho_i}$ still. It is impossible for $v$ to be an expansion vertex within the time interval $(t_{i^*}, t_i)$, because this would imply that $v^*$ would be an expansion vertex within $(t_{i^*}, t_i)$ by Observation \ref{obs:expectedsubtree}. Thus it follows that the variable $next_{v}$ is not changed in the interval $(t_{i^*}, t_i)$, and so $next_{v, i} \leq d_{\rho_i}$.
    
    However, from the discussion in the Case 2 charge definition, we already know that $d_{\rho_i} \leq t_i$, and so $next_{v, i} \leq d_{\rho_i} \leq t_i$. This contradicts the fact that $v \in 
    \bar{A}_i$, thereby finishing the proof.
\end{proof}

\section{The $(4H+2)e$-competitive Algorithm} \label{sec:4eH}

In this section, we prove Theorem \ref{thm:4eH}.


We show that Algorithm \ref{alg:eDalg} can be modified in a way which results in the competitive ratio being bounded above by a linear function of the caterpillar dimension $H$ of $\mathcal{T}$. Throughout this section, we assume that we are given a heavy path decomposition $\cP$ for $\cT$ with dimension $H$.

This section builds upon the work from Section \ref{sec:eD}, and thus we assume familiarity with Algorithm \ref{alg:eDalg} and the notations and definitions introduced previously. We also introduce some new pieces of notation.

\begin{definition}[$P(v)$ and $p(v)$]
    Let $v$ be a vertex in $\cT$. Let $P(v)$ denote the path which $v$ lies on in $\cP$. Let $p(v)$ be the intersection of the path $P(v)$ and the path from $r$ to $v$.
\end{definition}

\begin{algorithm}[h!]
\caption{$(4H+2)e$-competitive algorithm for MLAP-D}\label{alg:heavyalg}
\begin{algorithmic}[1]

\State --------------- The following procedures are identical to Algorithm \ref{alg:eDalg} ---------------
\Procedure{Initialization}{}
\EndProcedure
\Procedure{OnDeadline}{}
\EndProcedure
\Procedure{Invest}{$v$, $v'$, $V^I$}
\EndProcedure
\Procedure{ExpansionStage}{$V^E$, $v$}
\EndProcedure
\State
\Procedure{ModifiedInvestmentStage}{$V^I$, $V^E$, $v$} \Comment{Modified sections are \textcolor{\colorforcode}{colored}}

    \For{child nodes $v'$ of $v$ in $V^E$} \Comment{Invest for deeper nodes first}
            \State $V^I \gets $\textsc{ModifiedInvestmentStage}($V^I$, $V^E$, $v'$)
    \EndFor
    \State $I_v \gets \varnothing$  \label{line:init-0} 
    \Comment{Begin investing into new set of nodes}
        
    \State \textcolor{\colorforcode}{Let $v^\downarrow_{P(v)}$ denote the deepest vertex in $V^E \cap P(v)$}

    \If {\textcolor{\colorforcode}{$v = v^\downarrow_{P(v)}$}} \Comment{\textcolor{\colorforcode}{$\theta_1 > 0$ is parameter, for best competitive ratio set $\theta_1 = 2H+1$}}
        \State \textcolor{\colorforcode}{$b_v \gets \theta_1 \cdot c(p(v))$} \label{alg2:line-17}
    \Else \Comment{\textcolor{\colorforcode}{$\theta_2 > 0$ is parameter, for best competitive ratio set $\theta_2 = 2H$}}
        \State \textcolor{\colorforcode}{$b_v \gets \theta_2 \cdot c(v)$} 
    \EndIf

    \While{$b_v > 0$ and there are pending requests on vertices in $V(\cT_v) \setminus (V^E \cup V^I)$} \label{alg:loop}
        \State Let $\rho'$ be the pending request in $V(\cT_v) \setminus (V^E \cup V^I)$ with the earliest deadline.
        \State Let $v'$ be first vertex on path from $v$ to $v_{\rho'}$ not in $V^E \cup V^I$ \Comment{$v' \in V(\cT_v)$} 
        \If{\textcolor{\colorforcode}{$v \neq v^\downarrow_{P(v)}$ but $v' \in \mathcal{T}_{v^\downarrow_{P(v)}}$}} \label{line:cancelinvesting}
            \State \textcolor{\colorforcode}{$I_v \gets I_{v^\downarrow_{P(v)}}$} \Comment{\textcolor{\colorforcode}{Utilize investments from $v^\downarrow_{P(v)}$ instead}}
            \State \textcolor{\colorforcode}{Escape while-loop}
        \EndIf
        \State $V^I \gets \textsc{Invest}(v, v', V^I)$
    \EndWhile
    
    \State $next_v \gets \min \big(\{d_{\rho} : v_\rho \in V(\cT_v) \setminus (V^E \cup V^I) \} \cup  \{\infty \}\big)$
    \State \Return $V^I$
\EndProcedure

\end{algorithmic}
\end{algorithm}

The motivation behind Algorithm \ref{alg:heavyalg} is that if we wish for the competitive ratio to not be parameterized by the depth of $\mathcal{T}$, we cannot follow the accounting method of Lemma \ref{lem:coretounexpected}, because this explicitly descends the tree node-by-node and invariably leads to a function of $D$. If we wish to parameterize instead in terms of $H$, then one would need the accounting method to be performed path-by-path instead.

Algorithm \ref{alg:heavyalg} accomplishes this aim by allowing the deepest expansion node at time $t_i$ on each path $P \in \mathcal P$, which we will denote $v^\downarrow_{P, i}$, to potentially have a much larger budget to spend during the investment stage in comparison to Algorithm \ref{alg:eDalg}. Instead of the budget for $v^\downarrow_{P, i}$ being $\theta_1 \cdot c(v^\downarrow_{P, i})$ for some $\theta_1$ as was the case in Algorithm \ref{alg:eDalg}, for these nodes the budget at time $t_i$ becomes $\theta_1 \cdot c(p(v^\downarrow_{P, i}))$. Our algorithm will specified in terms of two parameters, $\theta_1$ and $\theta_2$, and we will choose their values to optimize our bound on the competitive ratio during the analysis. Additionally, $v^\downarrow_{P, i}$ is made to be the only vertex on $P(v^\downarrow_{P, i})$ which can make investments on vertices in $\mathcal{T}_{v^\downarrow_{P, i}}$. If $v \in P$ is not the deepest expansion node on $P(v)$, and if the next unserved request is within $\mathcal{T}_{v^\downarrow_{P(v), i}}$, then we stop making investments for $v$, which is done in line \ref{line:cancelinvesting}. The idea is that because $v^\downarrow_{P, i}$ has a disproportionally large budget, we can simply use the investments from $v^\downarrow_{P, i}$ for $v$ as well. 

\begin{definition}[$v^\downarrow_{P, i}$, Disproportional vertices, recently disproportional vertices]
We define $v^\downarrow_{P, i}$ to be the deepest node in $P \cap V^E_i$, where $P \in \mathcal P$. We say a vertex $v$ is \emph{disproportional} at time $t_i$ if $v = v_{P(v), i}^\downarrow$. That is, $v$ is the deepest expansion vertex at time $t_i$ in $P(v)$. Let $v \in A_i$, and let $t_j < t_i$ be the latest time before $t_i$ that $v$ was an expansion vertex (which must exist due to Lemma \ref{lemma:expectedafterexpansion}). Then we say $v$ was \emph{recently disproportional} at time $t_i$ if $v$ was disproportional at time $t_j$, i.e $v = v_{P(v), j}^\downarrow$.
\end{definition}




Similar to the proof of Theorem \ref{thm:eD}, our analysis is split into two parts. First, we express $\textsc{alg}$ in terms of the unanticipated nodes. The following lemma is proved in Section \ref{app:pf-am-h}.

\begin{restatable}{lemma}{lemamh} \label{lem:amortized-h}
    \[\textsc{alg} \leq \left(1+\frac{1}{\theta_1}\right)^{H+1} \left(1+\frac{1}{\theta_2}\right)^{H} (1+\theta_1+\theta_2)\sum_{i=1}^{\ell} c(\bar{A}_i).\]
\end{restatable}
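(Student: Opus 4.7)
The plan is to mirror the two-step strategy of Lemma \ref{lem:amortized}: first express $\textsc{alg}$ in terms of $\sum_i c(V_i^E)$, then redistribute the expansion cost onto unanticipated vertices. The key conceptual change is that the redistribution descends the tree heavy-path-by-heavy-path rather than node-by-node, which converts the depth exponent $D$ of Lemma \ref{lem:coretounexpected} into the caterpillar-dimension exponent $H$.

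For the first step I will follow inequality \eqref{eq:alltoexpansion}. At time $t_i$ the investment cost $c(V_i^I)$ is at most the total budget allocated across all expansion vertices. Since $V_i^E$ is a subtree rooted at $r$ (Observation \ref{obs:expectedsubtree}) and any two vertices on a heavy path lie in an ancestor--descendant relation, $V_i^E \cap P$ forms a contiguous top-portion of each heavy path $P \in \mathcal{P}$, so $p(v_{P,i}^\downarrow) = V_i^E \cap P$ whenever this intersection is non-empty. Hence the budget allocated on $P$ is at most $\theta_1 c(V_i^E \cap P) + \theta_2 \sum_{v \in (V_i^E \cap P) \setminus \{v_{P,i}^\downarrow\}} c(v) \leq (\theta_1+\theta_2) c(V_i^E \cap P)$, and summing over all $P \in \mathcal{P}$ yields $\textsc{alg} \leq (1+\theta_1+\theta_2) \sum_{i} c(V_i^E)$.

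The second step is the analog of Lemma \ref{lem:coretounexpected}. I will first adapt Lemma \ref{lem:genericusedbudgets} into three sub-statements, indexed by the state of an anticipated vertex $v \in A_i$ at its most recent expansion time $t_j < t_i$: (i) if $v$ was disproportional at $t_j$, then $c(I_{v,i}) \geq \theta_1 c(V_j^E \cap P(v))$, because the Lemma \ref{lem:genericusedbudgets} argument carries over verbatim once we note that disproportional vertices cannot trigger the early exit; (ii) if $v$ was non-disproportional at $t_j$ and the while loop exhausted the budget, then $c(I_{v,i}) \geq \theta_2 c(v)$; (iii) if $v$ was non-disproportional and did exit early at line \ref{line:cancelinvesting}, then $I_{v,i}$ equals $I_{v_{P(v),j}^\downarrow, j}$, so $v$ can piggyback its redistribution on that of the disproportional vertex of its heavy path. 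Armed with these, the accounting argument processes heavy paths in root-to-leaf order, maintaining the invariant that when a heavy path $P$ is ready to redistribute, each vertex $w$ on $P$ carries weight at most $(1+1/\theta_1)^{h}(1+1/\theta_2)^{h}c(w)$, where $h$ is the number of heavy paths already crossed on the way from $r$ to $P$. Processing $P$ introduces one additional factor of $(1+1/\theta_2)$ to clear its anticipated non-disproportional vertices via (ii)--(iii), and then one factor of $(1+1/\theta_1)$ to clear its anticipated disproportional vertex, which by (i) carries enough investment to push the entire collected weight of $V_i^E \cap P$ onto its investments on strictly deeper heavy paths. Since at most $H$ heavy paths are crossed along any root-leaf chain, the multiplier is at most $(1+1/\theta_1)^{H+1}(1+1/\theta_2)^H$, where the extra factor of $(1+1/\theta_1)$ comes from a boundary step at the topmost path containing $r$, which must be redistributed before any descent has been paid for.

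The main obstacle will be handling sub-statement (iii) inside the path-by-path accounting. When a non-disproportional anticipated vertex $v$ inherits $I_{v_{P(v),j}^\downarrow, j}$ rather than spending its own budget, $c(v)$ cannot be redistributed through $v$'s own investments, so it must be rerouted onto the same recipients that the disproportional vertex of $P(v)$ is redistributing to. Verifying that this simultaneous piggy-backing stays within the disproportional budget $\theta_1 c(V_j^E \cap P(v))$ and does not over-count any particular investment recipient is the delicate piece of bookkeeping; making this precise is what couples the disproportional and inherited redistributions and ultimately pins down the factor $(1+1/\theta_1)^{H+1}$ rather than $(1+1/\theta_1)^H$ in the final bound.
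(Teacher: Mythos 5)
Your architecture matches the paper's: the same $(1+\theta_1+\theta_2)$ reduction to expansion vertices, investment lower bounds corresponding to Lemmas \ref{lem:disproportionalvertices} and \ref{lem:nodisproportionalbelow}, and a path-by-path redistribution contributing one factor of $\left(1+\frac{1}{\theta_1}\right)\left(1+\frac{1}{\theta_2}\right)$ per level of the heavy path tree. However, the two points you defer as ``delicate bookkeeping'' are exactly where the substantive content of the paper's proof lies, and your plan does not contain the ideas needed to discharge them. The over-counting problem is not confined to your case (iii): even among non-disproportional anticipated vertices $v,v'$ on the same heavy path that each exhausted their own budgets, the sets $I_{v,i}$ and $I_{v',i}$ can intersect, so $c\big(\bigcup_{v} I_{v,i}\big) \geq \theta_2 \sum_{v} c(v)$ does not follow from the per-vertex bounds of your statement (ii). The paper's resolution is Claim \ref{claim:simultaneous} inside Lemma \ref{lem:descendants}: if two such investment sets overlap, the two vertices must have last been expansion vertices at the \emph{same} time (otherwise the shallower one could only have reached the common investment target by triggering line \ref{line:cancelinvesting}, which would produce a recently disproportional vertex below it and contradict the hypothesis), and simultaneous investments are jointly lower-bounded by the combined budget spent in that single timestep. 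This simultaneity argument is the missing idea.

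Two further issues concern termination of the redistribution. Your induction pushes weight ``onto investments on strictly deeper heavy paths,'' but for a leaf $P$ of the heavy path tree every $I_{v,i} \subseteq V(\mathcal{T}_v)\setminus\{v\}$ lies on $P$ itself, so the reallocated cost can land back on vertices of $P$ and the process does not obviously terminate; the paper needs Lemma \ref{lem:leafcase} (on a leaf path the deepest anticipated vertex is recently disproportional, so after the $\theta_1$-step all recipients on $P$ are unanticipated) to close this. Relatedly, your within-path ordering --- clear the non-disproportional vertices first, then the disproportional one --- can re-contaminate: $I_{v^*,i}$ may contain anticipated vertices of $P$ strictly below $v^*$, which would receive fresh cost after you have already cleared them. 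The paper's ordering (Stage 1 absorbs everything on $p(v^*)$ into $v^*$'s investments first, then the $\theta_2$-stage clears the anticipated vertices below $v^*$, whose investment recipients are provably off $P$) is what makes a single pass per path suffice. These are all repairable within your framework, but as written the proposal identifies the obstacles without supplying the arguments that overcome them.
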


Second, we use the same charging argument for Lemma \ref{lem:optlowerbound} to show that the costs of unanticipated expansion vertices across time can be charged directly to OPT without double-charging. Combining Lemmas \ref{lem:amortized-h} and \ref{lem:optlowerbound} implies
\[\textsc{alg} \leq \left(1+\frac{1}{\theta_1}\right)^{H+1} \left(1+\frac{1}{\theta_2}\right)^{H} (1+\theta_1+\theta_2)\textsc{opt}.\]
The best choice is $\theta_1 = 2H+1$ and $\theta_2 = 2H$ which implies Theorem \ref{thm:4eH}. The proof is provided in Appendix \ref{app:pf-4eH}.

Crucially, Algorithm \ref{alg:heavyalg} has built upon Algorithm \ref{alg:eDalg} using an approach which allows the 
key lower bound on \textsc{opt} provided by Lemma \ref{lem:optlowerbound} to continue to hold. To see why the analysis of Section \ref{subsec:charge} can be applied identically to Algorithm \ref{alg:heavyalg}, observe that Observations \ref{obs:expectedsubtree}, \ref{obs:deeperinvestments}, \ref{obs:ancestorsincluded} and \ref{obs:increasingnextupwards} are also true for Algorithm \ref{alg:heavyalg}. Then, because Section \ref{subsec:charge} only utilizes these observations rather than any features particular to either algorithm, we see that Lemma \ref{lem:optlowerbound} provides a robust lower bound for \textsc{opt}. Therefore, Algorithm \ref{alg:eDalg} can be used as a generic framework for MLAP-D algorithms.

\subsection{Upper-bounding ALG} \label{app:pf-am-h}

In this section, we prove Lemma \ref{lem:amortized-h}.

We begin by observing that the inclusion of each expansion node $v \in V^E_i$ adds at most $(\theta_1 + \theta_2)c(v)$ to additional budgets towards investments in total; a contribution of $\theta_2c(v)$ to the investment stage of $v$ and a contribution of $\theta_1 c(v)$ to the budget of $v^\downarrow_{P(v)}$. Thus we have 
\begin{equation*}
    (\theta_1 + \theta_2) \cdot \sum_{i=1}^{\ell} c(V^E_i) \geq  \sum_{i=1}^{\ell} c(V_i^I)
\end{equation*}
which yields 
\begin{equation}  \label{lem:alltoexpansion-h}
    \textsc{alg} = \sum_{i=1}^{\ell} c(T_i) = \sum_{i=1}^{\ell} [c(V_i^E) + c(V_i^I)] \leq \sum_{i=1}^{\ell} [c(V_i^E) + (\theta_1+\theta_2)c(V_i^E)] = (1+\theta_1+\theta_2)\sum_{i=1}^{\ell} c(V_i^E).
\end{equation}

Before proceeding, we define the concept of a \emph{heavy path trees} to simplify our analysis.

\begin{definition}[Heavy path tree]
    Given a heavy path decomposition $\cP$ for $\cT=(V,E)$, the heavy path tree $\cH$ is such that (1) $V(\cH) = \cP$, (2) the root of $\cH$ is the element in $\cP$ that contains the root vertex $r$, and (3) $E(\cH) = \{(P_1,P_2) \mid P_1, P_2 \in \cP \text{ and } \exists u \in P_1, v \in P_2 \text{ such that } (u,v) \in E\}$.
\end{definition}

Following the example in Figure \ref{fig:hpd}, the root is the $r$-$v_i$ path, and the children of the root are the $v_d$-$v_h$ and $v_b$-$v_g$ paths and the singleton paths $v_j$, $v_f$, and $v_c$. We note that the depth of the heavy path tree is the same as the dimension of the heavy path decomposition.

The next two lemmas provide lower bounds on investments, depending on whether there exists a recently disproportional vertex on a given path $P(v)$.

\begin{lemma} \label{lem:disproportionalvertices}
    For any $t_i$, let $v^* \in A_i$ be a recently disproportional vertex. Then $c(I_{v^*, i}) \geq \theta_1 \cdot c(p(v^*))$. \end{lemma}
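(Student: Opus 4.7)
The plan is to follow the template of the proof of Lemma \ref{lem:genericusedbudgets}, adapted to account for the new budget assignment on line \ref{alg2:line-17} and the escape branch on line \ref{line:cancelinvesting} of Algorithm \ref{alg:heavyalg}. First, I would invoke Lemma \ref{lemma:expectedafterexpansion} on $v^* \in A_i$ to conclude that $v^*$ was an expansion vertex at some earlier time; let $t_j < t_i$ denote the most recent such time. The hypothesis that $v^*$ is recently disproportional at time $t_i$ says precisely that $v^* = v^\downarrow_{P(v^*), j}$, so when \textsc{ModifiedInvestmentStage} is invoked at $v^*$ at time $t_j$, the if-branch is taken and the budget is set to $b_{v^*} = \theta_1 \cdot c(p(v^*))$.

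The next step is to show that the entirety of this budget is spent during the investment stage at $t_j$. Because $next_{v^*}$ is only modified when $v^*$ is an expansion vertex, we have $next_{v^*, j}' = next_{v^*, i} \le t_i$, which in particular is finite. Had the while loop on line \ref{alg:loop} terminated due to the absence of pending requests in $V(\cT_{v^*}) \setminus (V^E \cup V^I)$, then the $\min$ computing $next_{v^*}$ would have evaluated to $\infty$, contradicting finiteness. Moreover, because $v^* = v^\downarrow_{P(v^*), j}$, the condition $v \neq v^\downarrow_{P(v)}$ on line \ref{line:cancelinvesting} fails for $v^*$ at time $t_j$, so the special escape is never invoked and no budget is abandoned via that route. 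The only remaining exit condition is $b_{v^*} = 0$; hence the full budget of $\theta_1 \cdot c(p(v^*))$ is indeed spent.

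Finally, I would translate this expenditure into a lower bound on $c(I_{v^*, j}')$ using the bookkeeping inside \textsc{Invest}: each call decreases $b_{v^*}$ by $\Delta b_{v^*} = \min\{b_{v^*}, \ell_{v'}\} \le c(v')$ and inserts a new vertex $v'$ into $I_{v^*}$ (distinct across iterations, because once $\ell_{v'}=0$ the vertex $v'$ is placed into $V^I$ and can no longer be selected by the while loop). Summing over all \textsc{Invest} calls at $v^*$ during $t_j$ yields $c(I_{v^*, j}') \ge \theta_1 \cdot c(p(v^*))$. Since $v^*$ is not an expansion vertex at any time strictly between $t_j$ and $t_i$ (by maximality of $t_j$), the variable $I_{v^*}$ is not modified in that interval, so $c(I_{v^*, i}) = c(I_{v^*, j}') \ge \theta_1 \cdot c(p(v^*))$, as required.

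The only genuinely new obstacle relative to Lemma \ref{lem:genericusedbudgets} is ruling out that the newly introduced escape on line \ref{line:cancelinvesting} short-circuits the ``full budget spent'' step; this is resolved by noting that the escape is disabled precisely on the deepest expansion vertex of each path, which is exactly $v^*$ under the recently disproportional hypothesis at time $t_j$.
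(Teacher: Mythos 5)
Your proposal is correct and follows essentially the same route as the paper: both hinge on the observation that the recently-disproportional hypothesis makes $v^*$ the deepest expansion vertex on $P(v^*)$ at $t_j$, so the escape branch on line \ref{line:cancelinvesting} is never entered and the budget $\theta_1\cdot c(p(v^*))$ assigned on line \ref{alg2:line-17} must be fully spent. The only difference is presentational: the paper, having noted that the while loop then behaves identically to that of Algorithm \ref{alg:eDalg}, simply cites Lemma \ref{lem:genericusedbudgets}, whereas you re-derive its content (full budget spent, expenditure bounded by $c(I_{v^*,j}')$, and $I_{v^*}$ unchanged on $(t_j,t_i)$) inline.
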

\begin{proof}
    Suppose we have such a $v^*$. Let $t_j$ be the most recent time before $t_i$ that $v^*$ was an expansion vertex. Then because $v^*$ was disproportional at time $t_j$, we know that at $t_j$, $v^*$ received a budget of $\theta_1 \cdot c(p(v^*))$. Observe that as $v^*$ is disproportional at time $t_j$, that at $t_j$ the while loop of Algorithm \ref{alg:heavyalg} is effectively identical to the while loop of Algorithm \ref{alg:eDalg}, because the if-block at line \ref{line:cancelinvesting} will never be entered. Therefore, we can apply Lemma \ref{lem:genericusedbudgets}, and hence $c(I_{v^*, i}) \geq \theta_1 \cdot c(p(v^*))$.
\end{proof}

\begin{lemma} \label{lem:nodisproportionalbelow}
    For any $t_i$, let $v \in A_i$ be a vertex such that there are no recently disproportional vertices in $P(v) \cap \mathcal T_v$ at time $t_i$. Then $c(I_{v, i}) \geq \theta_2 \cdot c(v)$,
\end{lemma}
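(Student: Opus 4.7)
The plan is to analyze $v$'s investment stage at $t_j$, the latest time before $t_i$ at which $v$ was an expansion vertex; such a $t_j$ exists by Lemma \ref{lemma:expectedafterexpansion}, and $I_{v,i} = I_{v,j}'$ since $I_v$ is modified only while $v$ is an expansion vertex. Because $v \in P(v) \cap \cT_v$, the hypothesis forbids $v$ itself from being recently disproportional at $t_i$, so by definition $v$ was not disproportional at $t_j$. Hence $v$'s budget assigned at $t_j$ was $\theta_2 c(v)$, and I would then split on whether the cancellation if-block at line \ref{line:cancelinvesting} ever fired during that investment stage.

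If the if-block never fired, $v$'s while loop is indistinguishable from the loop in Algorithm \ref{alg:eDalg}, so the accounting of Lemma \ref{lem:genericusedbudgets} applies verbatim. In particular, $next_{v,j}' = next_{v,i} \leq t_i < \infty$ forces the loop to exit with $b_v = 0$ (else line \ref{alg:setnextinf} would have assigned $next_v = \infty$), so $v$ spent its full budget $\theta_2 c(v)$ across \textsc{Invest} calls to distinct deeper vertices, and $c(I_{v,j}') \geq \theta_2 c(v)$ follows.

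In the remaining case the if-block fired at some iteration, so with $v^\downarrow := v^\downarrow_{P(v), j}$ strictly deeper than $v$ on $P(v)$, the algorithm overwrote $I_v$ with $I_{v^\downarrow, j}'$. The key sub-claim is that $v^\downarrow$ must have exhausted its full budget $\theta_1 c(p(v^\downarrow))$ during its own (earlier-in-the-recursion) stage at $t_j$: otherwise $v^\downarrow$'s loop would have exited because no pending request remained in $V(\cT_{v^\downarrow}) \setminus (V^E \cup V^I)$, and since $V^E \cup V^I$ only grows as the recursion unwinds back up to $v$, no such request would exist at the start of $v$'s stage either, contradicting that the if-block fired for $v$ (which requires precisely such a witness). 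Since $v^\downarrow = v^\downarrow_{P(v^\downarrow), j}$, the if-block cannot fire inside $v^\downarrow$'s own loop, and Lemma \ref{lem:genericusedbudgets}-style bookkeeping applied to $v^\downarrow$ yields $c(I_{v^\downarrow, j}') \geq \theta_1 c(p(v^\downarrow))$. Finally, $v$ lies on $P(v^\downarrow) = P(v)$ and is an ancestor of $v^\downarrow$, so $v \in p(v^\downarrow)$, whence $c(p(v^\downarrow)) \geq c(v)$; combining, $c(I_{v,i}) = c(I_{v^\downarrow, j}') \geq \theta_1 c(v) \geq \theta_2 c(v)$, using $\theta_1 \geq \theta_2$ at the optimized values $\theta_1 = 2H+1$ and $\theta_2 = 2H$.

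The main obstacle I anticipate is the sub-claim in the triggered case. Transferring ``budget exhausted'' from $v^\downarrow$'s earlier-in-recursion stage up to the moment the if-block fires in $v$'s stage relies on the monotonicity of $V^E \cup V^I$ within a single \textsc{OnDeadline} call: the investment stages of other descendants of $v$ executed between $v^\downarrow$'s and $v$'s can only insert vertices into $V^I$, never remove them, so the witness pending request in $V(\cT_{v^\downarrow}) \setminus (V^E \cup V^I)$ that ultimately triggers the if-block for $v$ must already have been available while $v^\downarrow$'s own loop was running.
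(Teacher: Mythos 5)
Your proof is correct, and the first case (the if-block at line \ref{line:cancelinvesting} never fires) coincides with the paper's argument, but you diverge genuinely on the second case. The paper shows that the triggered-if-block case simply cannot occur under the lemma's hypothesis: if $I_{v,j}'$ were overwritten with $I_{v^\downarrow_{P(v),j},j}'$, then during the expansion stage at $t_i$ the vertex $v^\downarrow_{P(v),j}$ would be pulled into $V^E_i$ and would be recently disproportional at $t_i$ inside $P(v) \cap \cT_v$, contradicting the assumption on $v$; hence only your first case remains and the bound follows for \emph{every} $\theta_1, \theta_2 > 0$. You instead handle the triggered case head-on: you argue (correctly, via the monotonicity of $V^E \cup V^I$ and the absence of new arrivals within a single \textsc{OnDeadline} call) that $v^\downarrow_{P(v),j}$ must have exhausted its budget $\theta_1 c(p(v^\downarrow_{P(v),j}))$, and since $v \in p(v^\downarrow_{P(v),j})$ you get $c(I_{v,i}) \geq \theta_1 c(v)$. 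What this buys you is a self-contained bound that does not need to detect the contradiction with the hypothesis; what it costs you is the extra assumption $\theta_1 \geq \theta_2$, which the lemma as stated does not impose and which the paper's version avoids. Since the optimized values $\theta_1 = 2H+1 \geq 2H = \theta_2$ satisfy your constraint and the unconstrained optimum in Appendix \ref{app:pf-4eH} lies in that region, the downstream results are unaffected, but you should either add the hypothesis $\theta_1 \geq \theta_2$ to your version of the lemma or note that the second case is in fact vacuous under the stated assumptions.
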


\begin{proof}
    As $v \in A_i$, by Lemma \ref{lemma:expectedafterexpansion}, there must have been a time $t_j$ prior to $t_i$ for which $v$ was an expansion vertex. Let $t_j$ be the latest such time, so $I_{v, i} = I_{v, j}'$. We claim that during the investment stage for $v$ at $t_j$, the if-block at line \ref{line:cancelinvesting} is never entered. This follows quickly, for if it were entered then $I_{v, j}' = I_{v^\downarrow_{P(v), j}, j}'$, which would imply $v^\downarrow_{P(v), j}$ would be added to $V_i^E$ during the expansion stage. However, $v^\downarrow_{P(v), j}$ would be a recently disproportional vertex at time $t_i$ within $\mathcal T_v \cap P$ which would contradict the definition of $v$. Now, if the if-block at line \ref{line:cancelinvesting} is never entered, then the investment stage is effectively identical to that of Algorithm \ref{alg:eDalg}. Therefore, we can use Lemma \ref{lem:genericusedbudgets}. The budget assigned to $v$ at $t_j$ must be $\theta_2 \cdot c(v)$ because $v$ is not disproportional at time $t_j$, which therefore implies that $c(I_{v, i}) \geq \theta_2 \cdot c(v)$.
\end{proof}

The following lemma generalizes Lemma \ref{lem:nodisproportionalbelow} to sets of vertices.

\begin{lemma} \label{lem:descendants}
    Fix a path $P \in \mathcal P$. Let $\mathcal U \subseteq A_i \cap P$ be such that there exists no recently disproportional vertices in $P$ below any vertex in $\mathcal U$. Then $c(\bigcup_{v \in \mathcal U} I_{v, i}) \geq \theta_2 \cdot c(\mathcal U)$.
\end{lemma}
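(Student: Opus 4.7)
The plan is to establish the lemma by induction on $|\mathcal U|$, ordering the vertices of $\mathcal U = \{v_1, \ldots, v_k\}$ from deepest to shallowest on $P$. The base case $k = 1$ is immediate from Lemma \ref{lem:nodisproportionalbelow}. For the inductive step I aim to construct, for each $m \geq 1$, a subset $J_{v_{m+1}} \subseteq I_{v_{m+1}, i}$ of cost at least $\theta_2 \cdot c(v_{m+1})$ that is disjoint from $\bigcup_{j \leq m} I_{v_j, i}$; combined with the inductive hypothesis this yields
\[
    c\Bigl(\bigcup_{j \leq m+1} I_{v_j, i}\Bigr) \;\geq\; c\Bigl(\bigcup_{j \leq m} I_{v_j, i}\Bigr) \;+\; c(J_{v_{m+1}}) \;\geq\; \theta_2 \sum_{j \leq m+1} c(v_j).
\]

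The key structural input comes from opening up the proof of Lemma \ref{lem:nodisproportionalbelow}. Because no recently disproportional vertex lies below $v_{m+1}$ on $P$, the if-block at line \ref{line:cancelinvesting} is never entered during $v_{m+1}$'s investment stage at its last expansion time $t_{j_{v_{m+1}}}$, and $v_{m+1}$ itself is not disproportional at $t_{j_{v_{m+1}}}$. Consequently $u := v^{\downarrow}_{P, j_{v_{m+1}}}$ is well-defined and strictly deeper than $v_{m+1}$ on $P$, and inspecting each iteration of the while-loop together with Observation \ref{obs:deeperinvestments} forces $I_{v_{m+1}, i} \subseteq \mathcal T_{v_{m+1}} \setminus \mathcal T_u$. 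Since $V^E_{j_{v_{m+1}}} \cap P$ is a contiguous subpath of $P$ whose deepest element is $u$, exactly one of two cases arises: either $v_m \notin V^E_{j_{v_{m+1}}}$, so $u$ is a strict ancestor of $v_m$ on $P$ and $\mathcal T_{v_m} \subseteq \mathcal T_u$; or $v_m \in V^E_{j_{v_{m+1}}}$, so $u$ lies at or below $v_m$.

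In the first case the argument is clean: $I_{v_{m+1}, i}$ is disjoint from $\mathcal T_{v_m}$ and therefore from $\bigcup_{j \leq m} I_{v_j, i} \subseteq \mathcal T_{v_m}$ (by Observation \ref{obs:deeperinvestments} and the nesting $\mathcal T_{v_1} \subseteq \cdots \subseteq \mathcal T_{v_m}$), so choosing $J_{v_{m+1}} := I_{v_{m+1}, i}$ and invoking Lemma \ref{lem:nodisproportionalbelow} suffices. The main obstacle is the second case: here $v_m$'s own investment stage at $t_{j_{v_{m+1}}}$ runs before $v_{m+1}$'s (by the depth-first order in \textsc{ModifiedInvestmentStage}), so portions of $I_{v_{m+1}, i}$ may genuinely land inside $\mathcal T_{v_m}$ and overlap with investments from $\bigcup_{j \leq m} I_{v_j, i}$. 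To resolve this I plan to restrict $J_{v_{m+1}}$ to the vertices added to $I_{v_{m+1}}$ during iterations whose pending request lies outside $\mathcal T_{v_m}$, and then argue via the $\Delta b_v \leq c(w)$ bookkeeping in the proof of Lemma \ref{lem:nodisproportionalbelow} that these iterations still consume the full budget $\theta_2 \cdot c(v_{m+1})$; the intuition is that whenever $v_{m+1}$'s while loop selects a request in $\mathcal T_{v_m}$, the first vertex not in $V^E \cup V^I$ on the chosen path must lie strictly below $v_m$ (since $v_m \in V^E$) and hence deeper than anything $v_m$'s completed stage at the same $t_{j_{v_{m+1}}}$ left unserved, which tightly limits how much of $v_{m+1}$'s budget can be diverted into $\mathcal T_{v_m}$. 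Making this bookkeeping fully rigorous, and in particular reconciling $I_{v_m, j_{v_{m+1}}}'$ with $I_{v_m, i}$ when $v_m$ is later an expansion vertex between $t_{j_{v_{m+1}}}$ and $t_i$, is the principal technical hurdle.
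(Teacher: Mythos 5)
Your plan runs into a genuine obstruction in your Case 2, and the inductive step as you have formulated it cannot be carried out there. First note that Case 2 forces $v_m$ and $v_{m+1}$ to share the same last expansion time $t_{j}$ before $t_i$: since $V^E$ is always a rooted subtree (Observation \ref{obs:expectedsubtree}), whenever the deeper vertex $v_m$ is an expansion vertex so is its ancestor $v_{m+1}$, giving $t_{j_{v_m}} \leq t_{j_{v_{m+1}}}$, while $v_m \in V^E_{j_{v_{m+1}}}$ gives the reverse inequality; so the "reconciling" issue you flag at the end is moot, but a worse problem remains. At that common time $t_j$ the investment stage of $v_m$ runs before that of $v_{m+1}$, and $v_m$ may spend its entire budget \emph{partially} on a single expensive vertex $w$ (say $c(w) \gg \theta_2 c(v_m) + \theta_2 c(v_{m+1})$, leaving $\ell_w > 0$), after which $v_{m+1}$'s loop selects the same earliest-deadline request and also sinks its entire budget into $w$. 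Then $I_{v_m,i} = I_{v_{m+1},i} = \{w\}$, and the only subset of $I_{v_{m+1},i}$ disjoint from $I_{v_m,i}$ is $\varnothing$; no set $J_{v_{m+1}}$ with $c(J_{v_{m+1}}) \geq \theta_2 c(v_{m+1})$ disjoint from the earlier investments exists. The lemma is still true here, but only because $c(I_{v_m,i} \cup I_{v_{m+1},i}) = c(w)$ dominates the \emph{sum} of the two budgets; the cost of the shared boundary vertex cannot be apportioned to either $I$-set individually, so any induction that demands per-vertex disjoint certificates of full budget cost is doomed. Your proposed repair (keeping only iterations whose pending request lies outside $\mathcal T_{v_m}$) also fails in this example, since every iteration of $v_{m+1}$'s loop targets a request inside $\mathcal T_{v_m}$ and yet consumes the whole budget.

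The paper's proof avoids this by never trying to disjointify within a common timestamp. It proves (Claim \ref{claim:simultaneous}) that $I_{v,i} \cap I_{v',i} \neq \varnothing$ forces $v$ and $v'$ to have the same last expansion time before $t_i$ (your Case 1 argument is essentially the contrapositive of this claim, and that part of your reasoning is sound), then partitions $\mathcal U$ into classes $U_1,\dots,U_m$ by last expansion time, so that unions of $I$-sets across distinct classes are disjoint, and finally accounts \emph{jointly} within each class: at a single timestamp each target vertex $w$ can absorb at most $c(w)$ of investment before being reset, so the total budget $\theta_2 c(U_k)$ spent by the class is at most $c(\bigcup_{v \in U_k} I_{v,i})$. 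You would need to replace your disjoint-subset inductive step with this kind of aggregate accounting; as written, the proposal both commits to an unachievable intermediate goal and leaves its central claim unproven.
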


\begin{proof}
    Observe that the lemma would be easy if all the $I_{v, i}$ were disjoint, because $c(I_{v, i}) \geq \theta_2 c(v)$ by Lemma \ref{lem:nodisproportionalbelow}. Therefore, we are most interested in the case where the $I_{v, i}$ overlap, motivating the following claim.

    \begin{claim} \label{claim:simultaneous}
        Suppose $v, v' \in \mathcal  U$ are such that $I_{v, i} \cap I_{v', i} \neq \varnothing$. As $v, v' \in A_i$, let $t_j$ and $t_{j'}$ be the most recent times before $t_i$ that $v$ and $v'$ were expansion vertices, respectively ($t_j$ and $t_{j'}$ are guaranteed to exist by Lemma \ref{lemma:expectedafterexpansion}). Then $t_j = t_{j'}$.
    \end{claim}

    \begin{proof}[Proof of claim.]
    Without loss of generality, assume that $v'$ is deeper than $v$, and because $v$ and $v'$ both lie on the same path in $\mathcal P$, this means that $v'$ is a descendant of $v$. Let $w \in I_{v, i} \cap I_{v', i}$ be arbitrary. $w$ must be a descendant of both $v$ and $v'$, by Observation \ref{obs:deeperinvestments}. Now, as $V^E$ always forms a subtree by Observation \ref{obs:expectedsubtree}, it is known that $t_{j} \geq t_{j'}$, as whenever $v'$ is anticipated, $v$ will also be anticipated. Thus it is sufficient to prove that $t_j > t_{j'}$ is impossible.

    Hence, seeking a contradiction, assume $t_j > t_{j'}$. This would imply that $v \in T_j^E$ but $v' \not \in T_j^E$. This requires that $v'$ must be a descendant of $v^\downarrow_{P(v), j}$. Thus $w$ is also a descendant of $v^\downarrow_{P(v), j}$, and thus it is only possible for $w \in I_{v, j}'=I_{v, i}$ if line \ref{line:cancelinvesting} triggers $I_{v, j}' = I_{v^\downarrow_{P(v), j}, j}'$. However, this would imply that during the expansion phase for $v$ at $t_i$, that $v^\downarrow_{P(v), j}$ would be added to $V^E_i$. However, as $v^\downarrow_{P(v), j}$ would then be at least as deep as $v \in \mathcal U$, this would contradict the definition of $\mathcal U$ in Lemma \ref{lem:descendants}, as $v^\downarrow_{P(v), j}$ would be recently disproportionate at time $t_i$. Thus $t_j = t_{j'}$, proving Claim \ref{claim:simultaneous}.
    \end{proof}

    Following this claim, we can partition $\mathcal U$ into disjoint sets $U_1, ..., U_m$ such that all vertices in a given $U_k$ were most recently expansion vertices before $t_i$ at the same time. Then, due to Claim \ref{claim:simultaneous}, if $v \in U_k$ and $v' \in U_{k'}$ with $k \neq k'$ then $I_{v, i} \cap I_{v', i} = \varnothing$. Finally, observe that $\theta_2 c(U_k) \leq c(\bigcup_{v \in U_k} I_{v, i})$, because $\theta_2 c(U_k)$ is the total budget assigned to vertices in $U_k$ when they were all expansion vertices at the latest time before $t_i$, and so $c(\bigcup_{v \in U_k} I_{v, i})$ can be no less than this budget because their investments occurred at the same timestamp in the algorithm. Putting this together, we obtain
    \begin{equation*}
        c(\bigcup_{v \in \mathcal U} I_{v, i}) = \sum_{k=1}^m c(\bigcup_{v \in U_k} I_{v, i}) \geq \sum_{k=1}^m \theta_2 c(U_k) =  \theta_2 \cdot c(\mathcal U).
    \end{equation*}
\end{proof}

The next lemma shows that for leaf nodes $P$ in $\mathcal H$, whenever there are anticipated vertices on $P$ there is also a recently disproportional vertex on $P$.

\begin{lemma} \label{lem:leafcase}
    Let $P$ be a leaf node in $\mathcal H$. Suppose $A_i \cap P$ is non-empty, and let $v^*$ be the deepest vertex in $A_i \cap P$. Then at time $t_i$, $v^*$ was recently disproportional. 
\end{lemma}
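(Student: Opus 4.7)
The plan is to proceed by contradiction: I would suppose $v^*$ is not recently disproportional at $t_i$ and produce a vertex strictly deeper than $v^*$ on $P$ which also lies in $A_i$, contradicting the choice of $v^*$. By Lemma \ref{lemma:expectedafterexpansion}, there is a latest time $t_j < t_i$ at which $v^* \in V^E$. Under the contradiction hypothesis $v^* \neq v^\downarrow_{P, j}$, so some vertex of $P$ strictly deeper than $v^*$ lies in $V^E_j$; I would take $w := v^\downarrow_{P, j}$. Observation \ref{obs:expectedsubtree} places the entire $v^*$-to-$w$ subpath of $P$ inside $V^E_j$, and since $v^* \notin V^E$ at any moment strictly between $t_j$ and $t_i$, we have $next_{v^*, j}' = next_{v^*, i} \leq t_i$. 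This yields a pending request $\rho$ at the end of $v^*$'s investment at $t_j$ with $v_\rho \in V(\cT_{v^*}) \setminus (V^E_j \cup V^I_j)$ and $d_\rho \leq t_i$. Using the crucial assumption that $P$ is a leaf of $\cH$, we have $V(\cT_{v^*}) \subseteq P$, and combined with the preceding observation on $V^E_j$ this forces $v_\rho$ strictly below $w$ on $P$, so $v_\rho \in V(\cT_w)$.

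Next, I would analyze the two investment stages at $t_j$ in the algorithm's bottom-up order. For $w$ itself, the if-block at line \ref{line:cancelinvesting} cannot fire (since $w = v^\downarrow_{P, j}$); at the start of $w$'s while loop, $V^E \cap V(\cT_w) = \{w\}$ and $V^I \cap V(\cT_w) = \varnothing$, so $\rho$ provides a valid candidate, and hence \textsc{Invest} is called at least once, giving $I_{w, j}' \neq \varnothing$. Then for $v^*$, any candidate $v'$ selected by its while loop must lie strictly deeper than $w$ on $P$, because every vertex on the $v^*$-to-$w$ subpath is already in $V^E$; consequently $v' \in V(\cT_w) = V(\cT_{v^\downarrow_{P(v^*), j}})$, which triggers the if-block and sets $I_{v^*, j}' \gets I_{w, j}'$, escaping the loop. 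Since $v^*$ is not an expansion vertex between $t_j$ and $t_i$, this value persists: $I_{v^*, i} = I_{w, j}' \neq \varnothing$.

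Finally, at $t_i$, since $v^* \in V^E_i$ is anticipated, the expansion stage adds to $V^E_i$ every vertex on every path from $v^*$ to a vertex of $I_{v^*, i}$. By Observation \ref{obs:deeperinvestments} these investments lie in $V(\cT_w) \setminus \{w\}$, so every such path passes through $w$, forcing $w \in V^E_i$. Moreover, $w$ is also not an expansion vertex at any moment strictly between $t_j$ and $t_i$ (that would require $v^* \in V^E$ at the same time by Observation \ref{obs:expectedsubtree}), so $next_{w, i} = next_{w, j}' \leq d_\rho \leq t_i$, giving $w \in A_i$. Because $w$ is strictly deeper than $v^*$ on $P$, this contradicts the definition of $v^*$.

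The main obstacle will be threading together the sequence of events inside $t_j$: I must verify that $\rho$ is still pending at the start of both $w$'s and $v^*$'s investment stages, that $w$ populates $I_{w, j}'$ before $v^*$'s if-block copies it over, and that $w$'s own while loop cannot escape prematurely via line \ref{line:cancelinvesting} (which is exactly the disproportional status of $w$). The remaining steps are routine applications of Observations \ref{obs:expectedsubtree} and \ref{obs:deeperinvestments} combined with the leaf-in-$\cH$ property $V(\cT_{v^*}) \subseteq P$; without this last containment, the chain of inferences from $\rho$ to $w \in A_i$ would break, which highlights why the lemma genuinely requires $P$ to be a leaf of the heavy path tree.
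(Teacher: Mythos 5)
Your proof is correct and follows essentially the same route as the paper's: both identify $w = v^\downarrow_{P,j}$ at the last expansion time $t_j$ of $v^*$, use the leaf-of-$\mathcal{H}$ property (so $V(\cT_{v^*}) \subseteq P$) to force line \ref{line:cancelinvesting} to copy the non-empty $I_{w,j}'$ into $I_{v^*,j}'$, and then conclude that $w$ re-enters $V^E_i$ during the expansion stage as an anticipated vertex deeper than $v^*$, a contradiction. The only differences are cosmetic (direct contradiction versus contrapositive), and your explicit chain $next_{w,i} = next_{w,j}' \leq d_\rho \leq t_i$ is, if anything, slightly more careful than the paper's asserted equality $next'_{v^\downarrow_{P,j},j} = next'_{v^*,j}$.
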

\begin{proof}
    We prove the contrapositive. Suppose instead that $v^* \in A_i \cap P$ was not recently disproportional. Let $t_j$ the the most recent time before $t_i$ that $v^*$ was an expansion vertex, using Lemma \ref{lemma:expectedafterexpansion}. As $v^*$ is not recently disproportional, we know $v_{P, j}^\downarrow$ is deeper than $v^*$. As $v^* \in A_i$, we know that $next'_{v^*, j} = next_{v^*, i} \neq \infty$, which implies that during the investment stage for $v^*$ at time $t_j$, that either its budget was expended, or if-block on line \ref{line:cancelinvesting} was entered. As $P$ is a leaf in $\mathcal{H}$, it must have been that the if-block on line \ref{line:cancelinvesting} was entered, because any $v'$ chosen by the algorithm during the investment phase for $v$ must be deeper than $v_{P, j}^\downarrow$. Thus we know $I_{v^*, i}=I_{v^*, j}' = I_{v_{P, j}^\downarrow, i}'$, and that these sets are non-empty as $next'_{v_{P, j}^\downarrow, j} = next'_{v^*, j} = next_{v^*, i} \neq \infty$. Therefore, $v_{P, j}^\downarrow$ must have been added to $V_i^E$ during the expansion stage for $v^*$ at time $t_i$. Additionally, as $next'_{v_{P, j}^\downarrow, j} = next'_{v^*, j} = next_{v^*, i} \leq t_i$ (where the inequality is due to $v^* \in A_i$), we know that $v_{P, j}^\downarrow \in A_i$ also. Therefore, $v^*$ is not the deepest vertex in $A_i \cap P$, as required.
\end{proof}

We now present the analogue of Lemma \ref{lem:coretounexpected}, modified for the heavy path decomposition's structure. The overall proof technique is broadly similar to that of \ref{lem:coretounexpected}. 

\begin{lemma} \label{lem:coretounexpected-h}
    For any $t_i$, we have
    \begin{equation*}
    c(V_i^E)  \leq \left(1+\frac{1}{\theta_1}\right)^{H+1} \left(1+\frac{1}{\theta_2}\right)^{H}c(\bar{A}_i)
\end{equation*}
\end{lemma}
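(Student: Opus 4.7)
The plan is to adapt the amortized accounting argument from the proof of Lemma \ref{lem:coretounexpected}, now descending the heavy path tree $\mathcal{H}$ path-by-path rather than descending $\mathcal{T}$ vertex-by-vertex. I would start by placing a coin of value $c(v)$ on each $v \in V_i^E$, so the total coin mass equals $c(V_i^E)$, and then process the paths of $\mathcal{P}$ in breadth-first order rooted at the top of $\mathcal{H}$: a path $P$ becomes ready to redistribute once all of its ancestor paths in $\mathcal{H}$ have been processed.

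For each path $P$, I would identify the deepest recently-disproportional vertex $v^*$ on $P \cap V_i^E$ (if one exists) and redistribute the coins sitting on anticipated expansion vertices of $P$ in two sequential sub-stages. In the first sub-stage, if $v^*$ exists, I invoke Lemma \ref{lem:disproportionalvertices} to push coins on anticipated vertices in $V_i^E \cap p(v^*)$ proportionally into $I_{v^*, i}$; each recipient's weight-multiplier grows by at most $\frac{1}{\theta_1}$ since $c(I_{v^*,i}) \geq \theta_1 c(p(v^*))$. In the second sub-stage, I apply Lemma \ref{lem:descendants} to the set $\mathcal{U}_2$ of anticipated vertices lying strictly below $v^*$ on $P$ (or all of $A_i \cap P$ if no $v^*$ exists), which satisfies the lemma's hypothesis precisely because $v^*$ is the deepest recently-disproportional vertex on $P$; this pushes coins on $\mathcal{U}_2$ proportionally into $\bigcup_{u \in \mathcal{U}_2} I_{u, i}$, with each recipient's weight-multiplier growing by at most $\frac{1}{\theta_2}$. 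For leaf paths of $\mathcal{H}$, Lemma \ref{lem:leafcase} guarantees that if $A_i \cap P$ is non-empty then $v^*$ exists and coincides with the deepest anticipated vertex on $P$, making the second sub-stage vacuous.

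I would then prove by induction on the depth $h$ of $P$ in $\mathcal{H}$ the invariant that, immediately before processing $P$, every vertex in $V_i^E \cap \mathcal{T}_P$ carries weight at most $(1+\frac{1}{\theta_1})^{h+1}(1+\frac{1}{\theta_2})^{h} c(v')$. Each path-processing multiplies the weight-multiplier of any vertex by at most $(1+\frac{1}{\theta_1})(1+\frac{1}{\theta_2})$, and the maximum depth in $\mathcal{H}$ is at most $H - 1$, so after all paths have been processed, coins remain only on unanticipated vertices and each such vertex carries weight at most $(1+\frac{1}{\theta_1})^{H+1}(1+\frac{1}{\theta_2})^{H} c(v')$. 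Summing over all unanticipated vertices then yields the claimed inequality.

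The main obstacle will be the bookkeeping of compounding factors across the two sub-stages of a single path: coins deposited by the first sub-stage onto anticipated vertices strictly below $v^*$ on $P$ become sources of the second sub-stage, producing the multiplicative factor $(1+\frac{1}{\theta_1})(1+\frac{1}{\theta_2})$ per path rather than additive. Ordering the sub-stages carefully (first before second), verifying that $\mathcal{U}_2$ satisfies the hypothesis of Lemma \ref{lem:descendants}, and handling corner cases (paths lacking a recently-disproportional vertex, leaf paths via Lemma \ref{lem:leafcase}, and vertices on $P$ that simultaneously lie in $I_{v^*, i}$ and in some $I_{u, i}$ for $u \in \mathcal{U}_2$) will be the key technical points that pin down the asymmetric exponents $H+1$ and $H$.
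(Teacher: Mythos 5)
Your proposal is correct and follows essentially the same route as the paper's proof: the same coin-placement accounting argument, the same path-by-path processing of $\mathcal{H}$ with a two-stage redistribution per path (Stage 1 via Lemma \ref{lem:disproportionalvertices} onto $I_{v^*,i}$ with factor $1/\theta_1$, Stage 2 via Lemma \ref{lem:descendants} with factor $1/\theta_2$, and Lemma \ref{lem:leafcase} making Stage 2 vacuous on leaf paths), and the same induction on depth in $\mathcal{H}$ yielding the asymmetric exponents $H+1$ and $H$. No gaps to report.
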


\begin{proof}
    Fix $t_i$. We begin by placing coins with value $c(v)$ on all expansion vertices at $t_i$, thus the total cost equals the left-hand side of the lemma. We will show that redistributing these costs leads to an expression upper-bounded by the right-hand side.

    We say we \emph{process} a path $P$ of $\mathcal{H}$ when we redistribute the costs of the anticipated expansion vertices on $P$. We outline this processing as follows. We call a path $P$ in $\mathcal{H}$ \emph{ready to redistribute} if all of the paths $P'$ which are ancestors of $P$ in $\mathcal{H}$ have been processed. 
    
    When a path $P$ is ready to redistribute, we redistribute the costs on $P$ in 2 stages, after which we deem a path processed. Let $v^*$ be the deepest recently disproportional vertex in $P$ at time $t_i$, if any such vertices exist. If it exists, then in \textbf{Stage 1} of this reallocation process, we reallocate the costs on all anticipated nodes on $p(v^*)$ to $v^*$ itself. Then we reallocate the cost on $v^*$ to $I_{v^*, i}$ (and $I_{v^*, i} \subseteq V_i^E$, as $v^* \in A_i$). The effect of this is that if all nodes $v$ on $p(v^*)$ have a cost of $\leq C c(v)$, then the total amount reallocated is $\leq C \cdot c(p(v^*))$ and this cost is spread uniformly onto $I_{v^*, i}$ with $c(I_{v^*, i}) \geq \theta_1 \cdot c(p(v^*))$ by Lemma \ref{lem:disproportionalvertices}. Thus each node $w$ below $v^*$ receives an additional cost of at most $\frac{C}{\theta_1} c(w)$.

    If $P$ is a leaf node in $\mathcal H$, then by Lemma \ref{lem:leafcase}, after the Stage 1 there will be no more costs on anticipated vertices in $P$. Otherwise if $P$ is not a leaf node in $\mathcal H$, we proceed with \textbf{Stage 2}. Consider $\mathcal U := A_i \cap \{v \in P(v^*) : v \text{ is a descendant of } v^*\}$ if $v^*$ existed, otherwise define $\mathcal U := A_i$. In either case, $\mathcal{U}$ satisfies the requirements of \ref{lem:descendants}, because $v^*$ was defined as the deepest recently disproportional vertex at time $t_i$. We reallocate the costs on $\mathcal U$ to $\bigcup_{v \in \mathcal U} I_{v, i}$, which by Lemma \ref{lem:descendants} has $c(\bigcup_{v \in \mathcal U} I_{v, i}) \geq \theta_2 \cdot c(\mathcal U)$. The effect of this is that if each vertex $v \in \mathcal{U}$ had a cost of $\leq C' c(w)$ prior to the Stage 2 reallocation, then each vertex $w \in \bigcup_{v \in \mathcal U} I_{v, i}$ receives an additional cost of at most $\frac{C'}{\theta_2}$. Crucially, after the Stage 1 and Stage 2 reallocations are performed, there will be no more costs on anticipated vertices in $P$, as $P \cap \bigcup_{v \in \mathcal U} I_{v, i} = \varnothing$. If $P \cap \bigcup_{v \in \mathcal U} I_{v, i} \neq \varnothing$, then this would imply the existence of a recently disproportionate vertex at time $t_i$, as only a disproportional vertex $v$ could have invested onto $P(v)$ directly.

    We are now ready to prove the main claim of this lemma.

    \begin{claim} \label{claim:heavyinduction}

        When a node $P$ of $\mathcal{H}$ at depth $n$ is ready to redistribute, the costs of all nodes which belong to paths in $\mathcal{H}_P$ (i.e: $P$ or descendants of $P$ in $\mathcal H$) is bounded above by $(1 + 1/\theta_1)^n(1+1/\theta_2)^n c(v)$.
    
    \end{claim}
    
    \begin{proof}[Proof of claim]
     We proceed by induction. The case of $n=0$ is trivial, so we move to $n \geq 1$. Consider a path $P'$, at a depth $n+1$ in $\mathcal{H}$. If $P'$ is ready to redistribute, then the parent of $P'$ in $\mathcal H$, say $P$, must have been processed already, and when it was processed, the cost on all nodes of the paths in $\mathcal{H}_P$ must have been $\leq (1+ 1/\theta_1)^n(1+1/\theta_2)^n c(v)$ by the inductive hypothesis. After the Stage 1 processing of $P$ therefore, all nodes $w$ deeper than the deepest disproportional node $v^*$, receive an extra cost of at most $\frac{(1+ 1/\theta_1)^n(1+1/\theta_2)^n}{\theta_1}c(v)$ (by our earlier observation that if all nodes $v$ on paths in $\mathcal{H}_{P}$ currently has a cost $\leq C\cdot c(v)$, then the added cost to each node below $v^*$ is $\frac{C}{\theta_1} c(w)$). Immediately before the Stage 2 reallocation therefore, the costs on nodes deeper than $v^*$ is bounded by 
    \[\left(1+\frac{1}{\theta_1}\right)^{n}\left(1+\frac{1}{\theta_2}\right)^n c(v) + \frac{1}{\theta_1}\left(1+\frac{1}{\theta_1}\right)^{n}\left(1+\frac{1}{\theta_2}\right)^n c(v) = \left(1+\frac{1}{\theta_1}\right)^{n+1}\left(1+\frac{1}{\theta_2}\right)^nc(v).\] 
    
    Then, in Stage 2 redistribution for $P$, the costs of $\leq(1+1/{\theta_1})^{n+1}(1+ 1/\theta_2)^n c(v)$ on anticipated nodes of $P$ below $v^*$ are reallocated off of $P$, adding an extra cost of $\leq\frac{1}{\theta_2}(1+1/{\theta_1})^{n+1}(1+ 1/\theta_2)^{n} c(v)$ to nodes within paths below $P$ in $\mathcal{H}$,. Thus, the expansion nodes below $P$ in $\mathcal{H}$ have a total cost of at most 
    \[\left(1+\frac{1}{\theta_1}\right)^{n+1}\left(1+\frac{1}{\theta_2}\right)^n c(v) + \frac{1}{\theta_2}\left(1+\frac{1}{\theta_1}\right)^{n+1}\left(1+\frac{1}{\theta_2}\right)^n c(v) = \left(1+\frac{1}{\theta_1}\right)^{n+1}\left(1+\frac{1}{\theta_2}\right)^{n+1} c(v).\]
    \end{proof}
    
    Thereby proving Claim \ref{claim:heavyinduction}. Finally, on the paths corresponding to leaf nodes of $\mathcal H$, there will be no Stage 2 of investments, as was explained above. After Stage 1 of the processing of the leaf node paths, we are done, as all costs on anticipated vertices at time $t_i$ would have been moved onto vertices which are unanticipated at time $t_i$. Any unanticipated vertex $v$ ends up with a cost of at most $(1+\frac{1}{\theta_1})^{H+1}(1+\frac{1}{\theta_2})^Hc(v)$ after the final Stage 1 reallocation step on the leaf nodes. This corresponds to the right-hand side of the inequality of the lemma.
\end{proof}

Combining inequality \eqref{lem:alltoexpansion-h} and Lemma \ref{lem:coretounexpected-h} implies Lemma \ref{lem:amortized-h}.

\section{Conclusion}

This paper developed a memory-based framework for online MLAP-D, which allowed for a direct analysis in the case of general trees, and improved on competitiveness results. To show the framework was flexible, we demonstrated that it could be modified to provide competitive ratios linear in caterpillar dimension, a measure which unifies several results for special classes of graph which had been considered previously for MLAP-D. Future research could extend the memory-based framework to the problem of MLAP with delay costs, which would likely tighten the upper bounds of competitive ratios for this problem. The framework could also be modified to work for other special tree classes. The most notable open problem in this area remains the wide asymptotic gap for the competitive ratio, where the best-known lower bound remains constant.

\newpage

\bibliographystyle{acm}
\bibliography{reference}

\newpage

\appendix

\section{An Example for Algorithm \ref{alg:eDalg}} \label{sec:app-ex}

Consider the following tree and requests, where the tree and vertex costs are given in advance. The depth is 3, so we set $\theta = 3$. All requests arrive at time $0$ except for $\rho_9$, which arrives at time $6$.

\begin{center}
    \begin{tikzpicture}
        \node[vertex] (a) at (0,5) {$r$};
        \node[vertex] (1) at (-4,4) {$v_a$};
        \node[vertex] (2) at (0,4) {$v_b$};
        \node[vertex] (3) at (4,4) {$v_c$};
        \node[vertex] (11) at (-6,3) {$v_d$};
        \node[vertex] (12) at (-4,3) {$v_e$};
        \node[vertex] (13) at (-2,3) {$v_f$};
        \node[vertex] (21) at (0,3) {$v_g$};
        \node[vertex] (111) at (-6,2) {$v_h$};
        \node[vertex] (121) at (-4.5,2) {$v_i$};
        \node[vertex] (122) at (-3.5,2) {$v_j$};
        \path (a) edge (1)
        (a) edge (2)
        (a) edge (3)
        (1) edge (11)
        (1) edge (12)
        (1) edge (13)
        (2) edge (21)
        (11) edge (111)
        (12) edge (121)
        (12) edge (122)
        ;
    \node [inner sep=0pt] (table) at (-2,0.5) {
         \begin{tabular}{c|C|C|C|C|C|C|C|C|C|C|C}
            vertex $v$ & $r$ & $v_a$ & $v_b$ & $v_c$ & $v_d$ & $v_e$ & $v_f$ & $v_g$ & $v_h$ & $v_i$ & $v_j$ \\ 
            \hline
            cost $c(v)$ & $1$ & $4$ & $1$ & $1$ & $1$ & $14$ & $7$ & $1$ & $1$ & $6$ & $60$
         \end{tabular}
   };
   \node [inner sep=0pt] (req) at (-2,-1.2) {
         \begin{tabular}{c|D|D|D|D|D|D|D|D|D}
            request $\rho$ & $\rho_1$ & $\rho_2$ & $\rho_3$ & $\rho_4$ & $\rho_5$ & $\rho_6$ & $\rho_7$ & $\rho_8$ & $\rho_9$ \\ 
            \hline
            vertex $v_\rho$ & $v_c$ & $v_b$ & $v_a$ & $v_e$ & $v_i$ & $v_j$ & $v_c$ & $v_g$ & $v_h$ \\
            \hline
            arrival $a_\rho$ & $0$ & $0$ & $0$ & $0$ & $0$ & $0$ & $0$ & $0$ & $6$ \\
            \hline
            deadline $d_\rho$ & $1$ & $2$ & $3$ & $4$ & $5$ & $8$ & $9$ & $10$ & $7$
         \end{tabular}
   };
    \end{tikzpicture}
\end{center}

Algorithm \ref{alg:eDalg} transmits the tree induced on $\{r,v_b,v_c\}$ and serves $\rho_1$, $\rho_2$, and $\rho_7$ at time $1$, transmits the tree induced on $\{r,v_a,v_b,v_g,v_e\}$ and serves $\rho_3$, $\rho_4$, and $\rho_8$ at time 3, transmits $r$-$v_a$-$v_e$-$v_i$ and serves $\rho_5$ at time 5, and transmits the tree
induced on $\{r,v_a,v_d,v_e,v_h,v_j\}$ and serves $\rho_6$ and $\rho_9$ at time 7. We describe the execution details.

\subsection*{Initialization ($t=0$)}

At time $t=0$, all information except $\rho_9$ is revealed. The following information is from the procedure \textsc{Initialization}.

\begin{center}
    \begin{tabular}{c|C|C|C|C|C|C|C|C|C|C|C}
            vertex $v$ & $r$ & $v_a$ & $v_b$ & $v_c$ & $v_d$ & $v_e$ & $v_f$ & $v_g$ & $v_h$ & $v_i$ & $v_j$ \\ 
            \hline
            $\ell_v$ & $1$ & $4$ & $1$ & $1$ & $1$ & $14$ & $7$ & $1$ & $1$ & $6$ & $60$ \\
            \hline
            $next_v$ & $\infty$ & $\infty$& $\infty$& $\infty$& $\infty$& $\infty$& $\infty$& $\infty$& $\infty$& $\infty$& $\infty$ \\
            \hline
            $I_v$ & $\varnothing$ & $\varnothing$ & $\varnothing$ & $\varnothing$ & $\varnothing$ & $\varnothing$ & $\varnothing$ & $\varnothing$ & $\varnothing$ & $\varnothing$ & $\varnothing$
         \end{tabular}
\end{center}

\subsection*{The First Transmission ($t=1$)}

At time $t=1$, request $\rho_1$ becomes critical, invoking \textsc{OnDeadline}. $V^E = \{r,v_c\}$. \textsc{OnDeadline} calls \textsc{ExpansionStage$(V^E,r)$}. In \textsc{ExpansionStage$(V^E,r)$}, since $t=1 \le \infty = I_r$, this directly leads to the function call \textsc{ExpansionStage$(V^E,v_c)$}. In \textsc{ExpansionStage$(V^E,v_c)$}, since $t=1 \le \infty = I_{v_c}$ and $v_c$ has no child, $V^E$ remains unchanged, and we return to \textsc{OnDeadline}.

$V^I$ is initialized to $\varnothing$ and \textsc{OnDeadline} calls \textsc{InvestmentStage$(V^I,V^E,r)$}, invoking the call \textsc{InvestmentStage$(V^I,V^E,v_c)$}. In \textsc{InvestmentStage$(V^I,V^E,v_c)$}, $v_c$ is assigned a budget $b_{v_c} = 3 c(v_c) = 3$. Since $c$ is a leaf in $\cT$, $V(\cT_{v_c}) \setminus (V^E \cup V^I) = \varnothing$, $next_c = \infty$, \textsc{InvestmentStage$(V^I,V^E,v_c)$} ends, and we return to \textsc{InvestmentStage$(V^I,V^E,r)$}.

\begin{center}
    \begin{tikzpicture}
        \node[vertex,red] (a) at (0,5) {$r$};
        \node[vertex] (1) at (-4,4) {$v_a$};
        \node[vertex] (2) at (0,4) {$v_b$};
        \node[vertex,red] (3) at (4,4) {$v_c$};
        \node[vertex] (11) at (-6,3) {$v_d$};
        \node[vertex] (12) at (-4,3) {$v_e$};
        \node[vertex] (13) at (-2,3) {$v_f$};
        \node[vertex] (21) at (0,3) {$v_g$};
        \node[vertex] (111) at (-6,2) {$v_h$};
        \node[vertex] (121) at (-4.5,2) {$v_i$};
        \node[vertex] (122) at (-3.5,2) {$v_j$};
        \path (a) edge (1)
        (a) edge (2)
        (a) edge [red] (3)
        (1) edge (11)
        (1) edge (12)
        (1) edge (13)
        (2) edge (21)
        (11) edge (111)
        (12) edge (121)
        (12) edge (122)
        ;
        \node at (5,4) {$b_{v_c} = 3$};
    \end{tikzpicture}
\end{center}

In \textsc{InvestmentStage$(V^I,V^E,r)$}, $r$ is assigned a budget $b_r = 3 c(r) = 3$. We have that $V(\cT_r) \setminus (V^E \cup V^I) = V(\cT) \setminus \{r,v_c\}$. We invest toward the earliest pending request in $V(\cT) \setminus \{r,v_c\}$, that is, $\rho_2$ located at $v_b$. This invokes \textsc{Invest}$(r,v_b,V^I)$ where $V^I = \varnothing$. We invest 1 unit in $v_b$, so the budget $b_r=3-1 = 2$. $I_r$ is updated to $\{v_b\}$. Since the remaining cost $\ell_{v_b}$ becomes 0, we renew it so that $\ell_{v_b} = 1$. This accounts for serving potential future requests passing $v_b$. Finally, $V^I=\{v_b\}$ and $\rho_2$ is served. This ends the call of \textsc{Invest}$(r,v_b,V^I)$.

\begin{center}
    \begin{tikzpicture}
        \node[vertex,red] (a) at (0,5) {$r$};
        \node[vertex] (1) at (-4,4) {$v_a$};
        \node[vertex,red] (2) at (0,4) {$v_b$};
        \node[vertex] (3) at (4,4) {$v_c$};
        \node[vertex] (11) at (-6,3) {$v_d$};
        \node[vertex] (12) at (-4,3) {$v_e$};
        \node[vertex] (13) at (-2,3) {$v_f$};
        \node[vertex] (21) at (0,3) {$v_g$};
        \node[vertex] (111) at (-6,2) {$v_h$};
        \node[vertex] (121) at (-4.5,2) {$v_i$};
        \node[vertex] (122) at (-3.5,2) {$v_j$};
        \path (a) edge (1)
        (a) edge [red] (2)
        (a) edge (3)
        (1) edge (11)
        (1) edge (12)
        (1) edge (13)
        (2) edge (21)
        (11) edge (111)
        (12) edge (121)
        (12) edge (122)
        ;
        \node at (1,5.1) {$b_r = 2$};
        \node at (1,4) {$\ell_{v_b} = 1$};
    \end{tikzpicture}
\end{center}

Due to the update of $V^I$, $V(\cT_r) \setminus (V^E \cup V^I)$ becomes $V(\cT) \setminus \{r,v_b,v_c\}$. The earliest pending request in $V(\cT) \setminus \{r,v_b,v_c\}$ is $\rho_3$ located at $v_a$. This invokes \textsc{Invest}$(r,v_a,V^I)$. The entire budget $b_r=2$ is spent on $v_a$, making $\ell_{v_a} = 4-2 = 2$ and $I_r = \{v_a,v_b\}$. $\rho_3$ is still pending, so $next_r = d_{\rho_3} = 3$. We note that $V^I$ remains $\{v_b\}$ since $\rho_3$ is not served as $\ell_{v_a} > 0$.

\begin{center}
    \begin{tikzpicture}
        \node[vertex,red] (a) at (0,5) {$r$};
        \node[vertex,red] (1) at (-4,4) {$v_a$};
        \node[vertex] (2) at (0,4) {$v_b$};
        \node[vertex] (3) at (4,4) {$v_c$};
        \node[vertex] (11) at (-6,3) {$v_d$};
        \node[vertex] (12) at (-4,3) {$v_e$};
        \node[vertex] (13) at (-2,3) {$v_f$};
        \node[vertex] (21) at (0,3) {$v_g$};
        \node[vertex] (111) at (-6,2) {$v_h$};
        \node[vertex] (121) at (-4.5,2) {$v_i$};
        \node[vertex] (122) at (-3.5,2) {$v_j$};
        \path (a) edge [red] (1)
        (a) edge (2)
        (a) edge (3)
        (1) edge (11)
        (1) edge (12)
        (1) edge (13)
        (2) edge (21)
        (11) edge (111)
        (12) edge (121)
        (12) edge (122)
        ;
        \node at (1,5.1) {$b_r = 0$};
        \node at (-5,4) {$\ell_{v_a} = 2$};
    \end{tikzpicture}
\end{center}

This ends the call of \textsc{InvestmentStage$(V^I,V^E,r)$} with $V^E = \{r,v_c\}$ and $V^I = \{v_b\}$. At $t=1$, the tree induced on $\{r,v_b,v_c\}$ is transmitted, $\rho_1$, $\rho_2$, and $\rho_7$ are served, and the information after \textsc{OnDeadline}s is updated as follows.

\begin{center}
    \begin{tabular}{c|c|C|C|C|C|C|C|C|C|C|C}
            vertex $v$ & $r$ & $v_a$ & $v_b$ & $v_c$ & $v_d$ & $v_e$ & $v_f$ & $v_g$ & $v_h$ & $v_i$ & $v_j$ \\ 
            \hline
            $\ell_v$ & $1$ & $2$ & $1$ & $1$ & $1$ & $14$ & $7$ & $1$ & $1$ & $6$ & $60$ \\
            \hline
            $next_v$ & $3$ & $\infty$& $\infty$& $\infty$& $\infty$& $\infty$& $\infty$& $\infty$& $\infty$& $\infty$& $\infty$ \\
            \hline
            $I_v$ & $\{v_a,v_b\}$ & $\varnothing$ & $\varnothing$ & $\varnothing$ & $\varnothing$ & $\varnothing$ & $\varnothing$ & $\varnothing$ & $\varnothing$ & $\varnothing$ & $\varnothing$
         \end{tabular}
\end{center}

\subsection*{The Second Transmission ($t=3$)}

At time $t=3$, request $\rho_3$ becomes critical which invokes \textsc{OnDeadline}, so $V^E = \{r,v_a\}$, $V^I = \varnothing$, and the call \textsc{ExpansionStage}$(V^E,r)$ is made. Note that $t = 3 = next_r$ and $v_b \in I_r$, we add $v_b$ to $V^E$ so $V^E = \{r,v_a,v_b\}$. The calls \textsc{ExpansionStage}$(V^E,v_a)$ and \textsc{ExpansionStage}$(V^E,v_b)$ are made. Since $v_a$ and $v_b$ are leaves in $V^E$, $V^E$ does not change after these functions are called. We return to \textsc{ExpansionStage}$(V^E,r)$ and it returns $V^E = \{r,v_a,v_b\}$ back to \textsc{OnDeadline}.

We have $V^I = \varnothing$ and \textsc{InvestmentStage}$(V^I,V^E,r)$ is called. The children of $r$ in $V^E$ are $v_a$ and $v_b$, so the calls \textsc{InvestmentStage}$(V^I,V^E,v_a)$ and \textsc{InvestmentStage}$(V^I,V^E,v_b)$ are made.

In \textsc{InvestmentStage}$(V^I,V^E,v_a)$, we have $I_{v_a} = \varnothing$, $b_{v_a} = 3c(v_a) = 12$, and $V(\cT_{v_a}) \setminus (V^E \cup V^I) = \{v_e,v_i,v_j\}$. The earliest pending request in $\{v_e,v_i,v_j\}$ is $\rho_4$ located at $v_e$. The call \textsc{Invest}$(v_a,v_e,V^I)$ is made. The budget $b_{v_a}$ is fully spent on $v_e$. We have $\ell_{v_e} = 14-12 = 2$ and $I_{v_a} = \{v_e\}$. This ends the call of \textsc{Invest}$(v_a,v_e,V^I)$. $\rho_4$ is still pending, so $next_{v_a} = d_{\rho_4} = 4$. \textsc{InvestmentStage}$(V^I,V^E,v_a)$ ends with $V^I = \varnothing$.

\begin{center}
    \begin{tikzpicture}
        \node[vertex,red] (a) at (0,5) {$r$};
        \node[vertex,red] (1) at (-4,4) {$v_a$};
        \node[vertex] (2) at (0,4) {$v_b$};
        \node[vertex] (3) at (4,4) {$v_c$};
        \node[vertex] (11) at (-6,3) {$v_d$};
        \node[vertex,red] (12) at (-4,3) {$v_e$};
        \node[vertex] (13) at (-2,3) {$v_f$};
        \node[vertex] (21) at (0,3) {$v_g$};
        \node[vertex] (111) at (-6,2) {$v_h$};
        \node[vertex] (121) at (-4.5,2) {$v_i$};
        \node[vertex] (122) at (-3.5,2) {$v_j$};
        \path (a) edge [red] (1)
        (a) edge (2)
        (a) edge (3)
        (1) edge (11)
        (1) edge [red] (12)
        (1) edge (13)
        (2) edge (21)
        (11) edge (111)
        (12) edge (121)
        (12) edge (122)
        ;
        \node at (-5,3) {$\ell_{v_e} = 2$};
        \node at (-5,4.1) {$b_{v_a} = 0$};
    \end{tikzpicture}
\end{center}

We proceed with \textsc{InvestmentStage}$(V^I,V^E,v_b)$. We have $I_{v_b} = \varnothing$, $b_{v_b} = 3 c(v_b) = 3$, and $V(\cT_{v_b}) \setminus (V^E \cup V^I) = \{v_g\}$. The earliest pending request in $\{v_g\}$ is $\rho_8$ located at $v_g$. The call \textsc{Invest}$(v_b,v_g,V^I)$ is made. We invest 1 unit in $v_g$, so the budget $b_{v_b} = 3-1 = 2$. $I_{v_b}$ is updated to $\{v_g\}$. Since the remaining cost $\ell_{v_g}$ becomes 0, we renew it so that $\ell_{v_g} = 1$. Finally, $V^I = \{v_g\}$ and $\rho_8$ is served. This ends the call of \textsc{Invest}$(v_b,v_g,V^I)$. There are no pending requests in $\{v_g\}$ so $next_{v_b} = \infty$. The call \textsc{InvestmentStage}$(V^I,V^E,v_b)$ ends with $V^I = \{v_g\}$.

\begin{center}
    \begin{tikzpicture}
        \node[vertex,red] (a) at (0,5) {$r$};
        \node[vertex] (1) at (-4,4) {$v_a$};
        \node[vertex,red] (2) at (0,4) {$v_b$};
        \node[vertex] (3) at (4,4) {$v_c$};
        \node[vertex] (11) at (-6,3) {$v_d$};
        \node[vertex] (12) at (-4,3) {$v_e$};
        \node[vertex] (13) at (-2,3) {$v_f$};
        \node[vertex,red] (21) at (0,3) {$v_g$};
        \node[vertex] (111) at (-6,2) {$v_h$};
        \node[vertex] (121) at (-4.5,2) {$v_i$};
        \node[vertex] (122) at (-3.5,2) {$v_j$};
        \path (a) edge (1)
        (a) edge [red] (2)
        (a) edge (3)
        (1) edge (11)
        (1) edge (12)
        (1) edge (13)
        (2) edge [red] (21)
        (11) edge (111)
        (12) edge (121)
        (12) edge (122)
        ;
        \node at (1,4) {$b_{v_b} = 2$};
        \node at (1,3) {$\ell_{v_g} = 1$};
    \end{tikzpicture}
\end{center}

We return to \textsc{InvestmentStage}$(r)$ and have $I_r = \varnothing$, $b_r = 3 c(r) = 3$, and $V(\cT_r) \setminus (V^E \cup V^I) = V(\cT) \setminus \{r,v_a,v_b,v_g\}$. The earliest pending request in $V(\cT) \setminus \{r,v_a,v_b,v_g\}$ is $\rho_4$ located at $v_e$. The first vertex in $V(\cT) \setminus \{r,v_a,v_b,v_g\}$ on the path $r$-$v_e$ is $v_e$. The call \textsc{Invest}$(r,v_e,V^I)$ is made. Recall that $\ell_{v_e} = 2$, so we invest 2 unit in $v_e$ and $\rho_4$ is served. The budget $b_{r} = 3-2 = 1$. Since the remaining cost $\ell_{v_e}$ becomes 0, we renew it so that $\ell_{v_e} = 14$. $I_{r}$ is updated to $\{v_e\}$. $V^I = \{v_g,v_e\}$ since $\rho_8$ and $\rho_4$ have been served.

\begin{center}
    \begin{tikzpicture}
        \node[vertex,red] (a) at (0,5) {$r$};
        \node[vertex,red] (1) at (-4,4) {$v_a$};
        \node[vertex] (2) at (0,4) {$v_b$};
        \node[vertex] (3) at (4,4) {$v_c$};
        \node[vertex] (11) at (-6,3) {$v_d$};
        \node[vertex,red] (12) at (-4,3) {$v_e$};
        \node[vertex] (13) at (-2,3) {$v_f$};
        \node[vertex] (21) at (0,3) {$v_g$};
        \node[vertex] (111) at (-6,2) {$v_h$};
        \node[vertex] (121) at (-4.5,2) {$v_i$};
        \node[vertex] (122) at (-3.5,2) {$v_j$};
        \path (a) edge [red] (1)
        (a) edge (2)
        (a) edge (3)
        (1) edge (11)
        (1) edge [red] (12)
        (1) edge (13)
        (2) edge (21)
        (11) edge (111)
        (12) edge (121)
        (12) edge (122)
        ;
        \node at (-1,5.1) {$b_{r} = 1$};
        \node at (-3,2.7) {$\ell_{v_e} = 14$};
    \end{tikzpicture}
\end{center}

Due to the update of $V^I$, $V(\cT_r) \setminus (V^E \cup V^I)$ becomes $V(\cT) \setminus \{r,v_a,v_b,v_g,v_e\}$. The earliest pending request in $V(\cT) \setminus \{r,v_a,v_b,v_g,v_e\}$ is $\rho_5$ located at $v_i$. This invokes \textsc{Invest}$(r,v_i,V^I)$. The entire budget $b_r=1$ is spent on $v_i$, making $\ell_{v_i} = 6-1 = 5$ and $I_r = \{v_e,v_i\}$. $\rho_5$ is still pending, so $next_r = d_{\rho_5} = 5$. We note that $V^I$ remains $\{v_e,v_g\}$ since $\rho_5$ is not served as $\ell_{v_i} > 0$.

\begin{center}
    \begin{tikzpicture}
        \node[vertex,red] (a) at (0,5) {$r$};
        \node[vertex,red] (1) at (-4,4) {$v_a$};
        \node[vertex] (2) at (0,4) {$v_b$};
        \node[vertex] (3) at (4,4) {$v_c$};
        \node[vertex] (11) at (-6,3) {$v_d$};
        \node[vertex,red] (12) at (-4,3) {$v_e$};
        \node[vertex] (13) at (-2,3) {$v_f$};
        \node[vertex] (21) at (0,3) {$v_g$};
        \node[vertex] (111) at (-6,2) {$v_h$};
        \node[vertex,red] (121) at (-4.5,2) {$v_i$};
        \node[vertex] (122) at (-3.5,2) {$v_j$};
        \path (a) edge [red] (1)
        (a) edge (2)
        (a) edge (3)
        (1) edge (11)
        (1) edge [red] (12)
        (1) edge (13)
        (2) edge (21)
        (11) edge (111)
        (12) edge [red] (121)
        (12) edge (122)
        ;
        \node at (-1,5.1) {$b_{r} = 0$};
        \node at (-5.2,2.4) {$\ell_{v_i} = 5$};
    \end{tikzpicture}
\end{center}

This ends the call of \textsc{InvestmentStage$(V^I,V^E,r)$} with $V^E = \{r,v_a,v_b\}$ and $V^I = \{v_e,v_g\}$. At $t=3$, the tree induced on $\{r,v_a,v_b,v_g,v_e\}$ is transmitted, $\rho_3$, $\rho_4$, and $\rho_8$ are served, and the information after \textsc{OnDeadline} is updated as follows.

\begin{center}
    \begin{tabular}{c|c|C|C|C|C|C|C|C|C|C|C}
            vertex $v$ & $r$ & $v_a$ & $v_b$ & $v_c$ & $v_d$ & $v_e$ & $v_f$ & $v_g$ & $v_h$ & $v_i$ & $v_j$ \\ 
            \hline
            $\ell_v$ & $1$ & $2$ & $1$ & $1$ & $1$ & $14$ & $7$ & $1$ & $1$ & $5$ & $60$ \\
            \hline
            $next_v$ & $5$ & $4$ & $\infty$& $\infty$& $\infty$& $\infty$& $\infty$& $\infty$& $\infty$& $\infty$& $\infty$ \\
            \hline
            $I_v$ & $\{v_e,v_i\}$ & $\{v_e\}$ & $\{v_g\}$ & $\varnothing$ & $\varnothing$ & $\varnothing$ & $\varnothing$ & $\varnothing$ & $\varnothing$ & $\varnothing$ & $\varnothing$
         \end{tabular}
\end{center}

\subsection*{The Third Transmission ($t=5$)}

At time $t=5$, request $\rho_5$ becomes critical which invokes \textsc{OnDeadline}, so $V^E = \{r,v_a,v_e,v_i\}$, $V^I = \varnothing$, and the call \textsc{ExpansionStage}$(V^E,r)$ is made. Note that $t = 5 = next_r$ and $I_r = \{v_e, v_i\}$, we add $v_e$ and $v_i$ to $V^E$ and $V^E$ remains $\{r,v_a,v_e,v_i\}$. The call \textsc{ExpansionStage}$(V^E,v_a)$ is made. Since $t = 5 > 4 = next_{v_a}$ and $I_{v_a} = \{v_e\}$, we add $v_e$ to $V^E$ and $V^E$ remains $\{r,v_a,v_e,v_i\}$. The call \textsc{ExpansionStage}$(V^E,v_e)$ is made. Since $t = 5 < \infty = next_{v_e}$, $I_{v_e}$ remains $\varnothing$. The call \textsc{ExpansionStage}$(V^E,v_i)$ is made. Since $v_i$ is a leaf in $\cT$, \textsc{ExpansionStage}$(V^E,v_i)$ immediately ends, and so do \textsc{ExpansionStage}$(V^E,v_e)$, \textsc{ExpansionStage}$(V^E,v_a)$, and \textsc{ExpansionStage}$(V^E,r)$. 

We return to \textsc{OnDeadline}, have $V^I = \varnothing$, and \textsc{InvestmentStage}$(V^I,V^E,r)$ is called. The child node of $r$ in $V^E$ is $v_a$, so the call \textsc{InvestmentStage}$(V^I,V^E,v_a)$ is made.

In \textsc{InvestmentStage}$(V^I,V^E,v_a)$, we have $I_{v_a} = \{v_e\}$, so \textsc{InvestmentStage}$(V^I,V^E,v_e)$ is called. In \textsc{InvestmentStage}$(V^I,V^E,v_e)$, we have $I_{v_e} = \varnothing$, $b_{v_e} = 3c(v_e) = 42$ and $V(\cT_{v_e}) \setminus (V^E \cup V^I) = \{v_j\}$. The earliest pending request in $\{v_j\}$ is $\rho_6$ located at $v_j$. The call \textsc{Invest}$(v_e,v_j,V^I)$ is made. $b_{v_e}$ is fully spent on $v_j$, so $\ell_{v_j} = 60 - 42 = 18$ and $I_{v_e} = \{v_j\}$. $\rho_6$ is still pending, so $next_{v_e} = d_{\rho_6} = 8$. This ends the call of \textsc{Invest}$(v_e,v_j,V^I)$ with $V^I = \varnothing$ since $\rho_6$ is not served. This ends the call \textsc{InvestmentStage}$(V^I,V^E,v_e)$.

\begin{center}
    \begin{tikzpicture}
        \node[vertex,red] (a) at (0,5) {$r$};
        \node[vertex,red] (1) at (-4,4) {$v_a$};
        \node[vertex] (2) at (0,4) {$v_b$};
        \node[vertex] (3) at (4,4) {$v_c$};
        \node[vertex] (11) at (-6,3) {$v_d$};
        \node[vertex,red] (12) at (-4,3) {$v_e$};
        \node[vertex] (13) at (-2,3) {$v_f$};
        \node[vertex] (21) at (0,3) {$v_g$};
        \node[vertex] (111) at (-6,2) {$v_h$};
        \node[vertex] (121) at (-4.5,2) {$v_i$};
        \node[vertex,red] (122) at (-3.5,2) {$v_j$};
        \path (a) edge [red] (1)
        (a) edge (2)
        (a) edge (3)
        (1) edge (11)
        (1) edge [red] (12)
        (1) edge (13)
        (2) edge (21)
        (11) edge (111)
        (12) edge (121)
        (12) edge [red] (122)
        ;
        \node at (-3,3) {$b_{v_e} = 0$};
        \node at (-2.4,2) {$\ell_{v_j} = 18$};
    \end{tikzpicture}
\end{center}

We return to \textsc{InvestmentStage}$(V^I,V^E,v_a)$ and have $I_{v_a} = \varnothing$, $b_{v_a} = 3 c(v_a) = 12$, and $V(\cT_{v_a}) \setminus (V^E \cup V^I) = \{v_d,v_f,v_h,v_j\}$. The earliest pending request in $\{v_d,v_f,v_h,v_j\}$ is $\rho_6$ located at $v_j$, so \textsc{Invest}$(v_a,v_j,V^I)$ is called. $b_{v_a}$ is fully spent on $v_j$, so $\ell_{v_j} = 18 - 12 = 6$ and $I_{v_a} = \{v_j\}$. $\rho_6$ is still pending, so $next_a = d_{\rho_6} = 8$. This ends the call of \textsc{InvestmentStage}$(V^I,V^E,v_a)$ with $V^I = \varnothing$ since $\rho_6$ is not served.

\begin{center}
    \begin{tikzpicture}
        \node[vertex,red] (a) at (0,5) {$r$};
        \node[vertex,red] (1) at (-4,4) {$v_a$};
        \node[vertex] (2) at (0,4) {$v_b$};
        \node[vertex] (3) at (4,4) {$v_c$};
        \node[vertex] (11) at (-6,3) {$v_d$};
        \node[vertex,red] (12) at (-4,3) {$v_e$};
        \node[vertex] (13) at (-2,3) {$v_f$};
        \node[vertex] (21) at (0,3) {$v_g$};
        \node[vertex] (111) at (-6,2) {$v_h$};
        \node[vertex] (121) at (-4.5,2) {$v_i$};
        \node[vertex,red] (122) at (-3.5,2) {$v_j$};
        \path (a) edge [red] (1)
        (a) edge (2)
        (a) edge (3)
        (1) edge (11)
        (1) edge [red] (12)
        (1) edge (13)
        (2) edge (21)
        (11) edge (111)
        (12) edge (121)
        (12) edge [red] (122)
        ;
        \node at (-2.5,2) {$\ell_{v_j} = 6$};
        \node at (-5,4.1) {$b_{v_a} = 0$};
    \end{tikzpicture}
\end{center}

We return to \textsc{InvestmentStage}$(V^I,V^E,r)$ and have $I_r = \varnothing$, $b_r = 3 c(r) = 3$, and $V(\cT) \setminus (V^E \cup V^I) = V(\cT) \setminus \{r,v_a,v_e,v_i\}$. The earliest pending request in $V(\cT) \setminus \{r,v_a,v_e,v_i\}$ is $\rho_6$ located at $v_j$, so \textsc{Invest}$(r,v_j,V^I)$ is called. $b_{r}$ is fully spent on $v_j$, so $\ell_{v_j} = 6 - 3 = 3$ and $I_{r} = \{v_j\}$. $\rho_6$ is still pending, so $next_r = d_{\rho_6} = 8$. This ends the call of \textsc{InvestmentStage}$(V^I,V^E,r)$ with $V^I = \varnothing$ since $\rho_6$ is not served.

\begin{center}
    \begin{tikzpicture}
        \node[vertex,red] (a) at (0,5) {$r$};
        \node[vertex,red] (1) at (-4,4) {$v_a$};
        \node[vertex] (2) at (0,4) {$v_b$};
        \node[vertex] (3) at (4,4) {$v_c$};
        \node[vertex] (11) at (-6,3) {$v_d$};
        \node[vertex,red] (12) at (-4,3) {$v_e$};
        \node[vertex] (13) at (-2,3) {$v_f$};
        \node[vertex] (21) at (0,3) {$v_g$};
        \node[vertex] (111) at (-6,2) {$v_h$};
        \node[vertex] (121) at (-4.5,2) {$v_i$};
        \node[vertex,red] (122) at (-3.5,2) {$v_j$};
        \path (a) edge [red] (1)
        (a) edge (2)
        (a) edge (3)
        (1) edge (11)
        (1) edge [red] (12)
        (1) edge (13)
        (2) edge (21)
        (11) edge (111)
        (12) edge (121)
        (12) edge [red] (122)
        ;
        \node at (-2.5,2) {$\ell_{v_j} = 3$};
        \node at (-1,5.1) {$b_r = 0$};
    \end{tikzpicture}
\end{center}

The transmission tree returned is the path $r$-$v_a$-$v_e$-$v_i$, which serves requests $\rho_5$. The information after \textsc{OnDeadline} is updated as follows.

\begin{center}
    \begin{tabular}{c|C|C|C|C|C|C|C|C|C|C|C}
            vertex $v$ & $r$ & $v_a$ & $v_b$ & $v_c$ & $v_d$ & $v_e$ & $v_f$ & $v_g$ & $v_h$ & $v_i$ & $v_j$ \\ 
            \hline
            $\ell_v$ & $1$ & $2$ & $1$ & $1$ & $1$ & $14$ & $7$ & $1$ & $1$ & $5$ & $3$ \\
            \hline
            $next_v$ & $8$ & $8$& $\infty$& $\infty$& $\infty$& $8$& $\infty$& $\infty$& $\infty$& $\infty$& $\infty$ \\
            \hline
            $I_v$ & $\{v_j\}$ & $\{v_j\}$ & $\varnothing$ & $\varnothing$ & $\varnothing$ & $\{v_j\}$ & $\varnothing$ & $\varnothing$ & $\varnothing$ & $\varnothing$ & $\varnothing$
         \end{tabular}
\end{center}

\subsection*{The Fourth Transmission ($t=7$)}
We recall that at time $t=6$, $\rho_9$ arrives and its information is revealed.

At time $t=7$, request $\rho_9$ becomes critical which invokes \textsc{OnDeadline}, so $V^E = \{r,v_a,v_d,v_h\}$, $V^I = \varnothing$, and the call \textsc{ExpansionStage}$(V^E,r)$ is made. Note that $t = 7 < 8 = next_r = next_{v_a}$. Without considering $I_r$, this invokes \textsc{ExpansionStage}$(V^E,v_a)$, \textsc{ExpansionStage}$(V^E,v_d)$, and \textsc{ExpansionStage}$(V^E,v_h)$. $v_h$ is a leaf node in $\cT$, so \textsc{ExpansionStage}$(V^E,v_h)$ ends immediately. \textsc{ExpansionStage}$(V^E,v_d)$, \textsc{ExpansionStage}$(V^E,v_a)$, and \textsc{ExpansionStage}$(V^E,r)$ also end immediately.

We return to \textsc{OnDeadline}, have $V^I = \varnothing$, and \textsc{InvestmentStage}$(V^I,V^E,r)$ is called. The child node of $r$ in $V^E$ is $v_a$, so the call \textsc{InvestmentStage}$(V^I,V^E,v_a)$ is made. Similarly, \textsc{InvestmentStage}$(V^I,V^E,v_a)$ calls \textsc{InvestmentStage}$(V^I,V^E,v_d)$ and \textsc{InvestmentStage}$(V^I,V^E,v_d)$ calls \textsc{InvestmentStage}$(V^I,V^E,v_h)$. \textsc{InvestmentStage}$(V^I,V^E,v_h)$ ends immediately and so does \textsc{InvestmentStage}$(V^I,V^E,v_d)$ as there are no pending requests below them.

In \textsc{InvestmentStage}$(V^I,V^E,v_a)$, we have $I_{v_a} = \varnothing$, $b_{v_a} = 3c(v_a) = 3$ and $V(\cT_{v_a}) \setminus (V^E \cup V^I) = \{v_e,v_f,v_i,v_j\}$. The earliest pending request in $\{v_e,v_f,v_i,v_j\}$ is $\rho_6$ located at $v_j$. The call \textsc{Invest}$(v_a,v_j,V^I)$ is made. $b_{v_a}$ is fully spent on $v_j$, so $\ell_{v_j} = 3 - 3 = 0$ and $\rho_6$ is served. Since the remaining cost $\ell_{v_j}$ becomes $0$, we renew it so that $\ell_{v_j} = 60$. $I_{v_a}$ is updated to $\{v_j\}$. This ends the call of \textsc{Invest}$(v_a,v_j,V^I)$ with $V^I = \{v_j\}$.

\begin{center}
    \begin{tikzpicture}
        \node[vertex,red] (a) at (0,5) {$r$};
        \node[vertex,red] (1) at (-4,4) {$v_a$};
        \node[vertex] (2) at (0,4) {$v_b$};
        \node[vertex] (3) at (4,4) {$v_c$};
        \node[vertex] (11) at (-6,3) {$v_d$};
        \node[vertex,red] (12) at (-4,3) {$v_e$};
        \node[vertex] (13) at (-2,3) {$v_f$};
        \node[vertex] (21) at (0,3) {$v_g$};
        \node[vertex] (111) at (-6,2) {$v_h$};
        \node[vertex] (121) at (-4.5,2) {$v_i$};
        \node[vertex,red] (122) at (-3.5,2) {$v_j$};
        \path (a) edge [red] (1)
        (a) edge (2)
        (a) edge (3)
        (1) edge (11)
        (1) edge [red] (12)
        (1) edge (13)
        (2) edge (21)
        (11) edge (111)
        (12) edge (121)
        (12) edge [red] (122)
        ;
        \node at (-2.4,2) {$\ell_{v_j} = 60$};
        \node at (-5,4.1) {$b_{v_a} = 0$};
    \end{tikzpicture}
\end{center}

There are no pending requests in $V(\cT_{v_a}) \setminus (V^E \cup V^I) = \{v_e,v_f,v_i\}$, so $next_{v_a} = \infty$. This ends the call \textsc{InvestmentStage}$(V^I,V^E,v_a)$ and we return to \textsc{InvestmentStage}$(V^I,V^E,r)$. There are no pending requests in $V(\cT)$, so $next_{r} = \infty$. This ends the call \textsc{InvestmentStage}$(V^I,V^E,r)$.The algorithm transmits the tree
induced on $\{r,v_a,v_d,v_e,v_h,v_j\}$ and serves $\rho_6$ and $\rho_9$. The information after \textsc{OnDeadline} is updated as follows.

\begin{center}
    \begin{tabular}{c|C|C|C|C|C|C|C|C|C|C|C}
            vertex $v$ & $r$ & $v_a$ & $v_b$ & $v_c$ & $v_d$ & $v_e$ & $v_f$ & $v_g$ & $v_h$ & $v_i$ & $v_j$ \\ 
            \hline
            $\ell_v$ & $1$ & $2$ & $1$ & $1$ & $1$ & $14$ & $7$ & $1$ & $1$ & $5$ & $60$ \\
            \hline
            $next_v$ & $\infty$ & $\infty$& $\infty$& $\infty$& $\infty$& $8$& $\infty$& $\infty$& $\infty$& $\infty$& $\infty$ \\
            \hline
            $I_v$ & $\varnothing$ & $\varnothing$ & $\varnothing$ & $\varnothing$ & $\varnothing$ & $\{v_j\}$ & $\varnothing$ & $\varnothing$ & $\varnothing$ & $\varnothing$ & $\varnothing$
         \end{tabular}
\end{center}

\section{Proof of Theorem \ref{thm:eD}} \label{app:pf-eD}

\thmeD*

\begin{proof}
Recall from Lemmas \ref{lem:amortized} and \ref{lem:optlowerbound}, we have that
\[\textsc{alg} \leq \left(1+\frac{1}{\theta}\right)^{D} (1+\theta)\textsc{opt}.\]

We now prove that the function $R(\theta) = (1+1/\theta)^D(1+\theta)$ is minimized at $\theta^* = D$. Because $\ln$ is a strictly increasing function it suffices to minimize \[\ln R(\theta) = D \ln\left(1+\frac{1}{\theta}\right) + \ln(1+\theta) = D (\ln(\theta+1) - \ln \theta) + \ln(1+\theta) = (D+1)\ln(1+\theta)-D\ln \theta.\] To find stationary points we solve \[ \frac{d}{d\theta} \ln R(\theta) = \frac{1+D}{1+\theta} - \frac{D}{\theta} = 0\]
    and so the only stationary point of $R$ will be at $\theta = D$. Observing also that $R(\theta) \rightarrow \infty$ as $\theta \rightarrow 0^+$, and also if $\theta \rightarrow \infty$ demonstrates that $\theta^*=D$ is a local minimum of $R$.

    Thus our analysis has proven we have an $R(D) = (1+\frac{1}{D})^D(D+1) \leq e(D+1)$ competitive algorithm for MLAP-D.
\end{proof}

\section{Proof of Theorem \ref{thm:4eH}} \label{app:pf-4eH}

\thmeH*

\begin{proof}
    Recall from Lemmas \ref{lem:amortized-h} and \ref{lem:optlowerbound}, for any $\theta_1, \theta_2 > 0$,
\begin{align*}
    \textsc{alg} 
    &\leq (1+\theta_1+\theta_2)\left(1+\frac{1}{\theta_1}\right)^{H+1}\left(1+\frac{1}{\theta_2}\right)^H \textsc{opt}.
\end{align*}

    We wish to show that $\theta_1=2H+1$ and $\theta_2=2H$ is the best solution for minimizing the competitive ratio. Let $R(\theta_1, \theta_2) = (1+\theta_1+\theta_2)(1+1/{\theta_1})^{H+1}(1+1/\theta_2)^H$. As $\ln$ is strictly increasing, we can choose to minimize $\ln R(\theta_1, \theta_2)$ instead. We have
    \begin{align*}
        \ln R(\theta_1, \theta_2) &= \ln(1+\theta_1+\theta_2) + (H+1)\ln(1+1/{\theta_1}) +H \ln(1+1/\theta_2),\\
        \frac{\partial \ln R(\theta_1, \theta_2)}{\partial {\theta_1}} &= \frac{1}{1+\theta_1+\theta_2} - \frac{H+1}{\theta_1 + \theta_1^2} = 0,\\
        \frac{\partial \ln R(\theta_1, \theta_2)}{\partial {\theta_2}} &= \frac{1}{1+\theta_1+\theta_2} - \frac{H}{\theta_2 + \theta_2^2} = 0.
    \end{align*}

    Solving for $H$ in both equations yields
    \begin{align*}
        H &= \frac{\theta_1 + \theta_1^2}{1+\theta_1+\theta_2}-1 = \frac{\theta_2 + \theta_2^2}{1+\theta_1+\theta_2}\\
        &\Rightarrow \theta_1 + \theta_1^2 - 1 - \theta_1 - \theta_2 = \theta_2 + \theta_2^2\\
        &\Rightarrow \theta_1^2= \theta_2^2+2\theta_2+1=(\theta_2 + 1)^2
    \end{align*}
    and $\theta_1 = \theta_2 + 1$ as $\theta_1, \theta_2 > 0$. Thus,
    \begin{equation*}
        H = \frac{\theta_1+\theta_1^2}{2\theta_1}-1 = \frac{\theta_1}{2}-\frac{1}{2}
        \Rightarrow \theta_1
        = 2H+1
    \end{equation*}
    and $\theta_2 = 2H$. Therefore, Algorithm \ref{alg:heavyalg} has a competitive ratio bounded above by $(4H+2)(1+\frac{1}{2H+1})^{H+1}(1+\frac{1}{2H})^H$. Finally, we prove that 
    \begin{align}
        \left(1+\frac{1}{2H+1}\right)^{H+1}\left(1+\frac{1}{2H}\right)^H \le e \label{ineq:e}
    \end{align}
for all integers $H > 0$, thereby proving that the competitive ratio is bounded by $(4H+2)e$.

Let $f(H):=R(2H+1,2H)$. We show that \eqref{ineq:e} holds when $H=1$ and $H \ge 2$.

When $H=1$, we have that
\[\left(\frac{4}{3}\right)^{2} \cdot \frac{3}{2} = \frac{8}{3} \le e.\]

When $H \ge 2$, we show that $f(H) \to e$ as $H \to \infty$ and $f(H)$ is increasing in $H$. When $H \to \infty$, we have that
\begin{align*}
f(H) &=  \left(1+\frac{1}{2H+1}\right)^{H+1/2} \left(\frac{2H+2}{2H+1}\right)^{1/2} \left(1+\frac{1}{2H}\right)^H \\
&= e^{1/2} \left(\frac{2H+2}{2H+1}\right)^{1/2} e^{1/2} \to e \text{ as } H \to \infty.
\end{align*}
To show that $f(H)$ is increasing in $H$, we show that $\ln f(H)$ is increasing in $H$.
\begin{align*}
    \ln f(H) &= (H+1) \ln \frac{2H+2}{2H+1} + H \ln \frac{2H+1}{2H} \\
    &= (H+1) \ln (2H+2) + (H - H - 1) \ln (2H+1) + H \ln (2H) \\
    &= (H+1) \ln (2H+2) - \ln (2H+1) - H \ln (2H).
\end{align*}
Taking the derivative, we wish to show that
\begin{align*}
    \frac{d}{d H} \ln f(H) &= \ln (2H+2) + \frac{H+1}{2H+2} \cdot 2 - \frac{2}{2H+1} - \ln (2H) - \frac{H}{2H} \cdot 2 \\
    &= \ln (2H+2) - \ln (2H) - \frac{2}{2H+1} \\
    &= \ln \left(1+\frac{1}{H}\right) - \frac{2}{2H+1} > 0
\end{align*}
when $H \ge 2$. By taking the Maclaurin series of $\ln(1+1/H)$ up to the fourth order, we have that
\begin{align*}
    \ln \left(1+\frac{1}{H}\right) &\ge \frac{1}{H} - \frac{1}{2H^2} + \frac{1}{3H^3} - \frac{1}{4H^4}\\
\end{align*}
so it is sufficient to show that
\[\frac{1}{H} - \frac{1}{2H^2} + \frac{1}{3H^3} - \frac{1}{4H^4} = \frac{12H^3 - 6H^2 + 4H -3}{12H^4} \ge \frac{2}{2H+1}.\]
This holds because
\begin{align*}
    &\frac{12H^3 - 6H^2 + 4H -3}{12H^4} \ge \frac{2}{2H+1} \\
    \iff & (2H+1) \left(12H^3 - 6H^2 + 4H -3\right) \ge 24H^4 \\
    \iff & 24H^4 + 2H^2 - 2H -3 \ge 24 H^4 \\
    \iff & 2H^2 - 2H -3 \ge 0
\end{align*}
whenever $H \ge 2$.
\end{proof}

\end{document}